\definecolor{darkred}{rgb}{0.8,0.1,0.1}
\newcommand{\supp}{{\rm supp}}
\theoremstyle{plain}
\newtheorem{theo}{Theorem}[section]
\newtheorem{lem}[theo]{Lemma}
\newtheorem{propo}[theo]{Proposition}
\theoremstyle{definition}
\newtheorem{defi}[theo]{Definition}
\newtheorem{rem}[theo]{Remark}
\numberwithin{equation}{section}
\newcommand\hp[1]{\hphantom{#1}}
\def\bR{\mathbb{R}}
\def\bS{\mathbb{S}}
\def\bC{\mathbb{C}}
\def\mS{\mathcal{S}}
\def\div{\mathrm{div}}
\def\tr{\mathrm{tr}}
\def\Riem{\mathrm{Riem}}
\def\Ric{\mathrm{Ric}}
\def\d{\mathrm{d}}
\def\C{\mathrm{C}^\infty}
\def\Cc{\C_c}
\def\Ctc{\C_{tc}}
\def\f{\Omega}
\def\fc{\f_c}
\def\fsc{\f_{sc}}
\def\ftc{\f_{tc}}
\def\fcd{\f_{c\,\d}}
\def\h{\mathrm{H}}
\def\hc{\h_c}
\def\htc{\h_{tc}}
\def\s{\Gamma}
\def\sc{\s_c}
\def\ssc{\s_{sc}}
\def\stc{\s_{tc}}
\def\la{\langle}
\def\ra{\rangle}
\def\ker{\mathrm{Ker}}
\def\kersc{\ker_{sc}}
\def\kertc{\ker_{tc}}
\def\kerc{\ker_c}
\def\im{\mathrm{Im}}
\def\imsc{\im_{sc}}
\def\imtc{\im_{tc}}
\def\imc{\im_c}
\def\sol{\mathcal{S}}
\def\solsc{\sol_{sc}}
\def\gauge{\mathcal{G}}
\def\gaugesc{\gauge_{sc}}
\def\clobs{\mathcal{E}}
\def\obsk{\clobs^{kin}}
\def\obsi{\clobs^{inv}}
\def\radobs{\clobs^{rad}}
\def\obs{\mathcal{O}}
\title{%
Radiative observables for linearized gravity on asymptotically flat spacetimes and their boundary induced states
}
\author{%
Marco Benini$^{1,a}$, Claudio Dappiaggi$^{1,b}$ and Simone Murro$^{1,2,c}$\vspace{2mm}\\
{\small $^1$ Dipartimento di Fisica}\\ 
{\small Universit{\`a} di Pavia \& INFN, sezione di Pavia -- Via Bassi 6, I-27100 Pavia, Italia.}
\vspace{2mm}\\
{\small $^2$ Fakult\"at f\"ur Mathematik,}\\ 
{\small Universit\"at Regensburg, D-93040 Regensburg, Germany}
\vspace{4mm}
\\
 {\small  ~$^a$ marco.benini@pv.infn.it~,~$^b$ claudio.dappiaggi@unipv.it~,~$^c$ Simone.Murro@mathematik.uni-regensburg.de}
 }
\date{\today}
\begin{document}

\maketitle

\begin{abstract}
We discuss the quantization of linearized gravity on globally hyperbolic, asymptotically flat, vacuum spacetimes and the construction of distinguished states which are both of Hadamard form and invariant under the action of all bulk isometries. The procedure, we follow, consists of looking for a realization of the observables of the theory as a sub-algebra of an auxiliary, non-dynamical algebra constructed on future null infinity $\Im^+$. The applicability of this scheme is tantamount to proving that a solution of the equations of motion for linearized gravity can be extended smoothly to $\Im^+$. This has been claimed to be possible provided that a suitable gauge fixing condition, first written by Geroch and Xanthopoulos, is imposed. We review its definition critically showing that there exists a previously unnoticed obstruction in its implementation leading us to introducing the concept of radiative observables. These constitute an algebra for which a Hadamard state induced from null infinity and invariant under the action of all spacetime isometries exists and it is explicitly constructed.
\end{abstract}
\paragraph*{Keywords:} 
quantum field theory on curved spacetimes,
Hadamard states, linearized gravity
\paragraph*{MSC 2010:}81T20, 81T05


\section{\label{sec:intro}Introduction}
The quantization of general relativity is one of the most debated, hard and treacherous topics in theoretical and mathematical physics. Much has been written on this subject, several models have been proposed and yet no unanimous solution has been found. A general consensus has been apparently reached in the form that, whatever is the correct answer, certainly it has either to account for non-perturbative effects or to realize Einstein's theory as the low energy regime of a more fundamental theory. The standard perturbation scheme, which has been successfully applied to many other cases, for instance quantum electrodynamics, has been thoroughly investigated and it is known since the eighties to be doomed to failure \cite{'tHooft:1974bx, Goroff:1985sz, Goroff:1985th} on account of the non renormalizability which becomes manifest at two loops.

From then on, the linearization of Einstein's equations has been often seen only in combination with an analysis of classical phenomena such as gravitational waves \cite[Section 4.4]{Wald} and it is hardly advocated to play any deep foundational role at a quantum level. In the same years, when new and mostly non perturbative approaches to quantum gravity were developed, several leaps forwards have been obtained also in our understanding of how to formulate rigorously free and perturbative quantum field theories on arbitrary backgrounds. The leading approach nowadays in this endeavour is known as algebraic quantum field theory, a framework which emphasizes that quantization should best be seen as a two-step procedure: In the first one assigns to a classical dynamical system a suitable $*$-algebra of observables $\mathcal{A}$, which encodes the mutual relational properties, locality, dynamics and causality in particular. In the second one constructs a state, namely a positive linear functional $\omega:\mathcal{A}\to\mathbb{C}$, from which one recovers the standard probabilistic interpretation via the GNS theorem -- see \cite{Benini:2013fia, Hollands:2014eia} for two recent reviews on the algebraic approach. 

If one is interested in free field theories, the Klein-Gordon or the Dirac field for instance, the first step is fully understood, insofar the underlying background is assumed to be globally hyperbolic: The dynamics of the system can be studied in terms of an initial value problem and the collection of all smooth solutions, {\it i.e.} configurations, can be constructed. Basic observables are then introduced via smooth and compactly supported sections of a suitable vector bundle which is dually paired to the collection of all configurations. With this procedure one associates unambiguously to the whole system a $*$-algebra of fields. More problematic is instead the identification of a state since, in the plethora of all possible choices, not all can be deemed to be physically acceptable. While on Minkowski spacetime, such quandary is bypassed simply by exploiting the covariance of the theory under the action of the Poincar\'e group so to single out a unique vacuum state, on curved backgrounds the situation is more complicated. In this case it is widely accepted that a state can be called physical if and only if it fulfils the Hadamard property, a condition on the singular structure of the (truncated) two-point function \cite{Kay:1988mu,Radzikowski:1996pa, Radzikowski:1996ei}. Such condition guarantees, on the one hand, that the ultraviolet behaviour of all correlation functions mimics that of the Poincar\'e vacuum and, on the other hand, that the quantum fluctuations of all observables are bounded. The importance of the Hadamard condition has been recently vigorously reaffirmed in \cite{Fewster:2013lqa}. Furthermore Hadamard states are of capital relevance in treating on curved backgrounds interactions at a perturbative level since they allow for an extension of the algebra of fields to encompass also Wick polynomials \cite{Hollands:2001nf}.

Despite being structurally so important, Hadamard states are known to be rather elusive to find unless the spacetime is static. Their existence is nonetheless guaranteed in most of the cases thanks to a deformation argument \cite{Fulling:1981cf}. Although such result is certainly important, for practical applications a constructive scheme is needed. Unless one considers very specific backgrounds, such as the cosmological ones, few options are known. The first aims at working directly on the initial value surface for the underlying equation of motion and it relies heavily on techniques proper of pseudo-differential calculus \cite{Gerard:2012wb, Gerard:2014wb}. The second is the one we will pursue and it is based on a procedure also dubbed {\em bulk-to-boundary correspondence}. It is a scheme which identifies for the underlying spacetime a distinguished codimension 1, null submanifold, such as for example the conformal boundary. Thereon one defines an auxiliary $*$-algebra as well as a distinguished quasi-free state. Subsequently, via a suitable homomorphism, it suffices to realize the algebra of fields as a sub-algebra of the auxiliary counterpart so to induce on the former, via pull-back, another state which turns out to be both Hadamard and invariant under all isometries. Such scheme has been applied successfully to several cases, ranging from the rigorous definition of the Unruh state for the wave equation on Schwarzschild spacetime \cite{Dappiaggi:2009fx}, to the identification of distinguished local states of Hadamard form \cite{Dappiaggi:2010iq}, to the construction of Hadamard states on asymptotically flat spacetimes for the free scalar \cite{Dappiaggi:2005ci}, Dirac \cite{Dappiaggi:2010gt} and Maxwell field \cite{Dappiaggi:2011cj, Siemssen:2011gma} -- for the latter there exists also another approach recently discussed in \cite{Finster:2013fva}. Conceptually all these approaches follow the same path proposed for the first time in \cite{Hollands}.

It is noteworthy that, until recently, all the investigations mentioned above never involved linearized gravity. Despite being a linear theory, it was not considered in algebraic quantum field theory. On the one hand one of the key problem is local gauge invariance. It is often thought to be difficult to reconcile with the algebraic approach and even the free Maxwell field suffered almost the same fate for this reason. On the other hand, the general philosophy according to which almost no insight on quantum gravity can be earned from perturbation theory has discouraged working on this topic. Yet, it appears that recently this trend has inverted. Particularly relevant is \cite{Brunetti:2013maa} in which not only perturbative quantum gravity is constructed in a generally covariant way, but a new light has been shed on the concept of observables for quantum gravity. In this context linearized gravity is seen as an important step to extract information about the local geometry of the full non-linear phase space. Most interestingly many results in \cite{Brunetti:2013maa} are actually derived using Hadamard states for linearized gravity, thus prompting the question of their explicit construction. The relevance of this question is increased by the realization that the standard deformation arguments, used for all other free fields adapting the analysis of \cite{Fulling:1981cf}, cannot be applied in this context since one is constrained to working with spacetimes which are solutions of the Einstein vacuum equations. 

Goal of this paper is thus to investigate an alternative mean to construct Hadamard states for linearized gravity. More precisely we will consider asymptotically flat vacuum spacetimes and we will investigate whether the bulk-to-boundary procedure, used successfully for all other free fields, can be implemented also in this case -- see also \cite{Frolov:1979ab} for an earlier related paper. In our analysis we will benefit mostly from a very recent and thorough analysis  on the construction of the algebra of fields for linearized gravity on an arbitrary globally hyperbolic background and on the definition thereon of Hadamard states \cite{Fewster:2012bj, Hunt2012}. The outcome of our investigation is rather surprising and it turns out to have far reaching consequences also for classical general relativity. As a matter of fact, the key point in the whole procedure is the following: Every asymptotically flat vacuum spacetime $(M,g)$ can be realized via an embedding $\psi$ as an open subset of a second auxiliary spacetime $(\widetilde M,\widetilde g)$ so that, in between other properties, the metrics $g$ and $\widetilde g$ are related by a conformal rescaling $\Xi$ on $\psi(M)$ and the boundary of $\psi(M)$ contains a null hypersurface, known as future (or past) null infinity $\Im^+$ ($\Im^-$). This can also be seen as the locus $\Xi=0$. On account of this geometric construction, we will show that realizing the algebra of observables in the bulk as a sub-algebra of a counterpart living on $\Im^+$, entails in particular proving that every gauge equivalence class of spacelike compact solutions of linearized Einstein's equations contains a representative which, up to a suitable conformal rescaling via $\Xi$, obeys in $\widetilde M$ to a hyperbolic partial differential equation. While in the scenarios considered previously in the literature, this feature was a direct consequence of the conformal invariance of the dynamics in the physical spacetime, linearized gravity behaves differently. More precisely, upon a conformal transformation, the equation of motion acquires terms which are proportional to inverse powers of $\Xi$ leading, thus, to a blow-up of the coefficients on $\Im^+$. This pathology can be in principle avoided exploiting the invariance of the theory under the action of the linearization of the diffeomorphism group of $(M,g)$ and finding a suitable gauge fixing which cancels the unwanted contributions. A positive answer to this question was found in the late seventies in \cite{Geroch:1978ur} and it played a key role both in establishing the stability of the notion of asymptotic flatness under linear perturbation and in studying the symplectic space of general relativity \cite{Ashtekar:1982aa}. 

We shall review in detail this construction and we will unveil that, in general, there exists an obstruction in implementing the so-called Geroch-Xanthopoulos gauge which depends both on the geometry and on the topology of the underlying background. As a matter of fact, as an example, we will show that Minkowski spacetime does not suffer from this problem while axisymmetric vacuum, asymptotically flat spacetimes do. In combination with the obstruction discovered by Fewester and Hunt in \cite{Fewster:2012bj, Hunt2012} to implement the transverse-traceless gauge, our result suggests that linearized gravity might be very much akin to electromagnetism. As a matter of fact, as shown in a series of papers \cite{Benini:2013ita, Benini:2013tra, DL, SDH12}, also the latter is affected by topological obstructions although these manifest explicitly as a consequence of Gauss' law when one tried to realize Maxwell equations as a locally covariant field theory. Although we will not investigate this specific issue, it might happen that also linearized gravity suffers of the same problem. 

As far as the construction of Hadamard states is concerned, our result suggests the introduction of the concept of radiative observables to indicate those which admit a counterpart on null infinity. In this way we identify a, not necessarily proper, sub-algebra of the full algebra of observables since, depending on the underlying spacetime, it can also coincide with the whole algebra of observables. Yet, for radiative observables there exists a bulk-to-boundary correspondence and we can thus identify via $\Im^+$ a distinguished state which is of Hadamard form and invariant under all bulk isometries.

The paper is organized as follows: In Section \ref{secLinGrav} we discuss linearized gravity on an arbitrary globally hyperbolic spacetime $(M,g)$ which solves the vacuum Einstein's equations. In particular we construct the space of gauge equivalence classes of solutions via the de Donder gauge fixing and we remark on the obstructions related to the transverse-traceless gauge. In Section \ref{subclobs} we define the classical observables as suitable equivalence classes of compactly supported, smooth symmetric $(2,0)$ tensors of vanishing divergence, endowing this space with a presymplectic structure. Section \ref{sec3} deals with the bulk-to-boundary correspondence. In Section \ref{sec3.1}, first we review the notion of an asymptotically flat spacetime and we outline the main geometric and structural properties of $\Im^+$, the conformal boundary. Afterwards we construct on $\Im^+$ a suitable symplectic space of smooth $(0,2)$ tensors. In Section \ref{sec3.2a} we discuss the Geroch-Xanthopoulos gauge and we prove the existence of an obstruction in its implementation. Examples are given and the concept of a radiative observable is introduced. In Section \ref{sec3.2} we show that there exists a symplectomorphism between the bulk radiative observables and the symplectic space on $\Im^+$ constructed in Section \ref{sec3.1}. In Section \ref{sec3.3} we extend this correspondence at a level of $*$-algebras and we exploit it to construct a state for radiative observables which is both of Hadamard form and invariant under the action of all isometries of the bulk metric. In Section \ref{sec5} we draw our conclusions.


\section{\label{secLinGrav}Linearized Gravity}

Aim of this section is to introduce the linearized version of Einstein's equations and to discuss 
how to assign a gauge invariant algebra of observables. Here we shall mimic the point of view 
used in \cite{Benini:2013tra} to tackle the same problem for Abelian principal connections. 
Furthermore, in our discussion we shall make use of the analyses both of Fewster and Hunt 
\cite{Fewster:2012bj, Hunt2012} concerning linearized gravity 
and of Hack and Schenkel concerning linear gauge theories \cite{Hack:2012dm}.

\subsection{Classical dynamics}\label{subClasDyn}

As a starting point we introduce the basic geometric ingredients, we shall use in this paper. 
We call {\bf spacetime} the collection $(M,g, \mathfrak{o}, \mathfrak{t})$, 
where $M$ is a four dimensional, smooth, connected  manifold, 
endowed with a smooth Lorentzian metric $g$ of signature $(-,+,+,+)$, 
as well as with the choice of an orientation $\mathfrak{o}$ and a time orientation $\mathfrak{t}$. 
We require additionally both that the metric solves vacuum Einstein's equations, {\em i.e.} $\Ric(g)=0$, 
and that $(M,g)$ is globally hyperbolic. In other words, there exists an achronal closed subset 
$\Sigma\subset M$ whose domain of dependence coincides with $M$ itself. 
This class of backgrounds is distinguished since it allows to discuss the dynamics 
of all standard free fields in terms of an initial value problem \cite{Bar:2007zz}.

We recall, moreover, that a subset $\Omega$ of a globally hyperbolic spacetime $(M,g)$ is said to be 
{\em future compact} ({\em resp. past compact}) if there exists a Cauchy surface $\Sigma\subset M$ 
such that $\Omega\subset J^-(\Sigma)$ ({\em resp.} $J^+(\Sigma)$), 
where $J^\pm$ stands for the causal future/past. 
At the same time $\Omega$ is called {\bf timelike compact} if it is both future and past compact, whereas we call it {\bf spacelike compact} if the support of the intersection of $\Omega$ with any Cauchy surface is either empty or compact. In this paper we shall employ the subscripts {\em tc} and {\em sc} to indicate maps whose support is timelike and spacelike compact respectively.

For later convenience, we will denote the bundle of symmetric tensors of order $n$ 
built out of a vector bundle $V$ with $S^nV$. We shall also need to work with  $\f^k(M)$ ($\fc^k(M)$) the space of smooth (and compactly supported) $k$-differential forms on which we define the exterior derivative $\d:\f^k(M)\to\f^{k+1}(M)$ and the codifferential $\delta:\f^k(M)\to\f^{k-1}(M)$ where $\delta\doteq (-1)^k*^{-1}\d*$, $*$ being the metric induced Hodge dual operator. 
Notice that, on Ricci flat backgrounds, the so-called Laplace-de Rham wave operator $\d\delta+\delta\d$ coincides with the metric induced wave operator $\Box$ if $k=0,1$. We shall indicate with $\f^k_\d(M)$ and $\f^k_\delta(M)$ the spaces of smooth forms which are respectively closed, {\it i.e.} $\d\omega=0$ and coclosed, {\it i.e.} $\delta\omega=0$ with $\omega\in\f^k(M)$. The same definitions are taken for those spaces of forms in which a suitable restriction on the support is assumed, {\it e.g.} compact, spacelike compact or timelike compact. We will also consider $\h^k(M)$ ($\hc^k(M)$), the $k$-th (compactly supported) de Rham cohomology group on $M$.

In order to discuss linearized gravity, we shall follow the scheme outlined in \cite{Stewart:1974uz} 
and recently analyzed from the point of view of algebraic quantum field theory in \cite{Fewster:2012bj, Hunt2012}. 
The dynamical variable is a smooth symmetric tensor field of type $(0,2)$, 
that is $h\in\s(S^2T^\ast M)$ 
which fulfils the so-called {\em linearized Einstein's equations}:
\begin{equation}\label{eqLinGrav}
-\frac{1}{2}g_{ab}\left(\nabla^c\nabla^d h_{cd}-\Box h\right)-
\Box\frac{h_{ab}}{2}-\frac{1}{2}\nabla_a\nabla_b h+\nabla^c\nabla_{(a}h_{b)c}=0,
\end{equation}
where $\nabla$ stands for the Levi-Civita connection for the metric $g$, 
$\Box\doteq g^{ab}\nabla_a\nabla_b$ and $h\doteq g^{ab}h_{ab}$. All indices are raised with respect to $g$. 
We also employ the standard symmetrization notation 
according to which $\nabla_{(a}h_{b)c}=\frac{1}{2}\left(\nabla_a h_{bc}+\nabla_b h_{ac}\right)$.

As one can imagine already by looking at \eqref{eqLinGrav}, it is advisable to employ a notation where indices are not spelled out explicitly, so to avoid unreadable formulas. We will try to adhere to this point of view as much as possible, although from time to time we will be forced to restore indices to make certain concepts more clear to a reader.

\begin{rem}
Eq. \eqref{eqLinGrav} is slightly simpler than in other papers. 
As a matter of fact it would suffice to require that $(M,g)$ were an Einstein manifold. 
This perspective is assumed for instance in \cite{Fewster:2012bj, Hunt2012}, 
namely the metric $g$ fulfils the cosmological vacuum Einstein's equation $G_{ab}+\Lambda g_{ab}=0$. 
In this setting additional terms, which are proportional to $\Lambda$, appear in \eqref{eqLinGrav}. 
In our case we are interested in constructing Hadamard states for asymptotically flat spacetimes, 
all fulfilling the vacuum Einstein's equations. For this reason we set from the very beginning $\Lambda=0$.
\end{rem}

\noindent Let us rewrite \eqref{eqLinGrav} in a more compact form. To start, we introduce four relevant operators:
\begin{itemize}
\item The {\em trace} $\tr:\s(S^2T^\ast M)\to\C(M)$ with respect to the background metric $g$ defined by
$$\tr h=g^{ab}h_{ab},$$
\item The {\em trace reversal} $I:\s(S^2T^\ast M)\to\s(S^2T^\ast M)$, which in components reads
$$(Ih)_{ab}\doteq h_{ab}-\frac{1}{2}g_{ab}h^c_{\hp{a}c},$$
\item The symmetrized covariant derivative, also known as the {\em Killing operator}, namely $\nabla_S:\s(S^nT^\ast M)\to\s(S^{n+1}T^\ast M)$ defined out of the Levi-Civita connection as 
\begin{equation*}
\left(\nabla_S H\right)_{i_0\dots i_n}\doteq \nabla_{(i_0}H_{i_1\dots i_n)},
\end{equation*}
where the brackets stand for the normalized symmetrization of the indices. 
\item The {\em divergence operator} $\div:\s(S^{n+1}T^\ast M)\to\s(S^nT^\ast M)$ such that
\begin{equation*}
(\div H)_{i_0\dots i_n}\doteq g^{ab}\nabla_a H_{bi_1\dots i_n}.
\end{equation*}
\end{itemize} 
Similar definitions are taken for sections of $TM$ and of its symmetric tensor powers. Let us consider the standard pairing between sections of $S^nT^\ast M$ and $S^nT M$: 
\begin{equation}\label{eqPairing}
(\cdot,\cdot):\s(S^nT^\ast M)\check{\times}\s(S^nTM)\to\bR,\quad(H,\Xi)=\int\limits_M\la H,\Xi\ra\,\mu_g,
\end{equation}
where $\la\cdot,\cdot\ra$ is the dual pairing between $S^nT^\ast M$ and $S^nTM$ 
and $\check{\times}$ denotes the subset of the Cartesian product whose elements are pairs with compact overlapping support. Accordingly $\nabla_S$ and $-\div$ are dual to each other, namely for each pair $(H,A)\in \s(S^{n+1}T^\ast M)\check{\times}\s(S^nTM)$:
\begin{equation}\label{eqEradDiv}
(H,\nabla_S A)=(-\div H,A).
\end{equation}
Since the background $(M,g)$ is Ricci flat per assumption, the Riemann tensor $R_{abcd}$ is the only non vanishing geometric quantity which plays a distinguished role. We shall employ it to define an operation on symmetric $(0,2)$-tensors as follows:
\begin{equation*}
\Riem:\s(S^2T^\ast M)\to\s(S^2T^\ast M),\quad \Riem(h)_{ab}=R_{a\hp{cd}b}^{\hp{a}cd} h_{cd}.
\end{equation*}
The counterpart for $(2,0)$-tensors follows suit. Although it is almost unanimously accepted that indices of tensors are raised and lowered via the metric, it is useful to introduce explicitly the musical isomorphisms which indicate such operations. We shall employ them in those formulas where their action is better spelled out in order to avoid possible confusion:
\begin{align*}
\cdot^\sharp: T^\ast M^{\otimes n}\to TM^{\otimes n},
& \quad(H^\sharp)^{i_1\dots i_n}= g^{i_1j_1}\cdots g^{i_nj_n}H_{j_1\dots j_n}\\
\cdot^\flat: TM^{\otimes n}\to T^\ast M^{\otimes n},
& \quad(A^\flat)_{i_1\dots i_n}=g_{i_1j_1}\cdots g_{i_nj_n}A^{j_1\dots j_n}
\end{align*}

Up to an irrelevant factor $\frac{1}{2}$, we can rewrite \eqref{eqLinGrav} in terms of the operators introduced above:
\begin{equation}\label{eqLinGravOp}
K:\s(S^2T^\ast M)\to\s(S^2T^\ast M),\quad K=(-\Box+2\Riem+2I\nabla_S\div)I.
\end{equation}
By direct inspection one can infer that $K\circ I$ is not a normally hyperbolic operator due to the presence of the term $I\nabla_S\div$ -- for a detailed account of the definition and of the properties of these operators, refer to the monograph \cite{Bar:2007zz}. In other words, even on globally hyperbolic spacetimes, the solutions of \eqref{eqLinGravOp} cannot be constructed via a Cauchy problem with arbitrary initial data. On the contrary, suitable initial data must satisfy appropriate constraints, see {\it e.g.} \cite[Section 3.1]{Fewster:2012bj} and \cite[Section 4.6]{Hunt2012}. Especially from the point of view of quantizing the theory, this feature is rather problematic. Yet, as we will discuss in detail in the next section, we can circumvent this issue exploiting the underlying gauge invariance of \eqref{eqLinGrav}. Since it plays a distinguished role in our calculations, we make explicit how \eqref{eqLinGravOp} intertwines with the geometric operators introduced above. The proof of the following lemma is a matter of using the standard structural properties of the covariant derivative and it will be therefore omitted.
\begin{lem}\label{lemma_useful}
The following tensorial identities hold true:
\begin{itemize}
\item[1.] $(\Box-2\Riem)\nabla_S=\nabla_S\Box$ on $\s(TM)$;
\item[2.] $2\div I\nabla_S=\Box$ on $\s(TM)$;
\item[3.] $(\Box-2\Riem)I=I(\Box-2\Riem)$ on $\s(S^2TM)$;
\item[4.] $\tr(\Box-2\Riem)=\Box\tr$ on $\s(S^2TM)$.
\end{itemize}
Similar identities hold true for $T^\ast M$ via musical isomorphisms. 
Moreover one can derive dual identities exploiting the pairing \eqref{eqPairing}. 
\end{lem}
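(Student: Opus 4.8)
The plan is to reduce everything to two structural facts about the Levi-Civita connection on a vacuum background: metric compatibility $\nabla g=0$, so that $\Box$ commutes with any contraction by $g$ (in particular with $\tr$ and with raising and lowering), and Ricci flatness $R_{ab}=0$, so that every contraction of the Riemann tensor on a single pair of indices vanishes. Throughout I would work with a one-form $\omega\in\s(T^\ast M)$ in place of a vector field, the statements on $\s(TM)$ and on $T^\ast M$ being interchangeable via the musical isomorphisms; the dual identities announced at the end then follow by transposition through the pairing \eqref{eqPairing}. Identities 3 and 4 are algebraic in nature and I would dispatch them first; identity 2 is a short divergence computation; identity 1, the genuine Weitzenb\"ock commutation, is where the real work lies.

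For identity 4 I would write $\tr(\Box h)=g^{ab}\nabla^c\nabla_c h_{ab}=\Box(g^{ab}h_{ab})=\Box\tr h$, pulling $g^{ab}$ through the two derivatives by $\nabla g=0$. It then remains to check $\tr\circ\Riem=0$: contracting $\Riem(h)_{ab}=R_{a}{}^{cd}{}_b h_{cd}$ with $g^{ab}$ produces $g^{ab}R_{a}{}^{cd}{}_b$, which by the symmetries of the Riemann tensor is a single-pair (Ricci) contraction and hence vanishes. Identity 3 follows the same pattern: since $I$ is built solely from $g$ and $\tr$, metric compatibility gives $\Box I=I\Box$; and the computation just performed, together with the analogous contraction of $\Riem$ against the trace part $g_{cd}\tr h$ of $Ih$ (again a Ricci term, hence zero), shows that $\Riem I=I\Riem=\Riem$ on a Ricci-flat background. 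Combining the two yields $(\Box-2\Riem)I=I(\Box-2\Riem)$.

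For identity 2 I would expand $(I\nabla_S\omega)_{ab}=\nabla_{(a}\omega_{b)}-\tfrac12 g_{ab}\,\div\omega$, take the divergence and collect
\begin{equation*}
2\,\div(I\nabla_S\omega)_b=\Box\omega_b+\nabla^a\nabla_b\omega_a-\nabla_b\nabla^a\omega_a .
\end{equation*}
The last two terms form the commutator $g^{ac}[\nabla_c,\nabla_b]\omega_a$, which is a single-pair contraction of the Riemann tensor against $\omega$, i.e. a Ricci term, and therefore vanishes; this leaves $2\,\div I\nabla_S\omega=\Box\omega$ exactly.

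The hard part is identity 1, for which I would commute $\Box=g^{cd}\nabla_c\nabla_d$ past $\nabla_a$ acting on $\omega_b$ by applying the Ricci identity twice, one derivative at a time. Two kinds of correction term arise: terms carrying one covariant derivative of the Riemann tensor, and purely algebraic curvature terms linear in $\nabla\omega$. The former must cancel: after contraction with $g^{cd}$ they are governed by $\nabla^d R_{dab}{}^e$, which by the contracted second Bianchi identity is a combination of derivatives of the Ricci tensor and hence vanishes on a vacuum spacetime; likewise every commutator index that closes onto a single Riemann pair produces a Ricci factor and drops out. The surviving algebraic terms must then be reorganised, using the pair-symmetry and antisymmetry of $R_{abcd}$ and finally symmetrising over the free indices $a,b$, into precisely $2R_{a}{}^{cd}{}_b\,\nabla_{(c}\omega_{d)}=2\,\Riem(\nabla_S\omega)_{ab}$. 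I expect the principal obstacle to be bookkeeping rather than anything conceptual: keeping the index placements and the sign convention of the curvature commutator consistent so that the residual term assembles with the correct coefficient $2$ and the correct index pattern to match $\Riem\circ\nabla_S$. Once identity 1 is established on $\s(T^\ast M)$, its counterpart on $\s(TM)$ and the dual identities follow formally by the musical isomorphisms and by transposition through \eqref{eqPairing}.
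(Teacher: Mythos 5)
Your proposal is correct and is exactly the argument the paper has in mind: the paper omits the proof as ``a matter of using the standard structural properties of the covariant derivative,'' and your computation supplies precisely that -- metric compatibility and Ricci-flatness dispose of items 2--4, while item 1 follows from the double application of the Ricci identity, with the $\nabla\Riem$ terms killed by the contracted second Bianchi identity on a vacuum background and the two surviving cross terms combining (via the pair symmetry and antisymmetries of $R_{abcd}$, e.g. $R_{cabd}=R_{acdb}$) into $2\Riem(\nabla_S\omega)$ with the right coefficient. Your identification of identity 1 as the only step requiring genuine work, and of the Bianchi identity as the mechanism that makes the derivative-of-curvature terms drop out, matches the standard Lichnerowicz-type computation the authors are invoking.
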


\subsection{Gauge symmetry and gauge fixed dynamics}\label{subGaugeDyn}

The well-known diffeomorphism invariance of general relativity translates at the level of the linearized theory in the following gauge freedom:
Any solution $h$ of \eqref{eqLinGrav} is equivalent to $h+\nabla_S\chi$, for arbitrary $\chi\in\s(T^\ast M)$:
\begin{equation}\label{eqGauge}
h\sim h^\prime\in\s(S^2T^\ast M)\quad\iff\quad\exists\chi\in\s(T^\ast M):\,h^\prime=h+\nabla_S\chi.
\end{equation}
Notice that, consistently with gauge invariance, for all $\chi\in\s(T^\ast M)$, $K\nabla_S\chi=0$ as it can be readily verified via Lemma \ref{lemma_useful}. Following the same line of reasoning as for other linear gauge theories \cite{Hack:2012dm}, we can turn eq. \eqref{eqLinGrav} to an hyperbolic one by a suitable gauge fixing. For linearized gravity the first choice is usually the so-called {\bf de Donder gauge}, which plays the same role of the Lorenz gauge in electromagnetism. Although this is an overkilled topic, we summarize the main results in the following proposition \cite{Fewster:2012bj, Hunt2012}:

\begin{propo}\label{prpDDEauge}
Let $\sol/\gauge$ be the space of gauge equivalence classes $[h]$ of solutions of \eqref{eqLinGrav}, 
that is $\sol\doteq\{h\in\s(S^2T^\ast M)\;|\;Kh=0\}$, where $K$ has the form \eqref{eqLinGravOp}, while 
$\gauge\doteq\{h\in\s(S^2T^\ast M)\;|\;h=\nabla_S\chi,\;\chi\in\s(T^*M)\}$. 
Then, for every $[h]\in\sol$, there exists a representative $\widetilde{h}\in[h]$ satisfying
\begin{subequations}\label{eqLinGravEF}
\begin{align}
P\widetilde{h}=(\Box-2\Riem)I\widetilde{h} & = 0,\label{eqLinGravDD}\\
\div I\widetilde{h} & = 0.\label{eqDDGauge}
\end{align}
\end{subequations}
\end{propo}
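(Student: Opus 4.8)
The plan is to imitate the Lorenz-gauge argument for Abelian gauge theories in the Hack--Schenkel framework \cite{Hack:2012dm}: starting from an \emph{arbitrary} representative $h\in\sol$ of the class $[h]$, I would add a pure-gauge term $\nabla_S\chi$ chosen so that the de Donder condition \eqref{eqDDGauge} is satisfied, and then read off \eqref{eqLinGravDD} purely algebraically from the equation of motion and the explicit form \eqref{eqLinGravOp} of $K$.

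First I would set up the gauge-fixing equation. For a candidate representative $\widetilde h=h+\nabla_S\chi$ with $\chi\in\s(T^\ast M)$, the de Donder condition reads $\div I\widetilde h=\div Ih+\div I\nabla_S\chi=0$. The second identity of Lemma \ref{lemma_useful}, transported to $\s(T^\ast M)$ through the musical isomorphisms, gives $\div I\nabla_S=\tfrac{1}{2}\Box$, so the condition is equivalent to the inhomogeneous wave equation
\begin{equation*}
\Box\chi=-2\,\div Ih .
\end{equation*}
Here $\Box$ is the connection d'Alembertian on one-forms, whose principal symbol is the metric, hence it is normally hyperbolic. Since $\div Ih$ is a smooth one-form and $(M,g)$ is globally hyperbolic, the equation admits a smooth global solution $\chi$. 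With such a $\chi$ the tensor $\widetilde h=h+\nabla_S\chi$ obeys \eqref{eqDDGauge} by construction and lies in $[h]$ by the definition of $\gauge$.

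Second I would derive \eqref{eqLinGravDD}. Since $K\nabla_S\chi=0$ for every $\chi\in\s(T^\ast M)$ and $Kh=0$, one has $K\widetilde h=0$. Inserting $K=(-\Box+2\Riem+2I\nabla_S\div)I$ from \eqref{eqLinGravOp} and using that the gauge condition $\div I\widetilde h=0$ annihilates the last summand, what remains is $(-\Box+2\Riem)I\widetilde h=0$, i.e.\ precisely $(\Box-2\Riem)I\widetilde h=0$. Thus $\widetilde h$ satisfies both \eqref{eqLinGravDD} and \eqref{eqDDGauge}, completing the argument.

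The only genuinely analytic step, and the point I expect to require care, is the solvability of $\Box\chi=-2\,\div Ih$ for a source that is smooth but \emph{not} compactly supported. The retarded and advanced Green operators apply verbatim only to sources of restricted (e.g.\ timelike or spacelike compact) support, so one cannot simply write $\chi=G^\pm(-2\,\div Ih)$. Instead one must invoke the surjectivity of the normally hyperbolic operator $\Box$ as a map $\C(M)\to\C(M)$, which on a globally hyperbolic background is a consequence of the well-posedness of the Cauchy problem together with finite propagation speed (solve with, say, vanishing data on a Cauchy surface and propagate both to the future and to the past), cf.\ \cite{Bar:2007zz}. Everything else reduces to bookkeeping with the identities of Lemma \ref{lemma_useful}.
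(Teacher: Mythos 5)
Your proof is correct and follows essentially the same route as the paper's: solve $\Box\chi=-2\,\div Ih$ to enforce the de Donder condition \eqref{eqDDGauge}, then obtain \eqref{eqLinGravDD} algebraically from $K\widetilde{h}=0$ and the explicit form \eqref{eqLinGravOp} of $K$. The solvability issue you flag for a smooth, non-compactly supported source is precisely the point the paper settles by citing \cite[Section 3.5.3, Corollary 5]{Bar:2009zzb}, whose proof is the well-posed-Cauchy-problem/finite-propagation-speed argument you sketch.
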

\begin{proof}
Let $[h]\in\sol$ and let us choose any representative $h$. Suppose $h$ does not satisfy \eqref{eqDDGauge}; we look therefore for another representative $\widetilde{h}\in [h]$ such that $\div I\widetilde{h}=0$. To this end, consider $\chi\in\s(T^*M)$, solution of $\Box\chi+2\div Ih=0$. The existence of such $\chi$ descends from applying \cite[Section 3.5.3, Corollary 5]{Bar:2009zzb} since $\Box$ is normally hyperbolic and $\div Ih$ is a smooth source term.
Let $\widetilde{h}=h+\nabla_S\chi$. Since it is per construction a solution \eqref{eqLinGravOp}, which satisfies furthermore \eqref{eqDDGauge}, 
also \eqref{eqLinGravDD} holds true, as one can verify directly.
\end{proof}

We can use the last proposition to characterize the gauge equivalence classes of solutions of \eqref{eqLinGrav} via \eqref{eqLinGravEF}. 
As a starting point, notice that the operator $\widetilde P\doteq\Box-2\Riem$ acting on $\s(S^2T^{\ast} M)$ is normally hyperbolic. As explained in details in \cite{Bar:2013,Sanders:2012ac}, this is tantamount to the existence of unique retarded and advanced Green operators $\widetilde E^\pm:\stc(S^2T^{\ast} M)\to\s(S^2T^{\ast} M)$ such that $\widetilde P\widetilde E^\pm =\mathrm{id}$ and $\widetilde E^\pm\widetilde P=\mathrm{id}$, where $\mathrm{id}$ stands for the identity operator on $\stc(S^2T^{\ast} M)$. The following support properties hold true: $\textrm{supp}(\widetilde E^\pm(\epsilon))\subseteq J^\pm(\textrm{supp}(\epsilon))$ for any $\epsilon\in\stc(S^2T^{\ast} M)$.

Since $P=\widetilde P\circ I$, where $I$ is the trace-reversal operator, item 3. of Lemma \ref{lemma_useful} guarantees us that, although $P$ is not normally hyperbolic, it is Green hyperbolic, that is one can associate to it retarded and advanced Green operators $E^\pm\doteq I\circ\widetilde E^\pm = \widetilde E^\pm\circ I$, which share the same support properties of $\widetilde E^\pm$ and are both right and left inverses of $P$. Actually, $E^\pm$ are also unique, $P$ being formally self-adjoint, {\it i.e.} $(Ph,k^\sharp)=(h,(Pk)^\sharp)$ for each $h,k\in\s(S^2T^\ast M)$ with compact overlapping support. Let $E=E^+-E^-$ be the so-called {\em causal propagator}. On account of the exactness of the sequence
$$0\longrightarrow\stc(S^2T^\ast M)\overset{P}{\longrightarrow}\stc(S^2T^\ast M)
\overset{E}{\longrightarrow}\s(S^2T^\ast M)\overset{P}{\longrightarrow}\s(S^2T^\ast M)\longrightarrow 0,$$
the causal propagator $E$ induces an isomorphism between $\stc(S^2T^\ast M)/\imtc P$ and the space of solutions of \eqref{eqLinGravDD} -- see \cite{Khavkine:2014kya}. 

Yet, since we are looking for gauge equivalence classes of solutions of \eqref{eqLinGrav}, we are interested only in those solutions of \eqref{eqLinGravDD} fulfilling also \eqref{eqDDGauge}. To select them, we follow the same strategy as first outlined in \cite{Dimock:1992ff} for the free Maxwell equations, namely we translate the gauge fixing condition in the restriction to a suitable subspace of $\stc(S^2T^{(\ast)} M)$. As a preliminary result we prove the following lemma:

\begin{lem}\label{lemUseful}
Let $E^\pm$ be the retarded (+) and advanced (-) Green operators of $P$. Then on $\stc(S^2T^\ast M)$ it holds
$$\div IE^\pm=E_\Box^\pm\div,$$
where $E_\Box^\pm$ denotes the retarded/advanced Green operator for $\Box$ acting on sections of $T^\ast M$.
\end{lem}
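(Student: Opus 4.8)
The plan is to reduce the claimed intertwining relation to the uniqueness of the advanced and retarded Green operators of the normally hyperbolic operator $\Box$ on $\s(T^\ast M)$, exactly along the template used for the Maxwell field in \cite{Dimock:1992ff}. The starting observation is that the trace reversal $I$ is an involution, $I^2=\mathrm{id}$ (a direct consequence of $\tr(Ih)=-\tr h$ in four dimensions), and that by construction $I$ commutes with $\widetilde E^\pm$, since $E^\pm=I\widetilde E^\pm=\widetilde E^\pm I$. Hence, writing $E^\pm=\widetilde E^\pm I$ and using $I\widetilde E^\pm=\widetilde E^\pm I$ together with $I^2=\mathrm{id}$, on $\stc(S^2T^\ast M)$ one has $\div I E^\pm=\div I\widetilde E^\pm I=\div\widetilde E^\pm$. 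The statement is therefore equivalent to proving $\div\widetilde E^\pm=E_\Box^\pm\div$ for the Green operators $\widetilde E^\pm$ of $\widetilde P=\Box-2\Riem$, which removes the trace reversal from the discussion entirely.

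The key ingredient is then a single operator identity, namely that $\div$ intertwines $\widetilde P$ on $\s(S^2T^\ast M)$ with $\Box$ on $\s(T^\ast M)$:
\begin{equation*}
\div(\Box-2\Riem)=\Box\,\div.
\end{equation*}
I would obtain this as the dual of item 1. of Lemma \ref{lemma_useful}: using that $\nabla_S$ and $-\div$ are adjoint with respect to the pairing \eqref{eqPairing}, see \eqref{eqEradDiv}, and that both $\Box$ and $\widetilde P=\Box-2\Riem$ are formally self-adjoint, the identity $(\Box-2\Riem)\nabla_S=\nabla_S\Box$ on $\s(TM)$ transposes to $\div(\Box-2\Riem)=\Box\,\div$ on $\s(S^2T^\ast M)$ (equivalently on $T^\ast M$ via the musical isomorphisms).

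With this identity at hand I would run the standard argument. For $\epsilon\in\stc(S^2T^\ast M)$ set $u\doteq\div\widetilde E^\pm\epsilon\in\s(T^\ast M)$. Applying the intertwining identity and the defining property $\widetilde P\widetilde E^\pm=\mathrm{id}$ yields $\Box u=\div\widetilde P\widetilde E^\pm\epsilon=\div\epsilon$, while the support properties of $\widetilde E^\pm$ give $\supp u\subseteq J^\pm(\supp\epsilon)$, so that $u$ has past (resp. future) compact support. Since $\Box$ is normally hyperbolic on $\s(T^\ast M)$ -- on the Ricci flat background it coincides with the Laplace-de Rham operator -- its advanced/retarded Green operators exist and are unique; comparing $u$ with $E_\Box^\pm\div\epsilon$, their difference solves the homogeneous equation $\Box(\cdot)=0$ with past/future compact support and therefore vanishes \cite{Bar:2007zz}. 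This gives $u=E_\Box^\pm\div\epsilon$ and hence the claim.

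I expect the only genuinely delicate point to be the bookkeeping of the trace reversal $I$: one must notice that $\div I E^\pm$ collapses to $\div\widetilde E^\pm$, so that the relevant intertwining is between $\widetilde P$ and $\Box$ rather than between $P$ and $\Box$. Everything else -- the transposition of Lemma \ref{lemma_useful}, the support estimate, and the uniqueness argument -- is routine, provided one keeps track of the fact that $\div\epsilon\in\stc(T^\ast M)$, so that $E_\Box^\pm\div\epsilon$ is well defined.
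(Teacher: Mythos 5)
Your proof is correct, but it follows a genuinely different route from the paper's. The paper argues weakly: for $v\in\stc(S^2T^\ast M)$ and a test section $\xi\in\sc(TM)$ it expands $(\div IE^\pm v,\xi)$, substitutes $\xi=\Box E^\mp_\Box\xi$, and shuttles operators across the pairing \eqref{eqPairing} using the adjointness \eqref{eqEradDiv}, item 1.\ of Lemma \ref{lemma_useful}, formal self-adjointness of $\Box-2\Riem$, the defining relation $PE^\pm=\mathrm{id}$, and the duality $(E^\pm_\Box\xi,\chi)=(\xi,E^\mp_\Box\chi)$ between retarded and advanced operators; nondegeneracy of the pairing then yields the claim. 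You instead work at the level of operators, in the style of \cite{Dimock:1992ff}: you first strip off the trace reversal via $I^2=\mathrm{id}$ and $I\widetilde E^\pm=\widetilde E^\pm I$ (both legitimate, the latter being precisely how the paper defines $E^\pm$), reducing the claim to $\div\widetilde E^\pm=E^\pm_\Box\div$; you then transpose item 1.\ of Lemma \ref{lemma_useful} into the intertwining identity $\div(\Box-2\Riem)=\Box\,\div$ on $\s(S^2T^\ast M)$, and you conclude by comparing the two candidate solutions of $\Box u=\div\epsilon$ with past/future compact support. What your route buys is the clean operator identity and the observation that $\div IE^\pm$ collapses to $\div\widetilde E^\pm$, which makes the role of the trace reversal transparent; what it costs is an appeal to uniqueness of solutions with past (resp.\ future) compact support for the Green operators extended to $\stc$, for which the appropriate references are \cite{Bar:2013,Sanders:2012ac} rather than \cite{Bar:2007zz} (though the uniqueness you need does follow from the uniqueness of the Cauchy problem proven there, by evaluating data on a Cauchy surface lying in the past, resp.\ future, of the support). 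The paper's duality computation sidesteps any such uniqueness argument, using only the defining relations of the Green operators; both proofs ultimately rest on the same geometric input, namely item 1.\ of Lemma \ref{lemma_useful}.
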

\begin{proof}
Via the dual pairing between $TM$ and $T^\ast M$, we deduce that $(E_\Box^\pm\xi,\chi)=(\xi,E_\Box^\mp\chi)$ 
for each $\xi\in\stc(T^\ast M)$ and each $\chi\in\sc(TM)$. Notice that, with a slight abuse of notation, we use the same symbol $E_\Box^\pm$ for the Green operators of $\Box$ when it acts on sections of both $TM$ and $T^\ast M$.  We can now refer to item 1. in Lemma \ref{lemma_useful} to conclude $\div IE^\pm=E_\Box^\pm\div$ on $\stc(S^2T^\ast M)$. As a matter of fact, for each $v\in\stc(S^2T^\ast M)$ and $\xi\in\sc(TM)$ it holds
\begin{align*}
(\div IE^\pm v,\xi) & =-(E^\pm v,I\nabla_S\xi)=-(E^\pm v,I\nabla_S\Box E_\Box^\mp\xi)
=-(E^\pm v,I(\Box-2\Riem)\nabla_S E_\Box^\mp\xi)\\
& =-((\Box-2\Riem)IE^\pm v,\nabla_S E_\Box^\mp\xi)=-(v,\nabla_S E_\Box^\mp\xi)=(E_\Box^\pm\div v,\xi).
\end{align*}
The arbitrariness of $\xi$ entails the sought result.
\end{proof}

\begin{rem}\label{remResEauge}
Notice that the de Donder gauge fixing is not complete, namely there is a residual gauge freedom. 
In other words, for each $[h]\in\sol$, there exists more than one representative fulfilling both \eqref{eqLinGravDD} and \eqref{eqDDGauge}. These representatives differ by pure gauge solutions of the form $\nabla_S\chi$, where $\chi\in\s(T^\ast M)$ is such that $\Box\chi=0$.
\end{rem}

We characterize the solutions of the equations of motion for linearized gravity in the de Donder gauge:

\begin{propo}\label{prpEFSol}
Let $\kertc(\div)\doteq\{\epsilon\in\stc(S^2T^\ast M)\;|\;\div\epsilon=0\}$. Then $E\epsilon$ solves both \eqref{eqLinGravDD} and \eqref{eqDDGauge}, where $E$ is the causal propagator of $P$. Conversely, for each solution $h\in\s(S^2T^\ast M)$ of the system \eqref{eqLinGravEF}, there exists $\epsilon\in\kertc(\div)$ such that $E\epsilon$ differs from $h$ by a residual gauge transformation according to Remark \ref{remResEauge}.
\end{propo}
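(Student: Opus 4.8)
The plan is to prove the two halves of the statement separately, relying on the exact sequence displayed above, on the intertwining relations collected in Lemma \ref{lemma_useful}, and on the commutation identity of Lemma \ref{lemUseful}. Throughout I write $\widetilde E=\widetilde E^+-\widetilde E^-$ for the causal propagator of $\widetilde P=\Box-2\Riem$ and $E_\Box$ for that of $\Box$ on $T^\ast M$.

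The direct assertion is a short computation. Given $\epsilon\in\kertc(\div)$, since $PE^\pm=\mathrm{id}$ on $\stc(S^2T^\ast M)$ we get $PE\epsilon=\epsilon-\epsilon=0$, so $E\epsilon$ solves \eqref{eqLinGravDD}. For \eqref{eqDDGauge} I would push the divergence through the propagator using Lemma \ref{lemUseful}, $\div I E\epsilon=E_\Box\div\epsilon$, which vanishes precisely because $\div\epsilon=0$. No further work is needed here.

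The converse carries the real content. First, since $Ph=0$, exactness of the displayed sequence (i.e. $\ker\bigl(P\colon\s(S^2T^\ast M)\to\s(S^2T^\ast M)\bigr)=\im E$) produces some $\epsilon_0\in\stc(S^2T^\ast M)$ with $h=E\epsilon_0$; this $\epsilon_0$ need not be divergence-free. Applying Lemma \ref{lemUseful} together with the de Donder condition \eqref{eqDDGauge} gives $E_\Box\div\epsilon_0=\div I E\epsilon_0=\div I h=0$, so by the analogous exact sequence for the normally hyperbolic $\Box$ on $\stc(T^\ast M)$ we obtain $\div\epsilon_0\in\imtc\Box$, that is $\div\epsilon_0=\Box\zeta$ for some $\zeta\in\stc(T^\ast M)$. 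The decisive step is then to correct $\epsilon_0$ by setting $\epsilon\doteq\epsilon_0-2I\nabla_S\zeta$: item 2 of Lemma \ref{lemma_useful} ($2\div I\nabla_S=\Box$) yields $\div\epsilon=\Box\zeta-\Box\zeta=0$, so $\epsilon\in\kertc(\div)$, and $\epsilon$ is timelike compact because $\epsilon_0$ and $\zeta$ are. Finally I would identify the effect of this correction on $E\epsilon$: using $E=\widetilde E\,I$ (from $E^\pm=\widetilde E^\pm I$) and $I^2=\mathrm{id}$ reduces $E I\nabla_S\zeta$ to $\widetilde E\nabla_S\zeta$, and the intertwining $\widetilde P\nabla_S=\nabla_S\Box$ (item 1 of Lemma \ref{lemma_useful}) gives $\widetilde E\nabla_S\zeta=\nabla_S E_\Box\zeta$, since both sides solve $\widetilde P u=\nabla_S\zeta$ with support in $J^\pm(\supp\zeta)$ and the Green operators are unique. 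Hence $E\epsilon=h-2\nabla_S E_\Box\zeta=h+\nabla_S\chi$ with $\chi\doteq-2E_\Box\zeta$, and $\Box\chi=-2\Box E_\Box\zeta=0$, so by Remark \ref{remResEauge} this is exactly a residual gauge transformation.

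I expect the main obstacle to be not the cancellation of the divergence—item 2 of Lemma \ref{lemma_useful} makes the correction $-2I\nabla_S\zeta$ essentially forced—but rather verifying that this correction is \emph{invisible} up to residual gauge. That verification rests on the commutation of $\nabla_S$ with the propagators, $\widetilde E\nabla_S=\nabla_S E_\Box$, which encodes item 1 of Lemma \ref{lemma_useful}, and on the characterization of residual gauge in Remark \ref{remResEauge} as $\nabla_S$ of a $\Box$-harmonic one-form. Some care is also needed to confirm that $\zeta$ may be taken timelike compact, so that $\epsilon$ remains in $\stc(S^2T^\ast M)$ and every manipulation stays within the domains on which $E$, $\widetilde E$ and $E_\Box$ are defined.
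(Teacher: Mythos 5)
Your proposal is correct and follows essentially the same route as the paper's proof: both parts use Lemma \ref{lemUseful} to push $\div$ through the propagator, and the converse is handled by the identical correction $\epsilon=\widetilde\epsilon-2I\nabla_S\eta$ with item 2 of Lemma \ref{lemma_useful} killing the divergence and $\chi=-2E_\Box\eta$ providing the residual gauge transformation. The only difference is that you spell out the intermediate identity $EI\nabla_S\zeta=\nabla_S E_\Box\zeta$ via uniqueness of Green operators, a step the paper leaves implicit.
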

\begin{proof}
Let $\epsilon$ be in $\kertc(\div)$. Per definition of $E$, we know that $E\epsilon$ solves \eqref{eqLinGravDD}. On account of Lemma \ref{lemUseful}, it holds also that $\div IE\epsilon=E_\Box\div\epsilon=0$; hence $E\epsilon$ solves also \eqref{eqDDGauge}. 

Consider now any solution $h$ of \eqref{eqLinGravEF}. As a consequence of \eqref{eqLinGravDD}, there exists $\widetilde\epsilon\in\stc(S^2T^\ast M)$ such that $E(\widetilde\epsilon)=h$. Still on account of Lemma \ref{lemUseful}, \eqref{eqDDGauge} translates into $E_\Box\div\widetilde\epsilon=0$. Hence there exists $\eta\in\stc(T^\ast M)$ such that $\div\widetilde\epsilon=\Box\eta$. Let $\epsilon\doteq \widetilde\epsilon-2I\nabla_S\eta\in\stc(S^2T^\ast M)$. On account of item 2. in Lemma \ref{lemma_useful}, it holds that $\div\epsilon=0$. Furthermore 
$$E\epsilon=E\widetilde\epsilon-2EI\nabla_S\eta=h-2\nabla_S E_\Box\eta.$$
Therefore $E\epsilon$ differs from $h$ by $\nabla_S\chi$, where $\chi\doteq  -2E_\Box\eta$ solves $\Box\chi=0$ and, thus, it is a residual gauge transformation.
\end{proof}

We are ready to characterize the space of gauge equivalence classes of solutions of linearized gravity:

\begin{theo}\label{thmSolModEauge}
There exists a one-to-one correspondence between $\sol/\gauge$, the set of gauge equivalence classes of solutions of \eqref{eqLinGrav}, and the quotient between $\kertc(\div)$ and $\imtc(K)\doteq\{\epsilon\in\stc(S^2T^\ast M)\;|\;\epsilon=K\gamma,\;\gamma\in\stc(S^2T^\ast M)\}$. The isomorphism is explicitly realized by the map $[\epsilon]\mapsto [E\epsilon]$ for $[\epsilon]\in\kertc(\div)/\imtc(K)$.
\end{theo}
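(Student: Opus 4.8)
The plan is to exhibit $\Phi:\kertc(\div)/\imtc(K)\to\sol/\gauge$, $[\epsilon]\mapsto[E\epsilon]$, as a well-defined linear bijection. Before anything else I would record that the left-hand quotient makes sense, i.e. $\imtc(K)\subseteq\kertc(\div)$: this is the linearized contracted Bianchi identity $\div K=0$, which can be checked directly from Lemma \ref{lemma_useful}. Throughout I will lean on two intertwining relations that are already available, namely $EP=0$ on $\stc(S^2T^\ast M)$ (since $E^\pm$ invert $P$) and $EI\nabla_S=\nabla_S E_\Box$ on $\stc(T^\ast M)$ (the identity used in the proof of Proposition \ref{prpEFSol}, descending from item 1 of Lemma \ref{lemma_useful} together with $I^2=\mathrm{id}$ and uniqueness of the Green operators), together with the decomposition $K=-P+2I\nabla_S\div I$ obtained by comparing \eqref{eqLinGravOp} with $P=(\Box-2\Riem)I$.

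For well-definedness I would argue in two steps. First, for $\epsilon\in\kertc(\div)$, Proposition \ref{prpEFSol} gives that $E\epsilon$ satisfies \eqref{eqLinGravEF}, and since any solution of \eqref{eqLinGravEF} solves \eqref{eqLinGrav} (indeed $Kh=-Ph+2I\nabla_S\div Ih$ vanishes once $Ph=0$ and $\div Ih=0$), we obtain $E\epsilon\in\sol$, so $[E\epsilon]\in\sol/\gauge$ is defined. Second, for $\epsilon=K\gamma\in\imtc(K)$ I would compute $E(K\gamma)=-E(P\gamma)+2E(I\nabla_S\div I\gamma)=\nabla_S(2E_\Box\div I\gamma)$ using $EP=0$ and $EI\nabla_S=\nabla_S E_\Box$; hence $E(K\gamma)\in\gauge$, and $\Phi$ descends to the quotient. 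Surjectivity is then immediate: given $[h]\in\sol/\gauge$, Proposition \ref{prpDDEauge} provides a representative $\widetilde h$ obeying \eqref{eqLinGravEF}, and the converse part of Proposition \ref{prpEFSol} yields $\epsilon\in\kertc(\div)$ with $E\epsilon$ differing from $\widetilde h$ by a residual gauge transformation (Remark \ref{remResEauge}), whence $[E\epsilon]=[h]$.

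The core of the argument is injectivity, and this is where I expect the only genuine work. Suppose $\epsilon\in\kertc(\div)$ with $E\epsilon\in\gauge$, say $E\epsilon=\nabla_S\chi$. Applying $\div I$ and using Lemma \ref{lemUseful} on the left and item 2 of Lemma \ref{lemma_useful} on the right gives $0=\div I E\epsilon=\div I\nabla_S\chi=\tfrac12\Box\chi$, so $\Box\chi=0$. By exactness of the Green-hyperbolic sequence for $\Box$ there is $\zeta\in\stc(T^\ast M)$ with $\chi=E_\Box\zeta$, hence $\nabla_S\chi=\nabla_S E_\Box\zeta=E(I\nabla_S\zeta)$ and therefore $E(\epsilon-I\nabla_S\zeta)=0$. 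Exactness of the sequence for $P$ then produces $\gamma\in\stc(S^2T^\ast M)$ with $\epsilon=I\nabla_S\zeta+P\gamma$. Substituting $P=-K+2I\nabla_S\div I$ I would rewrite this as $\epsilon=-K\gamma+I\nabla_S\omega$ with $\omega\doteq\zeta+2\div I\gamma\in\stc(T^\ast M)$.

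The punchline is to apply $\div$ once more: since $\div K=0$ and $\div I\nabla_S\omega=\tfrac12\Box\omega$, the constraint $\div\epsilon=0$ forces $\Box\omega=0$; but $\omega$ is timelike compact and $\Box$ is injective on $\stc(T^\ast M)$ (exactness of the sequence at its first term), so $\omega=0$ and $\epsilon=-K\gamma\in\imtc(K)$, giving $[\epsilon]=0$. The step I regard as the real obstacle — and the reason the divergence-free condition on $\epsilon$ is indispensable — is precisely this last one: recognizing that the leftover pure-gauge-type contribution $I\nabla_S\omega$ carries a timelike compact $\omega$ solving the wave equation, which must therefore vanish.
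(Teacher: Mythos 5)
Your proof is correct and follows essentially the same route as the paper's: the same well-definedness and surjectivity arguments via Propositions \ref{prpDDEauge} and \ref{prpEFSol}, and for injectivity the same chain of ingredients (the intertwining identities $EP=0$ and $EI\nabla_S=\nabla_S E_\Box$, the exact sequences for $\Box$ and $P$, the decomposition $K=-P+2I\nabla_S\div I$, and injectivity of $\Box$ on timelike compact sections). The only difference is cosmetic: you substitute $P=-K+2I\nabla_S\div I$ before applying $\div$ and conclude $\omega=0$, whereas the paper applies $\div$ first to get $\alpha=-2\div I\beta$ and then substitutes; the two computations are identical.
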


\begin{proof}
As a starting point, notice that the quotient between $\kertc(\div)$ and $\imtc(K)$ is meaningful since the identity $\div\circ K=0$ holds on $\s(S^2T^\ast M)$. This follows from eq. \eqref{eqEradDiv}, $K\circ\nabla_S=0$ on $\s(S^2T^\ast M)$ and $K$ being formally self-adjoint, namely $(Kh,v^\sharp)=(h,(Kv)^\sharp)$ for each $h,v\in\s(S^2T^\ast M)$ with compact overlapping support. 

On account of Proposition \ref{prpEFSol} we know that $E$ associates to each element in $\kertc(\div)$ a solution of both \eqref{eqLinGravDD} and \eqref{eqDDGauge}. In turn this identifies a unique gauge equivalence class in $\sol/\gauge$. Since $E\circ P=0$ and taking into account the dual of Lemma \ref{lemUseful}, one reads $E\circ K=2EI\nabla_S\div I=2\nabla_SE^\pm_\Box\div I$, it holds also that the map $[\epsilon]\mapsto [E\epsilon]$ is well-defined. Furthermore, Proposition \ref{prpDDEauge} and Proposition \ref{prpEFSol} together entail that this application is surjective since, for any $[h]\in\sol/\gauge$, one has a representative $h$ in the de Donder gauge and there exists $\epsilon\in\kertc(\div)$ such that $E\epsilon$ differs from $h$ at most by a residual gauge transformation. In other words we have found $[\epsilon]\in\kertc(\div)/\imtc(K)$ such that $[E\epsilon]=[h]$.  

Let us now prove that the map induced by $E$ is injective. This is tantamount to prove the following statement: $[\epsilon]\in\kertc(\div)/\imtc(K)$ such that $[E\epsilon]=[0]\in\sol/\gauge$ entails $[\epsilon]=0$. Accordingly, let us assume that $\epsilon\in\kertc(\div)$ is such that $E\epsilon=\nabla_S\chi$ for $\chi\in\s(T^\ast M)$. By \eqref{eqDDGauge} and Lemma \ref{lemUseful}, this entails that $\Box\chi=0$ and hence $\chi=E_\Box\alpha$, for a suitable $\alpha\in\stc(T^\ast M)$. On account of the dual of Lemma \ref{lemUseful}, it holds $E\epsilon=EI\nabla_S\alpha$, that is there exists $\beta\in\stc(S^2T^\ast M)$ such that $\epsilon=I\nabla_S\alpha+P\beta$. Acting with $\div$ on both sides, one obtains $0=\Box\alpha+2\Box\div I\beta$, or in other words $\alpha=-2\div I\beta$. Hence, $\epsilon=P\beta-2I\nabla_S\div I \beta=-K\beta$. This concludes the proof.
\end{proof}

\begin{rem}\label{remTTTC}
In several discussions concerning linearized gravity, radiative degrees of freedom or even gravitational waves, 
it is customary to exploit the additional gauge freedom (see Remark \ref{remResEauge}) 
to switch from the de Donder to the so-called {\em transverse-traceless (TT)} gauge. 
This consists of adding one more constraint on the field configuration besides eq. \eqref{eqDDGauge}. 
The extra condition reads $\tr h=0$, 
where $\tr:\s(S^2 T^\ast M)\to\C(M)$ denotes the trace computed using the metric $g$, 
motivating the word {\em traceless}. 
It was noticed for the first time in \cite{Fewster:2012bj, Hunt2012} 
that this kind of gauge fixing is not always possible on account of a topological constraint. 
We discuss a complementary approach to this problem: 
Considering any solution $h$ of the system \eqref{eqLinGravEF}, 
we try to exploit the residual gauge freedom explained in Remark \ref{remResEauge} 
to find $\widetilde{h}\sim h$ such that $\widetilde{h}$ is also traceless. 
Thus we look for $\chi\in\s(T^*M)$ such that $\Box\chi=0$ and $\tr h+\div\chi=0$. 
Since $\chi$ is nothing but a $1$-form, $\chi\in\f^1(M)$, 
it is convenient to read these conditions using the exterior derivative $\d$ 
and the codifferential $\delta$ defined on $1$-forms. 
In other words we are looking for solutions of the following two equations: 
\begin{equation}\label{eqTTFix}
\Box\chi=0,\qquad\delta\chi=\tr h,
\end{equation}
where the D'Alembert wave operator $\Box$ coincides here 
with the Laplace-de Rham wave operator $\delta\d+\d\delta$ since the background is Ricci flat. 
Here we follow the nomenclature of \cite{SDH12} using the subscript $(p)$ attached to $E$ to indicate 
the causal propagator for the Laplace-de Rham wave operator acting on $p$-forms. 
Actually, it is convenient to exploit Proposition \ref{prpEFSol} 
to express the given $h$ in terms of a suitable $\epsilon\in\kertc(\div)$, namely $h=E\epsilon$. 
In particular, we are interested in the identity $\tr h=\tr E\epsilon =-E_{(0)}(\tr\epsilon)$, 
which is a consequence of item 4. in Lemma \ref{lemma_useful}. 

We show below that the system of equations \eqref{eqTTFix} admits a solution if and only if there exists 
$\lambda\in\ftc^1(M)$ such that $\tr\epsilon=\delta\lambda$. 
Assume first that there exists a solution $\chi$ of the system \eqref{eqTTFix}. 
From the first equation, $\chi$ is of the form $\chi=E_{(1)}\alpha$ for a suitable $\alpha\in\ftc^1(M)$. 
Therefore, the second equation entails $E_{(0)}\delta\alpha=-E_{(0)}(\tr\epsilon)$. 
In turn this fact entails $\delta\alpha=-\tr\epsilon+\Box f$ for $f\in \Ctc(M)$, 
that is to say $\tr\epsilon=\delta(\d f-\alpha)$. 
Conversely, suppose $\tr\epsilon=\delta\lambda$ for $\lambda\in\ftc^1(M)$. 
Then one can directly check that $\chi=-E_{(1)}\lambda$ is a solution of the system above. 

In conclusion, given a de Donder solution $h=E \epsilon$, $\epsilon\in\sc(S^2T^\ast M)$, 
it is possible to achieve the TT gauge 
if and only if $\ast(\tr\epsilon)\in\d\ftc^3(M)$, that is to say $[\ast(\tr\epsilon)]=0\in \htc^4(M)$, 
the fourth de Rham cohomology group with timelike compact support -- for a recent analysis see \cite{Benini, Igor}. 
Following \cite[Theorem 5.5]{Benini}, $\htc^4(M)\simeq \h^3(\Sigma)$, $\Sigma$ being a Cauchy surface for the globally hyperbolic spacetime $M$. Therefore, the TT gauge can be always achieved for solutions without any restriction on the support 
provided $\h^3(\Sigma)=0$, namely, via Poincar\'e duality, the Cauchy surface is non-compact. 
Otherwise there might be obstructions to the TT gauge for certain solutions. 

Notice that, in principle, restricting our attention only to those asymptotically flat and globally hyperbolic spacetimes having non-compact Cauchy surfaces might not be such a severe restriction since, actually, we are unaware of an explicit example falling outside the class just mentioned -- see for example \cite{Murro}. Yet, in the next sections, in order to characterize a well-defined bulk-to-boundary projection, we shall employ another gauge fixing due to Geroch and Xanthopoulos. This procedure displays a very similar obstruction, related both to the geometry and to the topology of the underlying background.
\end{rem}

\subsection{\label{subclobs}Classical observables for linearized gravity}

Our goal is to construct an algebra of observables 
which encompasses both the dynamics \eqref{eqLinGrav} and the gauge symmetry \eqref{eqGauge}. 
In this respect we employ a procedure partly different from the one in \cite{Fewster:2012bj, Hunt2012}. 
We mimic, instead, the approach of \cite{Benini:2012vi}, 
subsequently applied in \cite{Benini:2013tra} to the quantization of principal Abelian connections. 
In our opinion this approach has several net advantages. For instance, in the case of Yang-Mills theory with an Abelian gauge group containing at least a $U(1)$ factor, it helps unveiling the optimal class of observables to test the collection of gauge equivalent configurations, {\em cfr.} \cite{Benini:2013tra, Benini:2013ita}.  Also in the case, we consider, this dual (in a sense specified below) approach is still worth following since it avoids any a priori gauge fixing to construct the functionals which define observables. 
This feature is particularly relevant in clarifying the key aspects of the bulk-to-boundary correspondence 
for linearized gravity on asymptotically flat spacetimes, 
for which several choices of gauge fixings appear at different stages of the procedure.

Following \cite{Benini:2012vi, Benini:2013tra}, 
we start from the space of off-shell field configurations, namely $\s(S^2T^\ast M)$. 
Recalling \eqref{eqPairing}, a convenient space of sections which is dually paired to this one 
is given by compactly supported sections of the dual bundle, $\sc(S^2TM)$.
For any $\epsilon\in\sc(S^2TM)$, we can introduce a linear functional $\obs_\epsilon$ as follows:
\begin{equation*}
\obs_\epsilon:\s(S^2T^\ast M)\to\mathbb{R},\quad\obs_\epsilon(h)=(h,\epsilon),
\end{equation*}
where $\mu_g$ is the metric induced volume form on $M$. 
Notice that the evaluation of $\mathcal{O}_\epsilon$ on the off-shell configuration $h$ is nothing 
but the usual pairing between $h\in\s(S^2T^\ast M)$ and $\epsilon\in\sc(S^2TM)$. 
The collection of all functionals $\obs_\epsilon$, $\epsilon\in\sc(S^2TM)$, 
forms a vector space which we indicate as $\obsk$ 
and, due to non-degeneracy of the pairing $(\cdot,\cdot)$ introduced in \eqref{eqPairing}, 
it is isomorphic to $\sc(S^2TM)$. 
Hence we will often identify $\obsk$ with $\sc(S^2TM)$ 
by writing $\epsilon\in\obsk$ for any $\epsilon\in\sc(S^2TM)$.

Up to this point, a functional $\obs_\epsilon$, $\epsilon\in\sc(S^2TM)$, 
is neither invariant under gauge transformations nor on-shell, 
requirements which are both needed in order to interpret $\obs_\epsilon$ as an observable 
for the classical field theory describing linearized gravity. 
As a first step, we identify those functionals which behave properly under gauge transformations. 
Recalling \eqref{eqGauge}, we realize that $\obs_\epsilon\in\obsk$ is {\bf gauge invariant} 
if and only if $\obs_\epsilon(\nabla_S\chi)=0$ for all $\chi\in\s(T^\ast M)$. 
The following lemma characterizes gauge invariant functionals:

\begin{lem}\label{dualgauge}
A functional $\obs_\epsilon$ is invariant under gauge transformations if and only if $\div\epsilon=0$.
\end{lem}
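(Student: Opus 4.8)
The plan is to reduce gauge invariance to an integration-by-parts identity and then invoke non-degeneracy of the pairing. By definition $\obs_\epsilon$ is invariant under gauge transformations precisely when $\obs_\epsilon(\nabla_S\chi)=(\nabla_S\chi,\epsilon)=0$ for every $\chi\in\s(T^\ast M)$, so everything hinges on rewriting this quantity in a form where $\epsilon$ appears undifferentiated and $\chi$ appears freely.

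First I would move the symmetrized derivative off $\chi$ and onto $\epsilon$ using the adjoint relation between $\nabla_S$ and $-\div$. The identity \eqref{eqEradDiv} is stated with $\nabla_S$ acting on the contravariant factor, but the pairing \eqref{eqPairing} and the operators $\nabla_S,\div$ are defined symmetrically on the symmetric tensor powers of both $TM$ and $T^\ast M$; the dual identities flagged right after Lemma \ref{lemma_useful} therefore yield $(\nabla_S\chi,\epsilon)=-(\chi,\div\epsilon)$, with $\chi\in\s(T^\ast M)$, $\epsilon\in\sc(S^2TM)$ and $\div\epsilon\in\sc(TM)$. The point that makes this integration by parts clean is that $\epsilon$, and hence $\div\epsilon$, is compactly supported, so no boundary contributions arise even though $\chi$ is only required to be smooth.

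With this identity in hand both implications are immediate. If $\div\epsilon=0$, then $(\nabla_S\chi,\epsilon)=-(\chi,\div\epsilon)=0$ for all $\chi$, so $\obs_\epsilon$ is gauge invariant. Conversely, gauge invariance forces $(\chi,\div\epsilon)=0$ for every $\chi\in\s(T^\ast M)$; since $\div\epsilon$ is a fixed compactly supported section of $TM$ while $\chi$ ranges over all smooth sections of $T^\ast M$, the non-degeneracy of the pairing \eqref{eqPairing} --- equivalently the fundamental lemma of the calculus of variations applied componentwise with bump functions localized where $\div\epsilon$ is nonzero --- yields $\div\epsilon=0$.

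I do not expect a genuine obstacle here: the statement is essentially the assertion that $\nabla_S$ and $-\div$ are formal adjoints, combined with non-degeneracy of the $L^2$-type pairing. The only points deserving a moment's care are checking that \eqref{eqEradDiv} transfers verbatim to the case where $\nabla_S$ differentiates the covariant factor (which follows from the symmetric definitions on $TM$ and $T^\ast M$ and the dual identities already recorded), and noting explicitly that the compact support of $\epsilon$ removes any boundary terms in the integration by parts.
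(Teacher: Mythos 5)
Your proof is correct and follows essentially the same route as the paper: both reduce gauge invariance to the adjointness identity $(\nabla_S\chi,\epsilon)=(\chi,-\div\epsilon)$ coming from \eqref{eqEradDiv} and then conclude via non-degeneracy of the pairing \eqref{eqPairing}. Your extra care in noting that \eqref{eqEradDiv} must be dualized (since there $\nabla_S$ acts on the contravariant factor) and that compact support of $\epsilon$ kills boundary terms is a welcome, though minor, sharpening of what the paper leaves implicit.
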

\begin{proof}
This follows from eq. \eqref{eqEradDiv}, which states that $\nabla_S$ and $-\div$ are the dual of each other. 
This entails that, for all $\epsilon\in\sc(S^2TM)$ and for all $\chi\in\s(T^*M)$, 
\begin{equation*}
\obs_\epsilon(\nabla_S\chi)=(\nabla_S\chi,\epsilon)=(\chi,-\div\epsilon). 
\end{equation*}
Since the pairing between $\sc(TM)$ and $\s(T^*M)$ is non-degenerate, we deduce that 
$\obs_\epsilon$ is gauge invariant, namely $\obs_\epsilon(\nabla_S\chi)=0$ for each $\chi\in\s(T^\ast M)$ 
if and only if $\div\epsilon=0$. 
\end{proof}
\noindent Lemma \ref{dualgauge} motivates the definition given below 
for the space of gauge invariant linear functionals: 
\begin{equation*}
\obsi=\left\{\obs_\epsilon\in\obsk:\;\div\epsilon=0\right\}=\kerc(\div).
\end{equation*}

As a last step, we have to account for the dynamics. More precisely we wish to construct equivalence classes, identifying two elements in $\obsi$ whenever they differ by a third one which yields $0$ when evaluated on any configuration $h\in\s(S^2T^*M)$ solving \eqref{eqLinGrav}. This is achieved taking the quotient of $\obsi$ by the image of the dual of the differential operator $K$ 
ruling the dynamics. In this way we obtain (classes of) gauge invariant functionals 
whose evaluation is well-defined only on (gauge equivalence classes of) solutions to the field equation $Kh=0$. 
We proceed as follows: First, we compute the dual of $K$ with respect to 
the pairing $(\cdot,\cdot)$ between sections of $S^2T^\ast M$ and $S^2TM$ defined in \eqref{eqPairing}. 
For each $h\in\s(S^2T^\ast M)$ and each $\epsilon\in\s(S^2TM)$ with compact overlapping support, we have 
\begin{align*}
(h,K^\ast\epsilon) & =(Kh,\epsilon)=((-\Box+2\Riem+2I\nabla_S\div)Ih,\epsilon)
=(I(-\Box+2\Riem+2\nabla_S\div I)h,\epsilon)\\
& =(h,(-\Box+2\Riem+2I\nabla_S\div)I\epsilon)=(h,(K\epsilon^\flat)^\sharp),
\end{align*}
where we used item 3. in Lemma \ref{lemma_useful}, 
standard properties of the Riemann tensor as well as \eqref{eqEradDiv}. 
This entails that the dual $K^\ast$ of $K$ coincides with $K$ itself, 
barring the musical isomorphisms to pass from $TM$ to $T^\ast M$ and vice versa. 
For this reason, with a slight abuse of notation, in the following we will use $K$ also to denote its dual $K^\ast$. 
We introduce the space of classical observables as follows: 
\begin{equation}\label{obs}
\clobs=\frac{\obsi}{\imc(K)}.
\end{equation}
Notice that the quotient is well-defined on account of the identity $\div\circ K=0$ on $\s(S^2TM)$. 
The evaluation of an observable $[\epsilon]\in\clobs$ 
on a gauge class of solutions $[h]\in\sol/\gauge$ is consistently obtained 
by an arbitrary choice of representatives, namely $\obs_{[\epsilon]}([h])=\obs_\epsilon(h)$ 
for each $\epsilon\in[\epsilon]$ and $h\in[h]$. 
The space of classical observables $\clobs$ can be endowed with a presymplectic form. 
This is introduced via the causal propagator $E$ of the Green hyperbolic differential operator $P$, 
which rules the gauge-fixed dynamics, see Proposition \ref{prpDDEauge}. 

\begin{propo}\label{prpPresympl}
The space of classical observables $\clobs$ can be endowed with the presymplectic structure\footnote{From a geometric point of view, it would be more customary and appropriate to talk about a constant Poisson structure -- see for example \cite{Khavkine:2014kya}. We will stick to the nomenclature more commonly used in quantum field theory on curved backgrounds.} defined below:
\begin{equation}\label{tau}
\tau:\clobs\otimes\clobs\to\bR,\quad\tau([\epsilon],[\zeta])=2(E\epsilon^\flat,\zeta),
\end{equation}
where $E$ is the causal propagator for $P$ and where the right-hand-side is written in terms 
of an arbitrary choice of representatives in both equivalence classes.
\end{propo}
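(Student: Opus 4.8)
The plan is to verify the three properties that make $\tau$ a presymplectic form: that it is $\bR$-bilinear, that it is antisymmetric, and that it descends to the quotient $\clobs=\obsi/\imc(K)$. Bilinearity is immediate, since $E$, the musical isomorphism $\flat$ and the pairing $(\cdot,\cdot)$ are all $\bR$-linear, so the real content lies in antisymmetry and in independence of the chosen representatives. I would first establish antisymmetry at the level of representatives and then deduce well-definedness from it together with a single computation.

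For antisymmetry, I would rewrite $(E\epsilon^\flat,\zeta)=(E\epsilon^\flat,(\zeta^\flat)^\sharp)$ and observe that, for $H,A\in\s(S^2T^\ast M)$ with compact overlapping support, the contraction $\la H,A^\sharp\ra=H_{ab}g^{ac}g^{bd}A_{cd}$ is symmetric under $H\leftrightarrow A$, so that $(H,A^\sharp)=(A,H^\sharp)$. Since $P$ is formally self-adjoint, its Green operators satisfy $(E^\pm u,v^\sharp)=(u,(E^\mp v)^\sharp)$, whence the causal propagator $E=E^+-E^-$ is anti-self-adjoint, $(Eu,v^\sharp)=-(u,(Ev)^\sharp)$. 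Combining these two facts yields
\[
(E\epsilon^\flat,\zeta)=-(\epsilon^\flat,(E\zeta^\flat)^\sharp)=-(E\zeta^\flat,(\epsilon^\flat)^\sharp)=-(E\zeta^\flat,\epsilon),
\]
that is $\tau([\epsilon],[\zeta])=-\tau([\zeta],[\epsilon])$.

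For well-definedness it suffices, by bilinearity, to show that $\tau$ is unchanged when a representative is modified by an element of $\imc(K)$ (recall that $K$ here denotes its own dual). I would treat the second slot first: for $\epsilon\in\kerc(\div)$ and any $\gamma\in\sc(S^2TM)$, duality of $K$ gives $(E\epsilon^\flat,K\gamma)=(KE\epsilon^\flat,\gamma)$, so it is enough to prove $KE\epsilon^\flat=0$. Writing $K=-P+2I\nabla_S\div I$ and using $PE=0$ reduces the claim to the term $2I\nabla_S\div IE\epsilon^\flat$; by Lemma \ref{lemUseful} one has $\div IE\epsilon^\flat=E_\Box\div\epsilon^\flat=0$, the last equality because $\div\epsilon=0$ implies $\div\epsilon^\flat=0$. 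Hence $KE\epsilon^\flat=0$ and $(E\epsilon^\flat,K\gamma)=0$, so $\tau$ does not depend on the representative of $[\zeta]$. Independence of the representative of $[\epsilon]$ then follows immediately from the antisymmetry already established, applied with $\zeta\in\kerc(\div)$.

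I expect the only delicate point to be the bookkeeping of the musical isomorphisms and of the distinction between $P$ and $K$: the crucial input is that $E$ is the causal propagator of $P$, and \emph{not} of $K$, so that the identity $KE\epsilon^\flat=0$ genuinely requires $PE=0$, the commutation relation $\div IE=E_\Box\div$ of Lemma \ref{lemUseful}, and the gauge-invariance condition $\div\epsilon=0$ built into $\obsi=\kerc(\div)$.
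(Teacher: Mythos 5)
Your proof is correct, and its skeleton --- bilinearity, antisymmetry via Green-operator duality and the musical isomorphisms, then descent to the quotient --- is the same as the paper's; in particular your antisymmetry computation is essentially identical to the one in the paper. Where you genuinely differ is in the descent step. The paper transposes $K\eta$ into the first slot using the skew-symmetry just established, $(E\epsilon^\flat,K\eta)=-(EK\eta^\flat,\epsilon)$, and then shows that $EK\eta^\flat$ is \emph{pure gauge}: using $EP=0$ and the relation $EI\nabla_S=\nabla_SE_\Box$ (the dual form of Lemma \ref{lemUseful}), one gets $EK\eta^\flat=2EI\nabla_S\div I\eta^\flat=2\nabla_SE_\Box\div I\eta^\flat$, which pairs to zero against $\epsilon$ by \eqref{eqEradDiv} and $\div\epsilon=0$. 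You instead transpose $K$ across the pairing by its formal self-adjointness and prove the mirror identity $KE\epsilon^\flat=0$, using Lemma \ref{lemUseful} in its direct form ($\div IE=E_\Box\div$) together with $\div\epsilon=0$. The two computations are dual to one another: the paper exploits that $E\circ K$ lands in the pure-gauge solutions, which gauge-invariant functionals annihilate; you exploit that $K\circ E$ annihilates gauge-invariant functionals, i.e.\ that $E$ sends $\kerc(\div)$ to genuine solutions of the full equation $Kh=0$ rather than merely of the gauge-fixed one --- in effect the compactly supported version of Proposition \ref{prpEFSol}. Your route makes this link with the on-shell analysis more transparent, at the modest price of invoking the formal self-adjointness of $K$, which the paper had in any case already established in order to make sense of the quotient \eqref{obs}; your handling of the first slot via antisymmetry coincides with what the paper does implicitly.
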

\begin{proof}
First, let us notice that $(E\,\cdot^\flat,\cdot)$ is bilinear and skew-symmetric on $\obsk$. 
Bilinearity can be directly read from the formula, 
hence we take arbitrary $\eta,\zeta\in\obsk$ and show that $(E\epsilon^\flat,\zeta)=-(E\zeta^\flat,\epsilon)$: 
$\cdot^\flat$ intertwines both $P$ and its dual (still denoted by $P$ with a slight abuse of notation 
motivated by the fact that the operators looks exactly the same up to musical isomorphisms). 
Therefore a similar property holds true for the corresponding causal propagators, 
both denoted by $E$ with the same abuse of notation. 
Since we are dealing with $P$ and its dual, the following relation between the corresponding Green operators holds: 
$(E^\pm h,\eta)=(h,E^\mp\eta)$ for each $h\in\sc(S^2T^\ast M)$ and $\eta\in\sc(S^2TM)$. 
To conclude this part of the proof, we stress that $(A^\flat,H^\sharp)=(H,A)$ 
for each $A\in\s(S^nTM)$ and $H\in\s(S^nT^\ast M)$ with compact overlapping support. 
All these considerations entail that 
\begin{equation*}
(E\epsilon^\flat,\zeta)=((E\epsilon)^\flat,\zeta)=(\zeta^\flat,(E\epsilon)^{\flat\,\sharp})
=(\zeta^\flat,E\epsilon)=-(E\zeta^\flat,\epsilon).
\end{equation*}
Up to this point, we have a presymplectic form $(E\,\cdot^\flat,\cdot)$ on $\obsk$, hence in particular on $\obsi$. 
We still have to prove that such structure descends to the quotient space $\clobs$, 
thus providing $\tau$ as specified in the statement: 
To this end, we take $\epsilon\in\obsi$ and $\eta\in\sc(S^2TM)$ and show that $(E\epsilon^\flat,K\eta)=0$. 
To proceed, we take into account that $(E\,\cdot^\flat,\cdot)$ is skew-symmetric, 
we recall the definition of $K$, eq. \eqref{eqLinGravOp}, and we consider its dual acting on sections of $S^2TM$, 
still denoted by $K$ with the usual abuse of notation. 
Exploiting item 1. of Lemma \ref{lemma_useful} and \eqref{eqEradDiv}, we end up with 
\begin{equation*}
(E\epsilon^\flat,K\eta)=-(EK\eta^\flat,\epsilon)=-(EI\nabla_S\div I\eta^\flat,\epsilon)
=-(\nabla_S E_\Box\div I\eta^\flat,\epsilon)=(E_\Box\div I\eta^\flat,\div\epsilon)=0,
\end{equation*}
where the last equality follows from gauge-invariance of $\epsilon$. 
Therefore $(E\,\cdot^\flat,\cdot)$ descends to the quotient $\clobs$. 
This shows that $\tau$ is a well-defined presymplectic form on $\clobs$, thus completing the proof. 
\end{proof}

\noindent The factor $2$ appearing in the expression of the presymplectic form might look unusual, as much as the fact that the causal propagator appears in the left slot of the pairing. Here we are using the causal propagator $E$ for $P=(\Box-2\Riem)I$ 
in order to define the presymplectic form, which does not take into account a factor $-1/2$ appearing in \eqref{eqLinGrav}.  
The minus sign is compensated indeed by the causal propagator on the left side. Notice that we made no statement about the non-degeneracy of \eqref{tau}. At the moment a positive answer has been given by Fewster and Hunt for globally hyperbolic spacetimes with compact Cauchy surfaces \cite{Fewster:2012bj} and recently by Hack on Minkowski spacetime \cite{Hack}. We will not dwell into this problem since it does not play a significant role in our investigation. 

\begin{rem}
It is possible to obtain the same formula for the presymplectic structure in Proposition \ref{prpPresympl} 
generalizing a method originally due to Peierls \cite{Peierls:1952cb} to gauge theories \cite{Khavkine:2012jf, Khavkine:2014kya}. 
This approach was considered already in \cite{Marolf:1993af} 
and was recently put on mathematically solid grounds in \cite{SDH12} for the vector potential of electromagnetism. 
In \cite{Benini:2013tra} it was successfully applied also to principal connections for Abelian Yang-Mills models.
We follow here a similar argument in order to motivate the definition of $\tau$. 

Once a gauge invariant functional $\epsilon\in\obsi$ is fixed, 
we are interested in studying how the presence of $\epsilon$ affects the dynamics of the field. 
More precisely, we want to compare the retarded and the advanced effect 
produced by $\epsilon$ on any other gauge invariant functional $\zeta\in\obsi$. 
For each on-shell configuration $h$, this is achieved by finding solutions 
$h^\pm_\epsilon$ to the field equation modified by the presence of $\epsilon$ 
such that $h^\pm_\epsilon$ is gauge equivalent to $h$ in the past/future of a Cauchy surface. 
In the end, the effect produced by $\epsilon$ on $\zeta$ is evaluated 
comparing $\obs_\zeta(h^+_\epsilon)$ with $\obs_\zeta(h^-_\epsilon)$. 
We define the modified dynamics introducing the equation $K_\epsilon h=Kh+2\epsilon^\flat=0$. 
This is exactly the inhomogeneous differential equation 
we would obtain starting from the Lagrangian density for linearized gravity, 
adding an external source $\epsilon$ and then looking for the associated Euler-Lagrange equations. 
In particular this motivates the factor $2$, 
which is due to the fact that $Kh=0$ coincides with eq. \eqref{eqLinGrav} up to such factor. 
Solutions to the equation $K_\epsilon h=0$ can be obtained 
applying the Green operators $E^\pm$ for $P$ to $2\epsilon^\flat$:
\begin{equation*}
K E^\pm(2\epsilon^\flat)=-2\epsilon^\flat+4I\nabla_S\div I E^\pm\epsilon^\flat
=-2\epsilon^\flat+4I\nabla_S E_\Box^\pm(\div\epsilon)^\flat=-2\epsilon^\flat,
\end{equation*} 
where we employed both Lemma \ref{lemUseful} and $\div\epsilon=0$. 
Consider an on-shell configuration $h$, namely $Kh=0$, and look for $h^\pm_\epsilon$ as above. 
Setting $h^\pm_\epsilon=h+ E^\pm(2\epsilon^\flat)$, we read $K_\epsilon h^\pm_\epsilon=0$. 
Moreover $h^\pm_\epsilon$ differs from $h$ only on $J_M^\pm(\supp(\epsilon))$. 
Since $\epsilon$ has compact support, the requirement on the asymptotic behaviour is fulfilled as well. 
We are now ready to define the retarded/advanced $E^\pm_\epsilon$ effect induced by $\epsilon$ 
on any gauge-invariant functional $\zeta\in\obsi$ as 
$E^\pm_\epsilon\zeta=\obs_\zeta(h^\pm_\epsilon)-\obs_\zeta(h)$. 
We stress that, given $h$, the right-hand-side does not depend on our construction of $h^\pm_\epsilon$ 
due to the gauge invariance of $\zeta$. We now compare retarded and advanced effects:
\begin{equation*}
E^+_\epsilon\zeta-E^-_\epsilon\zeta=(E^+(2\epsilon^\flat),\zeta)-(E^-(2\epsilon^\flat),\zeta)
=2(E\epsilon^\flat,\zeta)=\tau([\epsilon],[\zeta]).
\end{equation*}
Therefore Peierls' method yields exactly the presymplectic form used in Proposition \ref{prpPresympl}.
\end{rem}

To conclude the section we establish an isomorphism between the presymplectic space of classical observables $\clobs$ and spacelike compact solutions of the linearized Einstein's equation up to spacelike compact gauge. Besides making contact with other treatments, see {\it e.g.} \cite{Fewster:2012bj}, this correspondence will be exploited in the next section to construct the bulk-to-boundary correspondence. Let us first introduce some notation: We use the symbol $\solsc\doteq\kersc(K)$ to indicate the space of solutions $h$ of the equation $Kh=0$ with support included in a spacelike compact region, while we denote the space of spacelike compact gauge transformations $\nabla_S\chi$, $\chi\in\ssc(T^\ast M)$, with $\gaugesc\doteq\imsc(\nabla_S)$. 

\begin{propo}\label{prpObsSCSol}
There exists a one-to-one correspondence between $\clobs$ and $\solsc/\gaugesc$ induced by the causal propagator $E$ for $P$, which is defined by $[\epsilon]\mapsto[E\epsilon^\flat]$. Such map induces an isomorphism of presymplectic spaces when $\solsc/\gaugesc$ is endowed with the presymplectic form $\sigma:\solsc/\gaugesc\times\solsc/\gaugesc\to\bR$ defined by 
$$\sigma([h], [E\zeta])=2(h,\zeta^\sharp)$$
for each $h\in\solsc$ and $\zeta\in\sc(S^2T^\ast M)$.
\end{propo}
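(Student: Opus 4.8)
The plan is to read this as the spacelike-compact, compactly-supported mirror of Theorem \ref{thmSolModEauge}: there the causal propagator realized $\kertc(\div)/\imtc(K)\cong\sol/\gauge$, here I want $\clobs=\kerc(\div)/\imc(K)\cong\solsc/\gaugesc$ via $\Phi\colon[\epsilon]\mapsto[E\epsilon^\flat]$. The entire argument runs in parallel, with the support classes swapped (timelike compact $\rightsquigarrow$ compact and unrestricted $\rightsquigarrow$ spacelike compact), so the only genuinely new structural input is the exact sequence dual to the one displayed before Lemma \ref{lemUseful}, namely
$$0\longrightarrow\sc(S^2T^\ast M)\overset{P}{\longrightarrow}\sc(S^2T^\ast M)\overset{E}{\longrightarrow}\ssc(S^2T^\ast M)\overset{P}{\longrightarrow}\ssc(S^2T^\ast M)\longrightarrow 0,$$
together with its analogue for the normally hyperbolic operator $\Box$ on $T^\ast M$. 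From the former I read off that $E$ maps $\sc$ onto $\kersc(P)$ with kernel $\imc(P)$, while the latter supplies both the surjectivity of $\Box$ on $\ssc(T^\ast M)$ and its injectivity on $\sc(T^\ast M)$; these two facts are exactly what transports the de Donder construction into the spacelike-compact category. As a first step I would check that $\Phi$ is well defined: for $\epsilon\in\kerc(\div)$ the section $E\epsilon^\flat$ is spacelike compact since $\supp(E\epsilon^\flat)\subseteq J(\supp\epsilon)$, it solves \eqref{eqLinGravDD} because $PE=0$, and it solves \eqref{eqDDGauge} because Lemma \ref{lemUseful} gives $\div IE\epsilon^\flat=E_\Box(\div\epsilon)^\flat=0$; hence $E\epsilon^\flat\in\solsc$. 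That $\imc(K)$ is sent into $\gaugesc$ follows, as in Theorem \ref{thmSolModEauge}, from $E\circ K=2\nabla_S E_\Box\,\div I$, so that $E(K\gamma)^\flat=\nabla_S\chi$ with $\chi$ spacelike compact.

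Surjectivity is the step I expect to be the main obstacle, since it is where the support bookkeeping is delicate. Given $[h]\in\solsc/\gaugesc$ I would first gauge-fix to the de Donder condition inside the spacelike-compact class: solving $\Box\chi=-2\div Ih$ for a spacelike compact $\chi$ — possible precisely because $\Box$ is onto $\ssc(T^\ast M)$ — and replacing $h$ by $h+\nabla_S\chi$ produces, in the same class, a spacelike-compact solution of the full system \eqref{eqLinGravEF}. Exactness of the $P$-sequence then yields $\widetilde\epsilon\in\sc(S^2T^\ast M)$ with $E\widetilde\epsilon=h$, and \eqref{eqDDGauge} together with Lemma \ref{lemUseful} forces $E_\Box\div\widetilde\epsilon=0$, whence $\div\widetilde\epsilon=\Box\eta$ for some $\eta\in\sc(T^\ast M)$ by exactness of the $\Box$-sequence. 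Setting $\epsilon^\flat\doteq\widetilde\epsilon-2I\nabla_S\eta$ I obtain $\div\epsilon=0$ from item 2. of Lemma \ref{lemma_useful}, while $E\epsilon^\flat=h-2\nabla_S E_\Box\eta$ differs from $h$ by a spacelike-compact gauge transformation; thus $\Phi([\epsilon])=[h]$. This is the verbatim spacelike-compact mirror of Propositions \ref{prpDDEauge} and \ref{prpEFSol}.

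For injectivity I would copy the argument of Theorem \ref{thmSolModEauge}. If $E\epsilon^\flat=\nabla_S\chi$ with $\chi$ spacelike compact, then applying $\div I$ and Lemma \ref{lemUseful} gives $\Box\chi=0$, so $\chi=E_\Box\alpha$ for some $\alpha\in\sc(T^\ast M)$; consequently $E(\epsilon^\flat-I\nabla_S\alpha)=0$ and exactness produces $\beta\in\sc(S^2T^\ast M)$ with $\epsilon^\flat=I\nabla_S\alpha+P\beta$. Taking the divergence and using items 1. and 2. of Lemma \ref{lemma_useful} yields $\Box(\tfrac12\alpha+\div I\beta)=0$; here the injectivity of $\Box$ on $\sc(T^\ast M)$ — the feature replacing the timelike-compact injectivity used before — forces $\alpha=-2\div I\beta$, and substituting back gives $\epsilon^\flat=P\beta-2I\nabla_S\div I\beta=-K\beta$, that is $\epsilon=-K\beta^\sharp\in\imc(K)$ and $[\epsilon]=0$. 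This establishes the claimed bijection.

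Finally, for the symplectic statement I would transport $\tau$ through $\Phi$. By surjectivity every class in the second slot is of the form $[E\zeta]$ with $\div\zeta=0$, and for such representatives gauge-invariance ($(\nabla_S\psi,\zeta^\sharp)=-(\psi,(\div\zeta)^\sharp)=0$) together with $Kh=0$ (so that $(h,K\gamma^\sharp)=(Kh,\gamma^\sharp)=0$) make $\sigma([h],[E\zeta])=2(h,\zeta^\sharp)$ independent of both representatives; hence $\sigma$ is a genuine presymplectic form. The direct computation $\sigma([E\epsilon^\flat],[E\zeta^\flat])=2(E\epsilon^\flat,(\zeta^\flat)^\sharp)=2(E\epsilon^\flat,\zeta)=\tau([\epsilon],[\zeta])$ then shows that $\Phi$ is a presymplectomorphism, completing the proof.
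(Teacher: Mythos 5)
Your proof is correct and follows essentially the same route as the paper: the paper's own proof declares the first part to be a ``slavish copy'' of Proposition \ref{prpEFSol} and Theorem \ref{thmSolModEauge} with timelike-compact sections replaced by compactly supported ones and smooth solutions by spacelike-compact ones, which is precisely the transposition you carry out (including the compact/spacelike-compact exact sequence for $P$ and the injectivity/surjectivity of $\Box$ on the relevant support classes), and the symplectic statement is the same one-line computation $\sigma([E\epsilon^\flat],[E\zeta^\flat])=2(E\epsilon^\flat,\zeta)=\tau([\epsilon],[\zeta])$. The only difference is that you spell out details the paper leaves implicit, notably the well-definedness of $\sigma$ on equivalence classes.
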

\begin{proof}
The first part of the proof is a slavish copy of that of Proposition \ref{prpEFSol} and of Theorem \ref{thmSolModEauge} where timelike compact sections are replaced by compactly supported ones whereas smooth solutions of the linearized Einstein's equations are replaced by those whose support is spacelike compact and the same is done for gauge transformations. 

For the second part of the proof let us consider $[\epsilon],[\epsilon^\prime]\in\clobs$. 
Then we have 
$$\sigma([E\epsilon^\flat],[E\epsilon^{\prime\,\flat}])=
2(E\epsilon^\flat,\epsilon^\prime)=\tau([\epsilon],[\epsilon^\prime]).$$
This identity completes the proof.
\end{proof}

\begin{rem}\label{remTTSC} 
We may wonder whether the content of Remark \ref{remTTTC}, concerning the implementation of the TT gauge, can be applied also to  $\solsc/\gaugesc$. The point is the following: It is possible to achieve the TT gauge for a spacelike compact de Donder solution $h=E\epsilon$, 
$\epsilon\in\sc(S^2T^\ast M)$, exploiting the spacelike compact gauge freedom $\nabla_S\chi$, 
$\chi\in\ssc(T^\ast M)$ if and only if $\tr\epsilon=\delta\lambda$ for a $\lambda\in\fc^1(M)$, 
that is to say $[\ast(\tr\epsilon)]$ is the trivial class in $\hc^4(M)$. 
Therefore, the obstruction to the TT gauge is now ruled by $\hc^4(M)$, 
which is isomorphic to $\h^0(M)\simeq\bR^c$ via Poincar\'e duality, 
$c$ being the number of connected components of $M$. 
In particular, this means that on every spacetime one may encounter obstructions 
in imposing the TT gauge for some spacelike compact solutions.
\end{rem}


\section{\label{sec3}The bulk-to-boundary correspondence for linearized gravity}
Our present goal is to spell out explicitly the construction of a bulk-to-boundary correspondence for linearized gravity on asymptotically flat spacetimes at a classical level. The quantum counterpart will be discussed in the next section.

\subsection{\label{sec3.1}The phase space on null infinity}
We focus our attention on a particular class of manifolds which are distinguished since they possess an asymptotic behaviour along null directions which mimics that of Minkowski spacetime. Used extensively and successfully in the definition of black hole regions \cite{Wald}, the most general class of asymptotically flat spacetimes includes several important physical examples, such as for instance the Schwarzschild and the Kerr solutions to Einstein's equations. In this paper we will employ the definition of asymptotic flatness, as introduced by Friedrich in \cite{Friedrich:1986uq}. To wit, we consider an \textbf{asymptotically flat spacetime with future time infinity $i^+$}, \textit{i.e.} a globally hyperbolic spacetime $(M, g)$, solution of Einstein's vacuum equations, hereby called \emph{physical spacetime}, such that there exists a second globally hyperbolic spacetime $(\widetilde M, \widetilde g)$, called \emph{unphysical spacetime}, with a preferred point $i^+\in\widetilde M$, a diffeomorphism $\psi : M \to \psi(M) \subset \widetilde M$ and a function $\Xi : \psi(M) \to (0, \infty)$ so that $\psi^\ast(\Xi^{-2}\widetilde g) = g$.
Moreover, the following requirements ought to be satisfied:
\begin{itemize}
  \item[a)]
    If we call $J_{\widetilde M}^-(i^+)$ the causal past of $i^+$, this is a closed set such that $\psi(M) = J_{\widetilde M}^-(i^+) \setminus \partial J_{\widetilde M}^-(i^+)$ and we have $\partial M = \partial J_{\widetilde M}^-(i^+) = \mathscr{I}^+ \cup \{i^+\}$, where $\mathscr{I}^+$ is called \emph{future null infinity}.
  \item[b)]
    $\Xi$ can be extended to a smooth function on the whole $\widetilde M$ and it vanishes on $\mathscr{I}^+ \cup \{i^+\}$.
    Furthermore, $d \Xi \neq 0$ on $\mathscr{I}^+$ while $d \Xi = 0$ on $i^+$ and $\widetilde \nabla_\mu \widetilde \nabla_\nu \, \Xi = -2 \, \widetilde g_{\mu\nu}$ at $i^+$.
  \item[c)]
    Introducing $n^\mu \doteq \widetilde \nabla^\mu \Xi$, there exists a smooth and positive function $\xi$ supported at least in a neighbourhood of $\mathscr{I}^+$ such that $\widetilde \nabla_\mu (\xi^4 n^\mu) = 0$ on $\mathscr{I}^+$ and the integral curves of $\xi^{-1}n$ are complete on future null infinity.
\end{itemize}
Here $\widetilde \nabla$ is the Levi-Civita connection built out of $\widetilde g$. Notice that, in the above definition, future timelike infinity plays a distinguished role, contrary to what happens in the more traditional definition of asymptotically flat spacetimes where $i^+$ is replaced by $i_0$, spatial infinity -- see for example \cite[Section 11]{Wald}. The reason for our choice is motivated by physics: We are interested in the algebra of observables for linearized gravity which is constructed out of $E$, the causal propagator associated to the operator $P$ as in \eqref{eqLinGravDD}. This entails, that, for any smooth and compactly supported symmetric rank $2$ tensor $\epsilon$, its image under the action of the causal propagator is supported in the causal future and past of $\supp(\epsilon)$. Therefore it will be important in our investigation that future timelike infinity is actually part of the unphysical spacetime, so to be able to control the behaviour of $E(\epsilon)$ thereon. Such requirement can be relaxed particularly if one is interested in studying field theories on spacetimes like Schwarzschild where $i^+$ cannot be made part of the unphysical spacetime. The price to pay in this case is the necessity to make sure that any solution of the classical dynamics falls off sufficiently fast as it approaches future timelike infinity. This line of reasoning has been pursued in \cite{Dappiaggi:2009fx}, though we shall not follow it here since it relies heavily on the fact that a very specific manifold has been chosen. On the contrary we plan to consider all at the same time a large class of backgrounds.

Before focusing our attention on the field theoretic side, it is worth devoting a few lines to outlining the geometric properties of the null boundary of an asymptotically flat spacetime. Notice that the choice to work with $\Im^+$ and not with $\Im^-$, past null infinity, is purely conventional. Everything can be translated slavishly to the other case. Here we will summarize what has been already discussed in detail in \cite{Friedrich:1986uq, Geroch, Wald} and in \cite{Dappiaggi:2005ci, Dappiaggi:2011cj, Siemssen:2011gma} for an application to quantum field theory:
\begin{itemize}
\item $\Im^+$ is a three dimensional submanifold of $\widetilde M$ generated by the null geodesics emanating from $i^+$, {\it i.e.} the integral curves of $n$. It is thus diffeomorphic to $\mathbb{R} \times \mathbb{S}^2$ although the possible metric structures are affected by the existence of a gauge freedom which corresponds to the rescaling of $\Xi$ to $\xi \Xi$, where $\xi$ is a smooth function which is strictly positive in $\psi(M)$ as well as in a neighbourhood of $\Im^+$.
\item Null infinity is said to be both intrinsic and universal. In other words, if we introduce for any fixed asymptotically flat spacetime $(M, g)$ the set $C$ composed by the equivalence classes of triples $(\Im^+, h, n)$, where $h \doteq \widetilde{g} \restriction_{\Im^+}$ and $(\Im^+, h, n) \sim (\Im^+, \xi^2 h, \xi^{-1} n)$ for any choice of $\xi$ satisfying c), there is no physical mean to select a preferred element in $C$.
This is called the \emph{intrinsicness} of $\Im^+$. Concerning \emph{universality}, if we select any pair of asymptotically flat spacetimes, $(M_1, g_1)$ and $(M_2, g_2)$, together with the corresponding triples, say $(\Im^+_1, h_1, n_1)$ and $(\Im^+_2, h_2, n_2)$, there always exists a diffeomorphism $\gamma : \Im^+_1 \to \Im^+_2$ such that $h_1 = \gamma^* h_2$ and $n_2 = \gamma_* n_1$.
\item In each equivalence class, element of $C$, there exists a choice of conformal gauge $\xi_B$ yielding a coordinate system $(u, \Xi, \theta, \varphi)$ in a neighbourhood of $\Im^+$, called \emph{Bondi frame}, such that the (rescaled) unphysical metric tensor becomes
\begin{equation}\label{Bondimetric}
  \widetilde g \restriction_{\Im^+} = -2\, \d u\, \d\Xi + \d\theta^2 + \sin^2\!\theta\,\d\varphi^2.
\end{equation}
In this novel coordinate system future null infinity is the locus $\Xi = 0$, while $u$ is the affine parameter of the null geodesics generating $\Im^+$. Thus, at each point on $\Im^+$ the vector field $n$ coincides with $\partial_u$.
\item A distinguished role both from a geometric and from a quantum theoretical point of view is played by the subgroup of diffeomorphisms of $\Im^+$ which maps each equivalence class lying in $C$ into itself. This is the so-called {\bf Bondi-Metzner-Sachs} (BMS) group which coincides, moreover, with the group of asymptotic symmetries of the physical spacetime $(M,g)$ \cite{Geroch}. It can be explicitly characterized in a Bondi frame as follows: Consider the complex coordinates $(z,\bar{z})$ obtained from $(\theta,\varphi)$ via the stereographic projection, $z=e^{i\varphi}\cot(\theta/2)$. An element of the BMS group acts on $(u,z,\bar{z})$ as the following map 
\begin{equation}\label{BMS}
\left\{\begin{array}{l}
u\mapsto u^\prime\doteq K_\Lambda(z,\bar{z})\left(u+\alpha(z,\bar{z})\right),\\
z\mapsto z^\prime\doteq\displaystyle{\frac{az+b}{cz+d}}\;\textrm{and c.c.},
\end{array}\right.
\end{equation}
where $a,b,c,d\in\mathbb{C}$ with $ad-bc=1$, whereas $\alpha(z,\bar{z})\in\C(\mathbb{S}^2)$ and
$$K_\Lambda(z,\bar{z})=\frac{1+|z|^2}{|az+b|^2+|cz+d|^2}.$$
From \eqref{BMS} it descends that the BMS group is the semidirect product $SO_0(3,1)\rtimes \C(\mathbb{S}^2)$, where $SO_0(3,1)$ is the component connected to the identity of the Lorentz group, here acting on $(z,\bar{z})$ via a M\"obius transformation. Hence each $\gamma\in BMS$ identifies actually a pair $\left(\Lambda,\alpha\right)\in SO_0(3,1)\times \C(\bS^2)$.
\end{itemize}

We can now focus on the main goal of this section, namely constructing a $*$-algebra intrinsically defined on $\Im^+$ on which to encode the information of the bulk counterpart in a sense specified below. We start by defining a suitable ``space of observables'' on null infinity and, to this end, we follow a strategy very similar to the one employed in \cite{Dappiaggi:2011cj, Siemssen:2011gma}, which we combine with earlier analysis, see \cite{Ashtekar:1981hw, Ashtekar:1982aa} in particular. Notice that, on account of the peculiar structure of $\Im^+$, it is more convenient to write explicitly the indices of all tensors involved in our analysis. Furthermore we need the following key ingredients:
\begin{enumerate}
\item Let $(M,g)$ be an asymptotically flat spacetime whose unphysical counterpart is $(\widetilde M,\widetilde g)$. Then we call  $\iota:\Im^+\to\widetilde M$ the embedding of null infinity into the unphysical spacetime. 
\item Let $q=\iota^*\widetilde g$ be the pull-back of \eqref{Bondimetric} to $\Im^+$. On account of the null direction on $\Im^+$, the outcome is degenerate and, thus, a canonical inverse metric does not exist. Hence, following the historical convention -- see for example \cite{Ashtekar:1981hw, Ashtekar:1982aa}, we shall call $q^{ab}$ any symmetric tensor field satisfying the condition $q^{ab}q_{ac}q_{bd}=q_{cd}$. There is a large freedom in this choice, but, as we shall comment later, it does not play a role in our analysis.
\end{enumerate}
With these data and on account of the analysis of Ashtekar on the radiative degrees of freedom in general relativity at null infinity, we introduce the following space of sections: 
\begin{equation}\label{symplbound}
\mathcal{S}(\Im^+)\doteq\{\lambda_{ab}\in\s(S^2T^*\Im^+)\;|\; \lambda_{ab}n^a=0,\; \lambda_{ab}q^{ab}=0,\;\textrm{and}\;(\lambda,\lambda)_\Im<\infty,\;(\partial_u \lambda,\partial_u \lambda)_\Im<\infty\}.
\end{equation}
Here $(\cdot,\cdot)_\Im$ denotes a pairing between sections of $S^2T^\ast\Im^+$ defined by
\begin{equation}\label{innprodscri}
(\lambda,\lambda^\prime)_\Im\doteq\int\limits_{\Im^+} \langle\lambda,\lambda^\prime\rangle\,\mu_\Im,\quad \langle\lambda,\lambda^\prime\rangle\doteq \lambda_{ab}\lambda^{\prime}_{cd}q^{ac}q^{bd},
\end{equation}
for all $\lambda,\lambda^\prime\in \s(S^2T^*\Im^+)$ such that $\lambda_{ab}n^a=\lambda^\prime_{ab}n^a=0$ and $\langle\lambda,\lambda^\prime\rangle$ is an integrable function, where, in a Bondi reference frame, $\mu_\Im=\sin^2\theta\,\d u\,\d\theta\,\d\varphi$. Notice that the inner product on $\Im^+$ (and therefore the integrability condition too) does not depend on the choice of $q^{ab}$ since the freedom in the choice of an inverse to $q_{ab}$ lies in the null direction, but the constraint $\lambda_{ab}n^a=0$ on elements of $\mathcal{S}(\Im^+)$ ensures that such components never contribute. Furthermore we can regard $\mathcal{S}(\Im^+)$ as a symplectic vector space after endowing it with the antisymmetric bilinear form below:
\begin{equation}\label{symplbound2}
\sigma_\Im(\lambda,\lambda^\prime)=\int\limits_{\Im^+} 
\left(\lambda_{ab}\mathcal{L}_n\lambda^\prime_{cd}-\lambda^\prime_{ab}\mathcal{L}_n\lambda_{cd}\right)
q^{ac}q^{bd}\,\mu_\Im,
\end{equation}
where $\mathcal{L}_n$ is the Lie derivative along the null vector $n$. Notice that, repeating verbatim the same argument of \cite[Proposition 4.1]{Dappiaggi:2011cj}, \eqref{symplbound2} is weakly non-degenerate as a consequence of the finiteness of both $(\lambda,\lambda)_\Im$ and $(\partial_u\lambda,\partial_u\lambda)_\Im$.

We remark that \eqref{symplbound} differs slightly from the one used in \cite{Ashtekar:1982aa} since an explicit condition on the square integrability of the symmetric $(0,2)$-tensors on $\Im^+$ and of their derivatives along the null-direction is spelled out. This is motivated by our desire to mimic the same analysis for the conformally coupled scalar field in \cite{Moretti:2005ty, Moretti:2006ks} and for the vector potential in \cite{Dappiaggi:2011cj, Siemssen:2011gma}. Furthermore, also the authors in \cite{Ashtekar:1982aa} stress that a suitable fall-off condition of both $\lambda$ and $\lambda^\prime$ towards $i^+$, timelike infinity, is necessary so to ensure the finiteness of the integral in \eqref{symplbound2}.

To conclude our excursus on the boundary data, we need to specify how the BMS group acts on our fields on $\Im^+$.
Following \cite[Section 2]{Dappiaggi:2011cj}, we consider any vector bundle $E$ on $\Im^+$ and we introduce a family of representations of the BMS group on the smooth sections $\s(E)$ via the map $\Pi^\rho: \textrm{BMS} \times \s(E) \to \s(E)$ defined according to 
\begin{equation}\label{BMSrep}
(\Pi^\rho(\gamma,s))(u^\prime,z^\prime,\bar{z}^\prime)
=K_\Lambda(z,\bar{z})^\rho\,s(u+\alpha(z,\bar{z}),z,\bar{z}),
\end{equation}
where $\rho\in\bR$ is a parameter, $\gamma$ is presented in the form $(\Lambda,\alpha)\in SO_0(3,1)\times \C(\bS^2)$ (see the comment below \eqref{BMS}) and $(u^\prime,z^\prime,\bar{z}^\prime)$ are defined as functions of $(u,z,\bar{z})$ according to \eqref{BMS}. With these data we can prove the following:
\begin{propo}\label{BMSinv}
The symplectic space $\left(\mathcal{S}(\Im^+),\sigma_\Im\right)$ is invariant under the representation $\Pi^\rho$ of the BMS group with $\rho=1$.
\end{propo}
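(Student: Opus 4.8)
The plan is to establish the two properties that together constitute invariance of the symplectic space: first, that $\Pi^1(\gamma,\cdot)$ maps $\mathcal{S}(\Im^+)$ into itself for every $\gamma$ in the BMS group, and second, that it preserves the bilinear form $\sigma_\Im$. Since the BMS group is the semidirect product $SO_0(3,1)\rtimes\C(\bS^2)$ and $\Pi^\rho$ is a representation, it suffices to verify both properties on the two generating families separately: the supertranslations $(u,z,\bar z)\mapsto(u+\alpha(z,\bar z),z,\bar z)$, for which $K_\Lambda=1$, and the Lorentz/M\"obius part $(u,z,\bar z)\mapsto(K_\Lambda u,(az+b)/(cz+d),\ldots)$, for which $\alpha=0$.

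For supertranslations the argument is essentially immediate. The transformation is a $z$-dependent relabelling along the null generators which leaves the angular tensor structure untouched, so the pointwise constraints $\lambda_{ab}n^a=0$ and $\lambda_{ab}q^{ab}=0$ are manifestly preserved. Because $n=\partial_u$ is carried to itself by the shift and $\mu_\Im$ is invariant under it (the extra $\d\alpha$ terms in $\d u$ wedge to zero against the angular form), the finiteness of $(\lambda,\lambda)_\Im$ and $(\partial_u\lambda,\partial_u\lambda)_\Im$ is unaffected and the integrand of $\sigma_\Im$ is carried to itself, whence $\sigma_\Im$ is invariant.

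The substance of the proof lies in the Lorentz/M\"obius part, which I would organise as a conformal weight count. Within an equivalence class the geometric data transform as $q^{ab}\mapsto K_\Lambda^{-2}q^{ab}$ (so that the double contraction $q^{ac}q^{bd}$ carries weight $-4$), $n\mapsto K_\Lambda^{-1}n$, and $\mu_\Im\mapsto K_\Lambda^{3}\mu_\Im$, while $\Pi^\rho$ assigns to $\lambda$ the weight factor $K_\Lambda^{\rho}$. Two structural simplifications are the heart of the matter: since $K_\Lambda=K_\Lambda(z,\bar z)$ is independent of $u$ one has $\mathcal{L}_nK_\Lambda=0$, and since every admissible $\lambda$ obeys $\lambda_{ab}n^a=0$ the inhomogeneous terms produced by rescaling the vector field in $\mathcal{L}_{K_\Lambda^{-1}n}\lambda$ drop out; together these force the factor $\lambda_{ab}\mathcal{L}_n\lambda'_{cd}$ to acquire weight $2\rho-1$. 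Collecting contributions, the three-form integrand of $\sigma_\Im$ is rescaled by $K_\Lambda^{\,2\rho-1-4+3}=K_\Lambda^{\,2\rho-2}$, which is trivial precisely when $\rho=1$; combined with the change of variables for the M\"obius diffeomorphism of $\bS^2$ and the rescaling $u\mapsto K_\Lambda u$, this yields $\sigma_\Im(\Pi^1\lambda,\Pi^1\lambda')=\sigma_\Im(\lambda,\lambda')$. The same bookkeeping shows tracelessness to be conformally covariant, hence preserved, and that the norm densities pick up only bounded powers of $K_\Lambda$; since $K_\Lambda$ is a strictly positive continuous function on the compact sphere, bounded above and below, finiteness of both $(\lambda,\lambda)_\Im$ and $(\partial_u\lambda,\partial_u\lambda)_\Im$ is preserved and $\Pi^1\lambda$ indeed lies in $\mathcal{S}(\Im^+)$.

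The main obstacle I expect is precisely the careful verification of the weight of the Lie-derivative term together with the attendant boundary issues. One must confirm that the only surviving contribution of $\mathcal{L}_{K_\Lambda^{-1}n}$ acting on a transverse, weight-one tensor is the homogeneous rescaling $K_\Lambda^{-1}\mathcal{L}_n$, which hinges delicately on the constraint $\lambda_{ab}n^a=0$ annihilating the terms proportional to $\mathrm{d}K_\Lambda$, and one must ensure that the change of variables in $u$ and on $\bS^2$ produces no boundary contributions at $i^+$ or as $u\to\pm\infty$. The latter is exactly what the square-integrability conditions built into \eqref{symplbound} guarantee, and for the Lorentz sector one may invoke essentially verbatim the fall-off analysis already carried out for the vector potential in \cite{Dappiaggi:2011cj, Siemssen:2011gma}, of which the present computation is the spin-two analogue.
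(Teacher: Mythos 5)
Your proof is correct and follows essentially the same route as the paper's: the core in both is the conformal-weight bookkeeping $K_\Lambda^{3}$ (measure) $\times\, K_\Lambda^{-4}$ (two inverse metrics) $\times\, K_\Lambda^{2\rho-1}$ (the two fields together with the $u$-derivative, using $\mathcal{L}_n K_\Lambda=0$ and transversality) $= K_\Lambda^{2\rho-2}$, trivial precisely at $\rho=1$, plus boundedness of the strictly positive $K_\Lambda$ to preserve the constraints and the square-integrability conditions. The only difference is organizational: you factor through the semidirect-product structure and treat supertranslations and the Lorentz sector separately, whereas the paper runs the identical computation for a general element $(\Lambda,\alpha)$ in one pass.
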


\begin{proof}
To start with, let us consider $\Im^+$ in the Bondi frame and let us pick any element $\gamma=(\Lambda,\alpha)\in\textrm{BMS}$. Let $\lambda\in\mathcal{S}(\Im^+)$ and let $\lambda_\gamma\doteq\Pi^\rho(\gamma,\lambda)$ be the outcome after the action of $\gamma$ on $\lambda$ for a given value of $\rho\in\mathbb{R}$. Per construction it holds that $\lambda_\gamma$ is symmetric and it fulfils both $\lambda_{\gamma\;ab}n^a=0$ and $\lambda_{\gamma\;ab}q^{ab}=0$, as these properties are inherited directly from $\lambda$. Furthermore the measure on null infinity can be rewritten in terms of the complex coordinates as $\mu_\Im = -2i(1+\bar{z}z)^{-2}\,\d u\,\d z\,\d\bar{z}$ and, thus, \eqref{BMS} entails that $\mu_\Im$ is transformed by $\gamma$ into $K_\Lambda(z,\bar{z})^3\,\mu_\Im$. Since $K_\Lambda(z,\bar{z})$ is a smooth, bounded and strictly positive function and since $\mu_\Im$ is translation invariant along the $u$-direction, finiteness of both $(\lambda_\gamma,\lambda_\gamma)_\Im$ and $(\partial_u \lambda_\gamma,\partial_u \lambda_\gamma)_\Im$ can be traded from the same property of $\lambda$. In other words, for each $\gamma\in\textrm{BMS}$, $\Pi^\rho(\gamma,\cdot)$ maps $\mathcal{S}(\Im^+)$ to itself, for all $\rho\in\bR$. It remains to be proven that the symplectic form is preserved under action of $\Pi^1(\gamma,\cdot)$ for each $\gamma\in\textrm{BMS}$. We notice that \eqref{symplbound2} can be written as 
$$\sigma_\Im(\lambda,\lambda^\prime)=\int\limits_{\Im^+} \left(\lambda_{ab}\partial_u \lambda^\prime_{cd}-\lambda^\prime_{ab}\partial_u \lambda_{cd}\right)q^{ac}q^{bd}\,\mu_\Im.$$
Let us take any $\gamma=(\Lambda,\alpha)\in\textrm{BMS}$.  At the same time, in a Bondi frame the line element reads $ds^2=0\cdot du^2+d\theta^2+\sin^2\theta d\varphi^2$, from which one can infer via \eqref{BMS} that any BMS group element transforms $q^{ab}$ in $K_\Lambda(z,\bar{z})^{-2}q^{ab}$. Furthermore, for any $\lambda\in\mathcal{S}(\Im^+)$ and for any $\gamma=(\Lambda,\alpha)\in\textrm{BMS}$, it holds that $\partial_u \lambda(u,z,\bar{z})$ is transformed to $\partial_u \lambda(u+\alpha(z,\bar{z}),z,\bar{z})$. Gathering all data together, we infer that, under the action of the BMS group
$$\sigma_\Im(\lambda,\lambda^\prime)\mapsto\int\limits_{\Im^+} K^3_\Lambda\left(K_\Lambda \lambda_{ab}\partial_u \lambda^\prime_{cd}-K_\Lambda \lambda^\prime_{ab}\partial_u\lambda_{cd}\right)K^{-4}_\Lambda q^{ac}q^{bd}\,\mu_\Im=\sigma_\Im(\lambda,\lambda^\prime),$$
where for notational simplicity we have omitted the explicit dependence on the coordinates of the various factors in the integrand. 
\end{proof}

\subsection{\label{sec3.2a}Radiative degrees of freedom and the Geroch-Xanthopoulos gauge}
The next goal of our analysis is to show whether there exists a linear map from the space of classical observables in the bulk to $\mS(\Im^+)$, which is injective and preserves the (pre-)symplectic form. Such result can be extended directly to the quantum counterpart and used to construct Hadamard states induced from null-infinity. It is worth stressing that such procedure has been shown to work for massless and conformally coupled scalar fields \cite{Dappiaggi:2005ci}, for the Dirac field \cite{Dappiaggi:2010gt, Hack:2010iw} and for the vector potential \cite{Dappiaggi:2011cj, Siemssen:2011gma}. We will prove that linearized gravity behaves in an inherently different way from all other free fields. 

The first deviation from the other cases considered manifests itself when, in a given spacetime $(M,g)$, one starts to study the behaviour of \eqref{eqLinGrav} under conformal transformations of the metric. While, in all other scenarios, conformal invariance was guaranteed, in our case a lengthy and tedious calculation not only shows the lack of it but also the arise of terms proportional to inverse powers of $\Xi$, the conformal factor. These are potentially pernicious since we are interested in considering solutions of the equations of motion propagating to null infinity, which is defined as the locus for which $\Xi=0$. To highlight the problem, it is better to work explicitly with indices. Let us consider any solution $h$ of \eqref{eqLinGrav} and let us set $\tau_{ab}=\Xi h^\prime_{ab}$, $\tau_a=\Xi^{-1} n^b h^\prime_{ab}$, $\tau=\widetilde g^{ab}\tau_{ab}$, $f=\Xi^{-1}n^a n_a$ and $n_a=\nabla_a\Xi=\widetilde\nabla_a\Xi$. On $(\widetilde M,\widetilde g)$ it holds that $\tau_{ab}$ obeys to the following partial differential equation -- see \cite[Section 3]{Hollands:2003ie} for an expression valid in all dimensions:
\begin{align}
-\widetilde\Box\tau_{ab}+\widetilde\nabla_a\widetilde\nabla_b\tau+4\nabla_{(a}\tau_{b)}+2\widetilde\nabla_{(a}y_{b)}-2\widetilde g_{ab}\widetilde\nabla^c\tau_c-2\widetilde R_{acdb}\tau^{cd}+\frac{1}{2}\widetilde R_{ab}\tau & \notag\\
+\frac{1}{6}\widetilde R\tau_{ab}-\frac{1}{12}\widetilde g_{ab}\widetilde R\tau+\frac{2}{\Xi}n_{(a}y_{b)}+\frac{\widetilde g_{ab}}{\Xi}\left(2n_c\tau^c+n^c\widetilde\nabla_c\tau+\frac{1}{2}f\tau\right) & =0,\label{monster}
\end{align}
where we used the auxiliary quantity $y_a\doteq\widetilde\nabla^b\tau_{ab}-\widetilde\nabla_a\tau-3\tau_a$.

In order to avoid the ensuing singularities in the above expression, one can only follow the same approach used when dealing with the vector potential in asymptotically flat spacetimes: exploit the gauge invariance of \eqref{eqLinGrav} in order to tame the unwanted terms. Despite the approach is morally the same, notice the sharp contrast with \cite{Dappiaggi:2011cj, Siemssen:2011gma}, in which, on account of the conformal invariance of the Maxwell equations for the Faraday tensor, it suffices to work in the standard Lorenz gauge, albeit it is not conformally invariant. One hopes that similarly one could consider either the de Donder gauge or the transverse-traceless gauge for linearized gravity. Earlier investigations, see for example \cite{Grishchuk:1980hb}, show that pathologies with these choices cannot be avoided. A way to circumvent these problems has been devised by Geroch and Xanthopoulos in \cite{Geroch:1978ur} by introducing a highly non trivial gauge fixing which cancels all potentially divergent terms. The goal of their investigation was to prove the stability under metric perturbations of the property of asymptotic simplicity, which is slightly more general than that of asymptotic flatness. We will review critically this procedure and we will show that, despite the common belief, it cannot be always applied and obstructions are present. While also in electromagnetism similar features are present \cite{Benini:2013ita, Benini:2013tra, DL, SDH12}, here the situation is different since the source of such obstructions cannot be only ascribed to the topology of the spacetime but there is a non trivial interplay with the geometry of the background. The latter plays a key role since all observables of linearized gravity have vanishing divergence, a condition which explicitly involves both the covariant derivative and the metric. 

Let $(M,g)$ be an asymptotically flat spacetime in the sense specified in Section \ref{sec3} and let $(\widetilde M, \widetilde g)$ be an associated unphysical spacetime. By omitting for the sake of notational simplicity the diffeomorphism $\psi: M\to\psi(M)\subset\widetilde M$, we know that, on $M$, $\widetilde g=\Xi^2 g$, where $\Xi\in \C(\widetilde M)$ is strictly positive in $M$ and vanishing on null infinity.  

\begin{defi}\label{GXgauge}
Let $h\in\ssc(S^2T^*M)$ be any spacelike compact solution of \eqref{eqLinGrav}. We say that $h$ is a solution of the linearized Einstein's equations in the Geroch-Xanthopoulos gauge {\bf (GX-gauge)} if, setting $\tau_{ab}=\Xi h^\prime_{ab}$, $\tau_a=\Xi^{-1} n^b h^\prime_{ab}$, $\tau=\widetilde g^{ab}\tau_{ab}$ and $f=\Xi^{-1}n^a n_a$ and $n_a=\nabla_a\Xi=\widetilde\nabla_a\Xi$, it holds that
\begin{subequations}
\begin{align}
y_a=\widetilde\nabla^b\tau_{ab}-\widetilde\nabla_a\tau-3\tau_a & =0,\label{GXa}\\
\left(n^a \widetilde\nabla_a+\frac{1}{6}\Xi \widetilde R+\frac{3}{2}f\right)\widetilde\Box\tau & = \frac{1}{12}\widetilde R f\tau - \frac{1}{2}\tau\widetilde\Box f-\frac{1}{3}\widetilde R n^a\tau_a+\frac{4}{\Xi} \widetilde C_{abcd}\tau^{bd}n^an^c , \label{GXb}
\end{align}
\end{subequations}
where $\widetilde\cdot$ refers to quantities computed with respect to $\widetilde g$, {\em e.g.} $\widetilde C_{abcd}$ is the Weyl tensor for $\widetilde g$. 
\end{defi}

\noindent As proven in \cite{Geroch:1978ur}, working with the GX-gauge reduces \eqref{monster} to the following set of equations: 
\begin{align}\label{monsterGX}
\widetilde\Box\tau_{ab}= & \widetilde\nabla_a\widetilde\nabla_b\tau+ 4\widetilde\nabla_{(a}\tau_{b)}-2\widetilde C_{acbd}\tau^{cd}-\frac{\widetilde R}{6}\tau_{ab}+\frac{\widetilde R}{12}\tau\widetilde g_{ab}-\frac{\tau}{2}\widetilde R_{ab}+2\widetilde{R}_{c(a}\tau^c_{b)}-2 u\widetilde g_{ab},\\
\frac{\widetilde\Box\tau_a}{2}= & \widetilde\nabla_a\sigma\!+\!\frac{\widetilde R_{ac}}{4}\widetilde\nabla^c\tau\!+\!\frac{R}{24}\widetilde\nabla_a\tau\!-\!\frac{\widetilde{R}^{cd}}{2}\widetilde\nabla_c\tau_{ad}-\!\!\frac{\tau_{ac}}{6}\widetilde\nabla^c\widetilde R+\tau^{cd}\widetilde\nabla_{[c}\widetilde R_{a]d}+\tau^c\widetilde R_{ac}+\frac{\widetilde R}{4}\tau_a+\frac{\tau}{12}\widetilde\nabla_a\widetilde R,\notag\\
\widetilde\Box\sigma= & -\frac{1}{2}\widetilde R^{cd}\widetilde\nabla_c\widetilde\nabla_d\tau-2\widetilde R^{cd}\widetilde\nabla_c\tau_d-\frac{1}{12}\widetilde\nabla^c\widetilde R\widetilde\nabla_c\tau+\widetilde R\sigma+\frac{\widetilde R^2}{72}\tau-\frac{1}{2}\tau_{cd}\widetilde R^c_m\widetilde R^{dm}-\frac{1}{3}\tau^c\widetilde\nabla_c\widetilde R,\notag
\end{align}
where $\sigma\doteq\Xi^{-1}(n^a\tau_a+\frac{1}{2}n^a\widetilde\nabla_a\tau+\frac{1}{4}f\tau)$.

\begin{rem}\label{hopedieslastbutonlyIresurrect}
espite being quite complicated, this set of equations is rather advantageous not only because no singularity occurs in the coefficients as $\Xi\to 0$, but also because the system, together with \eqref{GXb}, admits a well-posed initial value problem. This feature was first remarked and exploited in \cite{Geroch:1978ur}, but it can be also inferred from the analysis of \cite[Section 5]{Bar:2013}. More precisely, in the language of this paper and with the notation introduced already by Geroch and Xanthopoulos, we can rewrite the whole system as a PDE of the form $QF=0$ where $F$ is a vector whose entries are the following fields $F_1=\tau_{ab}$, $F_2=\tau_a$, $F_3=\sigma$, $F_4=\tau$, $F_5=\nabla_S\tau$ and $F_6=\Box\tau$. The explicit form of $Q$, which can be constructed out of \eqref{monsterGX} and \eqref{GXb}, is not of particular interest. Noteworthy is instead the following: We are dealing with a linear system of mutually coupled equations, each of which has a principal symbol of hyperbolic type, except for one equation of first order whose principal symbol is of the type described in \cite[Definition 5.1]{Bar:2013}. This entails two important properties. On the one hand, for given smooth and compactly supported initial data, the associated solution is unique, smooth and spacelike compact. On the other hand, one can associate to such a system retarded $\widetilde G^+$ and advanced $\widetilde G^-$ Green operators with the same properties enjoyed by $\widetilde E$ as in Section \ref{subGaugeDyn}.
\end{rem}

Notice that for our aims we want to introduce a procedure to map observables from the bulk spacetime $M$ to null infinity $\Im^+$. Recalling Proposition \ref{prpObsSCSol}, any observable $[\epsilon]\in\clobs$ can be equivalently regarded as a spacelike compact solution up to gauge, $[h]\in\solsc/\gaugesc$. Our strategy consists of extending beyond the boundary a representative $h\in[h]$ which fulfils the GX-gauge, namely such that \eqref{GXa} holds. More precisely, we want to exploit the well-posedness of the Cauchy problem for \eqref{monsterGX} together with \eqref{GXb} to extend in a unique way the above mentioned $h$ to $\widetilde M$.
Nonetheless this last statement has a direct consequence on the GX-gauge, namely, we need to make sure that, given any smooth and spacelike compact solution of 
\eqref{eqLinGrav}, there exists a gauge transformation in $\gaugesc$ such that the gauge-transformed solution fulfils also \eqref{GXa}. A key remark in this direction, mentioned partly in \cite{Geroch:1978ur} and more precisely in \cite{Ashtekar:1982aa} is that, indeed for any solution $h$ of \eqref{eqLinGrav} with smooth and compactly supported initial data, one can always find $\chi\in\ssc(T^\ast M)$ such that $h^\prime=h+\nabla_S\chi$ lies in the GX-gauge. We show why there is a loophole in this statement. More precisely the problem lies in \eqref{GXa}. Let us then rewrite it in terms of geometric quantities and operations defined with respect to the physical metric $g$. It holds:
\begin{align*}
y_a(h^\prime)=\Xi^{-1}v_a(h^\prime), && v_a(h^\prime)=\nabla^b h^\prime_{ab}-\nabla_a h^\prime,
\end{align*}
where we have made explicit the dependence from $h^\prime$. In other words, if we consider $h\in\ssc(S^2T^*M)$ which solves \eqref{eqLinGrav} but fails to fulfil \eqref{GXa}, we need to look for $\chi\in\ssc(T^\ast M)$ such that $y_a(h^\prime)=0$, where $h^\prime=h+\nabla_S\chi$. 
This is tantamount to solving 
\begin{equation}\label{GXc}
\nabla^b\nabla_{[b}\chi_{a]}=-v_a(h).
\end{equation}
This equation is supplemented by the identity $\nabla^a v_a(h)=0$, which follows from \eqref{eqLinGrav} by taking its trace. It is not guaranteed that \eqref{GXc} has a solution. In order to realize it, let us rewrite the equation using the differential $\d$ and the codifferential $\delta$, namely
$$\delta\d\chi=2v(h),\quad \delta v(h)=0.$$
The first equality yields on the one hand a partial differential equation for $\chi$. On the other hand it also imposes a constraint on $v$, namely there cannot exist a solution if $v$ is not coexact. Actually, a solution $\chi$ exists if and only if $v$ is coexact. Unfortunately the second equation does only guarantee that $v$ is coclosed. Furthermore it also imperative that $\chi\in\ssc(T^*M)$ is not only a solution of \eqref{GXc}, but also that its restriction to any Cauchy surface is compactly supported.\footnote{Dropping such requirement would milden the obstruction we have pointed out. As a matter of fact, every coclosed $1$-form is coexact without further constraints on the support if $\h^3(M)$ is trivial, which amounts to considering only globally hyperbolic spacetimes with non-compact Cauchy surfaces.} We stress once more that such requirement is paramount to guarantee the applicability of the procedure to uniquely extend $\Xi h^\prime$ to $\tau\in\ssc(S^2T^*\widetilde M)$ such that $\tau|_M=\Xi h^\prime$ by solving the system \eqref{monsterGX} supplemented with \eqref{GXb}. In the next theorem we show which constraint has to be imposed on the solution of \eqref{eqLinGrav} for this to happen:

\begin{theo}\label{GXchar}
Let $\epsilon\in\obsi$. Denote with $E$ the causal propagator for $P=(\Box-2\Riem)I$. 
Then $v(E\epsilon^\flat)\in\delta\fsc^2(M)$ if and only if $\tr\epsilon\in\delta\fc^1(M)$.
\end{theo}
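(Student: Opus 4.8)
The plan is to first reduce the rather opaque condition on $v(E\epsilon^\flat)$ to a transparent statement about the trace $\tr\epsilon$, and only then to analyse the two implications. Writing $h\doteq E\epsilon^\flat$, Proposition \ref{prpEFSol} guarantees that $h$ solves the de Donder system, in particular $\div Ih=0$, which in components reads $\nabla^b h_{ab}=\frac12\nabla_a(\tr h)$. Plugging this into the definition $v_a(h)=\nabla^b h_{ab}-\nabla_a(\tr h)$ collapses it to $v_a(h)=-\frac12\nabla_a(\tr h)$. Since item 4 of Lemma \ref{lemma_useful} yields $\tr(E\epsilon^\flat)=-E_{(0)}(\tr\epsilon)$ (as already used in Remark \ref{remTTTC}), I obtain the clean identity
\begin{equation*}
v(E\epsilon^\flat)=\tfrac12\,\d\,E_{(0)}(\tr\epsilon).
\end{equation*}
Setting $\phi\doteq\tr\epsilon\in\Cc(M)$, the theorem is therefore equivalent to the assertion that $\d E_{(0)}\phi\in\delta\fsc^2(M)$ if and only if $\phi\in\delta\fc^1(M)$, a statement now living entirely within the de Rham calculus driven by the scalar and one-form propagators.

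The two ingredients I would then set up are the intertwining relations for the Laplace--de Rham propagators and the cohomological reading of the target condition. On a Ricci flat background $\Box$ coincides with $\delta\d+\d\delta$ on $0$- and $1$-forms, and since $\delta$ commutes with the Laplace--de Rham operator in every degree, the codifferential intertwines the propagators, $E_{(0)}\delta=\delta E_{(1)}$. The easy (``if'') implication then follows at once: if $\phi=\delta\lambda$ with $\lambda\in\fc^1(M)$, then
\begin{equation*}
\d E_{(0)}\phi=\d\,\delta E_{(1)}\lambda=-\delta\,\d E_{(1)}\lambda,
\end{equation*}
where I used $\Box E_{(1)}\lambda=(\delta\d+\d\delta)E_{(1)}\lambda=0$ to trade $\d\delta$ for $-\delta\d$. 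Hence $v(E\epsilon^\flat)$ is $\delta$ of the spacelike compact two-form $-\tfrac12\d E_{(1)}\lambda$, so it lies in $\delta\fsc^2(M)$.

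The hard direction, and the one I expect to be the crux, is the converse. Assume $w\doteq\d E_{(0)}\phi=2\,\delta\omega$ for some $\omega\in\fsc^2(M)$, and note that $\psi\doteq E_{(0)}\phi$ is a spacelike compact solution of $\Box\psi=0$, so $w$ is closed, co-closed and spacelike compact. My strategy is to extract a single number from $w$ that, on the one hand, is forced to vanish by co-exactness and, on the other, detects the class $[\ast\phi]\in\hc^4(M)$ whose triviality is, by Remark \ref{remTTSC} and Poincar\'e duality (recall $M$ is connected, so $\hc^4(M)\cong\bR$), exactly the condition $\phi\in\delta\fc^1(M)$. That number is the flux $\int_\Sigma\ast w$ through a Cauchy surface $\Sigma$. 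It is independent of $\Sigma$ because $\ast w$ is a closed three-form (this is $\delta w=0$) and $\psi$ is spacelike compact, so nothing escapes laterally; and co-exactness makes it vanish, since $\ast w=2\,\d(\ast\omega)$ with $\ast\omega\in\fsc^2(M)$ of compact support on $\Sigma$, so Stokes on the boundaryless $\Sigma$ gives $\int_\Sigma\ast w=0$.

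It remains to identify this flux with $\int_M\phi\,\mu_g$ up to a nonzero constant. Writing $\psi=E^+_{(0)}\phi-E^-_{(0)}\phi$ and evaluating on a late Cauchy surface $\Sigma_+$ lying to the future of $\supp\phi$, the advanced part vanishes there, so $\int_{\Sigma_+}\ast w$ is proportional to $\int_{\Sigma_+}\nabla_n(E^+_{(0)}\phi)$; applying the divergence theorem to $\Box E^+_{(0)}\phi=\phi$ on $J^-(\Sigma_+)$, where $E^+_{(0)}\phi$ has compact slices and vanishes in the far past, turns the right-hand side into $\int_M\phi\,\mu_g$. Combining the two computations gives $\int_M(\tr\epsilon)\,\mu_g=0$, hence $[\ast(\tr\epsilon)]=0$ in $\hc^4(M)$ and therefore $\tr\epsilon\in\delta\fc^1(M)$, completing the argument. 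The only genuinely delicate points are the surface-independence of the flux and its identification with the bulk integral of $\phi$; both are the standard charge-conservation arguments for the scalar wave equation, but they are precisely where global hyperbolicity and the spacelike compact support are used.
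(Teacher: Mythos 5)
Your proof is correct, and while your reduction $v(E\epsilon^\flat)=\tfrac{1}{2}\,\d E_{(0)}(\tr\epsilon)$ and your ``if'' direction coincide with the paper's argument (the paper likewise arrives at $v(E\epsilon^\flat)=-\tfrac{1}{2}\delta\d E_{(1)}\beta$ when $\tr\epsilon=\delta\beta$), your ``only if'' direction takes a genuinely different route. The paper proceeds algebraically: writing the hypothesis as $\delta\alpha=\tfrac{1}{2}\d E_{(0)}(\tr\epsilon)$ with $\alpha\in\fsc^2(M)$, it notes $\d\delta\alpha=0$, invokes the decomposition $\alpha=E\xi+\delta\eta$ with $\xi\in\fcd^2(M)$ and $\eta\in\fsc^3(M)$ provided by the causally restricted cohomology analysis of \cite{Benini}, deduces $\delta\xi=\tfrac{1}{2}\d(\tr\epsilon)+\Box\zeta$ with $\zeta$ compactly supported, and concludes $\tr\epsilon=-2\delta\zeta$ from the injectivity of $\Box$ on compactly supported sections. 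You instead run a conserved-charge argument: since $M$ is connected and oriented, $\hc^4(M)\simeq\bR$ is detected by integration, so $\tr\epsilon\in\delta\fc^1(M)$ is equivalent to $\int_M(\tr\epsilon)\,\mu_g=0$; you identify this integral, via the divergence theorem applied to $\Box E^+_{(0)}(\tr\epsilon)=\tr\epsilon$ below a Cauchy surface $\Sigma_+$ lying in the future of the support, with the flux $\int_{\Sigma_+}\ast\,\d E_{(0)}(\tr\epsilon)$, and the co-exactness hypothesis annihilates this flux because $\iota_{\Sigma_+}^*(\ast\delta\omega)=\d\,\iota_{\Sigma_+}^*(\ast\omega)$ is an exact, compactly supported form on the boundaryless three-manifold $\Sigma_+$. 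What each approach buys: yours is more elementary and self-contained, needing only the support properties of Green operators, Stokes' theorem and the top-degree fact $\hc^4(M)\simeq\bR$ (which the paper itself uses in Remark \ref{remTTSC} and in its Minkowski example), and it exhibits the obstruction transparently as a single charge $\int_M(\tr\epsilon)\,\mu_g$; the paper's route is heavier but structural, since the decomposition of spacelike compact forms works in any form degree and does not rely on the obstruction space being the one-dimensional top-degree compactly supported cohomology. Two minor points in your write-up: the surface-independence of the flux is never actually needed (your Stokes argument already kills the flux on $\Sigma_+$ itself), and the divergence-theorem step tacitly uses the compactness of $J^+(\supp(\tr\epsilon))\cap J^-(\Sigma_+)$, a standard property of globally hyperbolic spacetimes that deserves explicit mention.
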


\begin{proof}
Let us start from the sufficient condition. Suppose $\epsilon\in\obsi$ is such that there exists $\beta\in\fc^1(M)$ 
for which $\delta\beta=\tr\epsilon$. From Lemma \ref{lemma_useful} it follows that 
$\tr E\epsilon=-E_\Box\tr\epsilon=-\delta E_{(1)}\beta$. 
Furthermore, since $\div\epsilon=0$, $\div I\epsilon=-(1/2)\nabla_S\tr\epsilon=-(1/2)\d\delta\beta$ 
and hence $\div E\epsilon=\div IEI\epsilon=E_\Box\div I\epsilon=-(1/2)E_{(1)}\d\delta\beta$. 
Therefore we deduce the following:
\begin{align*}
v(E\epsilon^\flat)=\div E\epsilon-\nabla_S\tr E\epsilon=-\frac{1}{2}E_{(1)}\d\delta\beta+\d\delta E_{(1)}\beta
=-\frac{1}{2}\delta\d E_{(1)}\beta,
\end{align*}
which shows that $v(E\epsilon^\flat)$ lies in $\delta\fsc^2(M)$.

It remains to check that the condition is also necessary. To this avail, assume $\epsilon\in\obsi$ is such that 
there exists $\alpha\in\delta\fsc^2(M)$ for which $\delta\alpha=v(E\epsilon^\flat)$. 
Along the same lines of the first part we get 
\begin{align*}
\div E\epsilon & =-\frac{1}{2}E_\Box\nabla_S\tr\epsilon=-\frac{1}{2}\d E_{(0)}(\tr\epsilon),\\
\nabla_S\tr E\epsilon & =-\nabla_SE_\Box\tr\epsilon=-\d E_{(0)}(\tr\epsilon).
\end{align*}
Therefore our hypothesis entails $\delta\alpha=v(E\epsilon^\flat)=(1/2)\d E_{(0)}(\tr\epsilon)$. 
From this identity it follows that $\d\delta\alpha=0$. According to \cite[Section 7.2]{Benini}, 
there exist $\xi\in\fcd^2(M)$ and $\eta\in\fsc^3(M)$ such that $E\xi+\delta\eta=\alpha$, 
hence we deduce that $\delta\xi=(1/2)\d(\tr\epsilon)+\Box\zeta$ for a suitable $\zeta\in\fc^2(M)$. 
Applying $\delta$ to both sides of the last identity we get $\Box(\tr\epsilon+2\delta\zeta)=0$, 
therefore $\tr\epsilon=-2\delta\zeta\in\delta\fc^2(M)$, thus concluding the proof. 
\end{proof}
 
\noindent Theorem \ref{GXchar}, together with the previous discussion, 
prompts us to introduce the following definition:

\begin{defi}\label{rad_obs}
We say that $[\epsilon]\in\clobs$ is a {\bf radiative classical observable for linearized gravity} 
if there exists a representative $\epsilon\in[\epsilon]$ such that $\tr\epsilon=\delta\beta$ 
for a suitable $\beta\in\fc^1(M)$. 
The collection $\radobs$ of these equivalence classes forms a vector subspace of $\clobs$.
\end{defi}

\noindent Notice that the definition does not depend on the choice of the representative since, for any $\epsilon$ of the form $K\alpha$, $\alpha\in\sc(S^2T^*M)$, where $K$ is as in \eqref{eqLinGravOp}, it holds that $v(EK\alpha)\in\delta\fsc^2(M)$. As usual, $E$ denotes the causal propagator for $P=(\Box-2\Riem)I$. 

At this stage we need to answer an important question: Is there a spacetime where $\radobs$ is smaller than $\clobs$? We show that, contrary to what implicitly assumed in \cite{Geroch:1978ur} and \cite{Ashtekar:1982aa} this is indeed possible. Actually we show two explicit cases: Minkowski spacetime where $\radobs=\clobs$ and an axisymmetric spacetime where instead $\radobs\subsetneq\clobs$.

\paragraph{Minkowski spacetime}
Let us thus consider the simplest example of an asymptotically flat spacetime and let us work with the standard global Cartesian coordinates $x^i$, $i=0,...,3$, so that $M\equiv\mathbb{R}^4$ endowed with metric $\eta=\mathrm{diag}(-1,1,1,1)$. Let $\epsilon\in\obsi$ on Minkowski spacetime. Our goal is to prove that $\tr\epsilon=\delta\beta$ with $\beta\in\fc^1(\bR^4)$. An alternative way to rewrite this condition is to require that $\ast(\tr\epsilon)$ is an exact compactly supported $4$-form. On account of the non-degeneracy of the pairing between $\hc^4(\bR^4)$ and $\h^0(\bR^4)$, this is equivalent to state that $\int_{\bR^4}\ast(\tr\epsilon)=0$. To show that this is indeed the case, it is better to work explicitly with indices. Since $\epsilon\in\obsi$, $\partial_a\epsilon^{ab}=0$ for all $b=0,\dots,3$ and thus, for example for the case $b=1$, it holds
\begin{align*}
\partial_1\epsilon^{11}=-\partial_{0}\epsilon^{01}-\partial_2\epsilon^{21}-\partial_3\epsilon^{31}.
\end{align*}
Therefore it follows that 
\begin{align*}
\epsilon^{11}(x^0,x^1,x^2,x^3)
=-\int\limits_{-\infty}^{x^1}\big(\partial_{0}\epsilon^{01}(x^0,y^1,x^2,x^3)
+\partial_2\epsilon^{21}(x^0,y^1,x^2,x^3)+\partial_3\epsilon^{31}(x^0,y^1,x^2,x^3)\big)\,\d y^1.
\end{align*}
Adapting the values of the indices, a similar formula can be written for $\epsilon^{aa}$, $a=0,2,3$. 
We can now compute
$$\int\limits_{\bR^4}\ast(\tr\epsilon)=\int\limits_{\bR^4}\eta_{aa}\epsilon^{aa}\,\d^4x,$$
as the sum of four integrals, which actually do vanish separately. 
To show this, let us just focus on the contribution from $\epsilon^{11}$:
\begin{align*}
\int\limits_{\bR^4}\epsilon^{11}\,\d^4x
& =-\int\limits_{\bR^4}\d^4x\int\limits_{-\infty}^{x^1}\big(\partial_{0}\epsilon^{01}(x^0,y^1,x^2,x^3)
+\partial_2\epsilon^{21}(x^0,y^1,x^2,x^3)+\partial_3\epsilon^{31}(x^0,y^1,x^2,x^3)\big)\,\d y^1.
\end{align*}
Each of the three contributions to $\int_{\bR^4}\epsilon^{11}\d^4x$ vanishes. 
For example, the first contribution can be rewritten as 
\begin{align*}
\int\limits_{\bR^4}\d^4x\int\limits_{-\infty}^{x^1}\partial_{0}\epsilon^{01}(x^0,y^1,x^2,x^3)\,\d y^1
=\int\limits_\bR\d x^1\int\limits_\bR\d x^2\int\limits_\bR\d x^3\int\limits_{-\infty}^{x^1}\d y^1
\int\limits_\bR\partial_{0}\epsilon^{01}(x^0,y^1,x^2,x^3)\,\d x^0=0.
\end{align*}
The integral along the variable of derivation entails evaluation of the components of $\epsilon^{ab}$ at $\pm\infty$. 
On account of the compact support this always vanishes. 
Therefore $\int_{\bR^4}\epsilon^{11}\d^4x$ vanishes as well 
and the same holds true for $\epsilon^{aa}$, $a=0,2,3$, by the same argument. 
Hence $\int_{\bR^4}\ast(\tr\epsilon)=0$ or, in other words, 
all observables for linearized gravity on Minkowski spacetime are of radiative type. 
Notice that in our analysis a key role is played by the geometry of the background. 
Even mild changes in the metric coefficients would invalidate our line of reasoning 
and hence no positive result could be obtained.

\paragraph{Axisymmetric spacetime}
Let $M$ be a globally hyperbolic spacetime, which topologically looks like $\mathbb{R}^3\times\mathbb{S}^1$, and let us consider thereon the standard coordinates $(t,x,y,\varphi)$. Let us suppose that the line element is of the form $\d s^2=g_{ij}\d x^i\d x^j+g_{\varphi\varphi}\d\varphi^2$, $i=t,x,y$, where all coefficients are smooth and independent of $\varphi$. Hence $(M,g)$ admits a Killing field along $\mathbb{S}^1$. Let us further notice that the tangent bundle is trivial and thus it is legitimate to consider the components of any $\epsilon\in\sc(S^2TM)$ as global sections. Since we want to consider an element of $\epsilon\in\obsi$, we set 
\begin{align*}
\epsilon^{\varphi\varphi}=\frac{1}{2\pi\sqrt{|g|}g_{\varphi\varphi}}f(t)f(x)f(y), 
&& \epsilon^{ab}=0\quad\mbox{for}(a,b)\neq(\varphi,\varphi),
\end{align*}
where $|g|$ here stands for the absolute value of the determinant of the metric. Notice that non-degeneracy of the metric entails that $g_{\varphi\varphi}$ is nowhere vanishing. The function $f\in \Cc(\mathbb{R})$ is chosen such that its integral along $\bR$ is equal to $1$. Notice that per construction $\epsilon$ is compactly supported and gauge invariant, namely $\div\epsilon=0$. In fact the only component which might have a non vanishing contribution is $(\div\epsilon)^\varphi$, for which we have $(\div\epsilon)^\varphi=\nabla_\varphi\epsilon^{\varphi\varphi}=0$ since $\epsilon^{\varphi\varphi}$ is independent of $\varphi$ and because of the form of the line element $\d s^2$. 
Let us now consider $\tr\epsilon=g_{\varphi\varphi}\epsilon^{\varphi\varphi}=(2\pi\sqrt{|g|})^{-1}f(t)f(x)f(y)$. In order to show that $\tr\epsilon$ is not coexact, we can use the same argument as in the previous example, namely we compute
$$\int\limits_M\ast(\tr\epsilon)=\int\limits_\bR\int\limits_\bR\int\limits_\bR\int\limits_{\bS^1}\frac{1}{2\pi}f(x)f(y)f(t)\,\d\varphi\,\d y\,\d x\,\d t=1.$$
In other words we have constructed explicitly an equivalence class $[\epsilon]\in\clobs$ which is not of radiative type. We stress that, as far as \eqref{GXc} is concerned, asymptotic flatness or simplicity of the background is not a necessary prerequisite and one could look for solutions of such equation independently. Yet, for the sake of completeness, we mention that axisymmetric asymptotically simple spacetimes are known to exist and they have been extensively studied in the literature, see \cite{Bicak:1984en} and references therein.

\vskip .3cm

To conclude the section, we stress that \eqref{GXb}, that is the residual gauge fixing, is nothing but a rather involved partial differential equation which can be solved with the argument given in the appendix of \cite{Geroch:1978ur} and it does not yield any problem in terms of implementation and support properties. Hence we can slightly adapt the result of \cite{Geroch:1978ur} in order to account for the obstructions written above:
\begin{theo}
Let $(M,g)$ be a an asymptotically flat spacetime whose associated unphysical spacetime is $(\widetilde M, \widetilde g)$ with associated conformal factor $\Xi$. Let $[E\epsilon]\in\solsc/\gaugesc$ be a gauge equivalence class of spacelike compact solutions of the linearized Einstein's equations, where $\epsilon$ is any representative of $[\epsilon]\in\radobs$ and $E$ is the causal propagator for $P=(\Box-2\Riem)I$. Then there exists $h^\prime\in[E\epsilon^\flat]$ which is {\bf asymptotically regular}, that is $\tau=\Xi h^\prime$ admits an extension to $\widetilde M$ whose restriction to $\Im^+$ is smooth. Furthermore both $\tau_{ab}n^a$ and $\tau_{ab}n^a n^b$ do admit a vanishing limit to null infinity.
\end{theo}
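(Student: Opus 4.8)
The plan is to isolate, within the gauge class $[E\epsilon^\flat]$, a representative $h^\prime$ that lies in the full GX-gauge of Definition \ref{GXgauge}, and then to transport $\tau=\Xi h^\prime$ across $\Im^+$ by means of the well-posed Cauchy problem of Remark \ref{hopedieslastbutonlyIresurrect}. Since $[\epsilon]\in\radobs$, Definition \ref{rad_obs} lets me pick a representative with $\tr\epsilon=\delta\beta$, $\beta\in\fc^1(M)$, whence Theorem \ref{GXchar} guarantees that $v(E\epsilon^\flat)$ is coexact. This is exactly the condition that makes \eqref{GXa} solvable, so the only genuinely obstructed step of the classical Geroch--Xanthopoulos construction is under control from the outset.

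Concretely, I would first achieve \eqref{GXa} by an explicit residual de Donder gauge transformation. Following the sufficiency computation in the proof of Theorem \ref{GXchar}, coexactness reads $v(E\epsilon^\flat)=-\tfrac{1}{2}\delta\d E_{(1)}\beta$, so the equation \eqref{GXc}, i.e. $\delta\d\chi=2\,v(E\epsilon^\flat)$, is solved by $\chi=-E_{(1)}\beta\in\ssc(T^\ast M)$. Because $E_{(1)}$ is the causal propagator of the Laplace--de Rham operator, $\Box\chi=-\Box E_{(1)}\beta=0$, so $\chi$ is a genuine residual gauge transformation in the sense of Remark \ref{remResEauge}; hence $h^\prime\doteq E\epsilon^\flat+\nabla_S\chi$ remains in the de Donder gauge, stays spacelike compact, and now satisfies $y_a(h^\prime)=\Xi^{-1}v_a(h^\prime)=0$, that is \eqref{GXa}. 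The remaining residual freedom is then spent to enforce \eqref{GXb}: as recalled just before the statement, this last equation is solvable with no obstruction by the argument in the appendix of \cite{Geroch:1978ur}, and the corresponding gauge transformation preserves both \eqref{GXa} and spacelike compactness. Thus $h^\prime\in[E\epsilon^\flat]$ sits in the full GX-gauge.

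With $h^\prime$ in GX-gauge, the Geroch--Xanthopoulos reduction \eqref{monsterGX} together with \eqref{GXb} holds for $\tau=\Xi h^\prime$, and its coefficients are regular all the way to $\Im^+$ since no inverse power of $\Xi$ survives. I would then read off the initial data of the field vector $F=(\tau_{ab},\tau_a,\sigma,\tau,\nabla_S\tau,\Box\tau)$ on a Cauchy surface $\widetilde\Sigma$ of $\widetilde M$; spacelike compactness of $h^\prime$ together with smoothness of $\Xi$ makes these data smooth and compactly supported, so the retarded/advanced Green operators $\widetilde G^\pm$ of Remark \ref{hopedieslastbutonlyIresurrect} yield a unique, smooth, spacelike compact solution on all of $\widetilde M$. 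By uniqueness this solution coincides with the $F$ built from $\Xi h^\prime$ on $\psi(M)$, and hence provides the sought smooth extension of $\tau$ up to and across $\Im^+$. I expect this extension to be the main obstacle: the delicate points are verifying compactness of the induced data on $\widetilde\Sigma$ and matching the auxiliary-system solution with the physical $\Xi h^\prime$ on $\psi(M)$ by uniqueness, the absence of singular coefficients being precisely the payoff of the GX-gauge.

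Finally, the fall-off of $\tau_{ab}n^a$ and $\tau_{ab}n^an^b$ follows directly from the defining relations. From $\tau_a=\Xi^{-1}n^b h^\prime_{ab}$ and $\tau_{ab}=\Xi h^\prime_{ab}$ one gets $\tau_{ab}n^a=\Xi^2\tau_b$ and hence $\tau_{ab}n^an^b=\Xi^2 n^b\tau_b$ on $\psi(M)$. Since $\tau_a$ is a component of the smoothly extended $F$, it stays bounded near $\Im^+$, while $\Xi$ vanishes there; thus both expressions tend to $0$ as one approaches null infinity, which completes the argument.
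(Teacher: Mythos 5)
Your proposal is correct and follows essentially the same route as the paper, which states this theorem without a detailed proof and simply assembles the preceding ingredients: Theorem \ref{GXchar} (achievability of \eqref{GXa} precisely for radiative observables), the Geroch--Xanthopoulos appendix argument for the residual condition \eqref{GXb}, and the well-posed Cauchy problem of Remark \ref{hopedieslastbutonlyIresurrect} to extend $\tau=\Xi h^\prime$ across $\Im^+$, with the fall-off of $\tau_{ab}n^a$ and $\tau_{ab}n^an^b$ read off from the definitions of $\tau_{ab}$ and $\tau_a$. Your one addition beyond the paper --- the closed-form residual gauge vector $\chi=-E_{(1)}\beta$, which satisfies $\Box\chi=0$, is spacelike compact, and (as one checks via Remark \ref{remTTSC}) in fact places $h^\prime$ simultaneously in the TT gauge and in the gauge \eqref{GXa} --- is correct and merely makes explicit the existence statement the paper leaves implicit.
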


We remark that our concept of radiative observables is related to that of radiative degrees of freedom as used for example in \cite{Ashtekar:1981hw} although, in this paper, the focus is on the structure of the kinematical arena on null infinity, whereas we are interested more on those observables which can be mapped to the boundary compatibly with the dynamics and with gauge invariance.

\subsubsection{\label{sec3.2}Projecting to the boundary}

We have all ingredients to define a well-behaved projection of the classical radiative observables from the bulk to the boundary. Our approach extends the one already discussed in \cite{Ashtekar:1982aa} although we take into account the obstruction outlined in Theorem \ref{GXchar}:
\begin{theo}\label{b2b-cl}
Let $\left(\radobs,\tau\right)$ be the space of radiative classical observables as in \eqref{obs} endowed with the presymplectic form defined in Proposition \ref{prpPresympl}. Then there exists a map $\Upsilon:\radobs\to\mathcal{S}(\Im^+)$ defined by 
\begin{equation}\label{proj}
\Upsilon([\epsilon])_{ab}=\gamma_{ab}-\frac{1}{2}\gamma_{cd}q^{cd}q_{ab},
\end{equation}
where $\iota:\Im^+\to\widetilde M$ is the embedding of $\Im^+$ into $\widetilde M$, $\gamma=\iota^*\tau$, $\tau$ is the extension to $\widetilde M$ of $\Xi h^\prime$ obtained solving \eqref{monsterGX} together with \eqref{GXb} and $h^\prime\in[E\epsilon^\flat]$ is a solution of \eqref{eqLinGrav} in the GX-gauge built out of $[\epsilon]$. Furthermore, for all $[\epsilon],[\epsilon^\prime]\in\radobs$, it holds that $\sigma_\Im(\Upsilon[\epsilon],\Upsilon[\epsilon^\prime])=\tau([\epsilon],[\epsilon^\prime])$.
\end{theo}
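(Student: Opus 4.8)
The plan is to unwind the construction of $\Upsilon$ one step at a time and then reduce the symplectic identity to a Cauchy-surface deformation argument. First I would fix $[\epsilon]\in\radobs$ and, by Definition \ref{rad_obs}, choose a representative $\epsilon$ with $\tr\epsilon=\delta\beta$, $\beta\in\fc^1(M)$. By Theorem \ref{GXchar} this guarantees $v(E\epsilon^\flat)\in\delta\fsc^2(M)$, so the discussion around \eqref{GXc} produces $\chi\in\ssc(T^\ast M)$ such that $h'=E\epsilon^\flat+\nabla_S\chi$ satisfies \eqref{GXa}; solving the residual equation \eqref{GXb} and invoking the well-posed Cauchy problem of Remark \ref{hopedieslastbutonlyIresurrect} then yields the smooth, spacelike compact extension $\tau$ of $\Xi h'$ to $\widetilde M$. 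I would next verify that $\gamma=\iota^\ast\tau$, after the projection \eqref{proj}, lands in $\mathcal{S}(\Im^+)$: the condition $\Upsilon([\epsilon])_{ab}n^a=0$ follows from the asymptotic regularity theorem established above, which gives $\tau_{ab}n^a\to 0$ on $\Im^+$ together with $q_{ab}n^a=0$; the tracelessness $\Upsilon([\epsilon])_{ab}q^{ab}=0$ is immediate from \eqref{proj} using $q_{ab}q^{ab}=2$ on the spherical cross-section; and the two finiteness requirements in \eqref{symplbound} follow from the support properties of the Green operators $\widetilde G^\pm$ of Remark \ref{hopedieslastbutonlyIresurrect}, which force $\gamma$ and $\partial_u\gamma$ to have integrable support on $\Im^+$, exactly as in \cite{Dappiaggi:2011cj, Siemssen:2011gma}.

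Next I would establish that $\Upsilon$ does not depend on the choices made. Changing the representative $\epsilon\mapsto\epsilon+K\alpha$ alters $E\epsilon^\flat$ only by $E(K\alpha)^\flat=2\nabla_SE_\Box\div I\alpha^\flat$, a spacelike compact pure gauge term, so the class $[E\epsilon^\flat]\in\solsc/\gaugesc$ is unchanged, exactly as in the proof of Theorem \ref{thmSolModEauge}. The genuinely delicate point is independence of the chosen GX representative $h'$: two such representatives differ by a residual gauge $\nabla_S\chi'$ with $\Box\chi'=0$ that is still compatible with \eqref{GXa}, and I would have to check that its conformally rescaled, extended and projected image vanishes identically in $\mathcal{S}(\Im^+)$. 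This reduces to showing that $\iota^\ast(\Xi\nabla_S\chi')$, once its $n$-contractions and its $q$-trace are removed, is zero on $\Im^+$, a computation that hinges on the falloff of $\chi'$ toward $\Im^+$ and on the precise $\Xi$-weights, and I expect it to use the same estimates needed for the finiteness conditions.

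For the symplectic identity I would follow the standard bulk-to-boundary strategy. By Proposition \ref{prpObsSCSol} and Proposition \ref{prpPresympl}, $\tau([\epsilon],[\epsilon'])$ equals the flux through any Cauchy surface $\Sigma$ of the presymplectic current $\omega(h,h')$ associated with the two solutions $h=E\epsilon^\flat$ and $h'=E\epsilon'^\flat$. Because both are on-shell, $\omega$ is divergence free, so its flux is independent of $\Sigma$; passing to $(\widetilde M,\widetilde g)$ through $\tau=\Xi h'$ the current extends smoothly to $\Im^+$ by the asymptotic regularity theorem, and Stokes' theorem applied to the region bounded by $\Sigma$ and $\Im^+$ equates the Cauchy flux with the integral of $\omega$ over $\Im^+$, the contributions at $i^+$ and at the past end of $\Im^+$ vanishing by the spacelike compact support and the decay toward timelike infinity. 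The last step is to recognise this boundary integral as $\sigma_\Im$ evaluated on the $\gamma$'s and then to observe that only the transverse-traceless part survives: the $n$-contracted and $q$-trace pieces drop out because $\sigma_\Im$ in \eqref{symplbound2} contracts solely with $q^{ac}q^{bd}$ and differentiates along $n$, whence $\sigma_\Im(\gamma,\gamma')=\sigma_\Im(\Upsilon[\epsilon],\Upsilon[\epsilon'])$. I expect the main obstacle to be the conformal bookkeeping in this third step: rewriting $\omega(h,h')$ in the rescaled variables, confirming that every inverse power of $\Xi$ cancels so that the current is genuinely smooth up to $\Im^+$, and verifying that no anomalous boundary term is generated at $i^+$. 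This is precisely where the GX-gauge system \eqref{monsterGX} and the vanishing limits of $\tau_{ab}n^a$ and $\tau_{ab}n^an^b$ must be used in full.
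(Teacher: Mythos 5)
Your proposal follows the same skeleton as the paper's proof: pass to a GX-gauge representative (Theorem \ref{GXchar} together with the discussion around \eqref{GXc}), extend $\Xi h^\prime$ across $\Im^+$ by solving the well-posed system \eqref{monsterGX}, \eqref{GXb} as in Remark \ref{hopedieslastbutonlyIresurrect}, project via \eqref{proj}, and establish the symplectic identity by a bulk-versus-boundary flux comparison. The difference is one of packaging: the paper's proof is essentially a recollection of citations --- membership of $\Upsilon([\epsilon])$ in $\mathcal{S}(\Im^+)$ is reduced to the support argument giving vanishing on $(-\infty,u_0)\times\bS^2$ in a Bondi frame plus a verbatim repetition of the proof of \cite[Theorem 4.4]{Dappiaggi:2011cj}, and the identity $\sigma_\Im(\Upsilon[\epsilon],\Upsilon[\epsilon^\prime])=\tau([\epsilon],[\epsilon^\prime])$ is quoted from \cite[Theorem 2]{Ashtekar:1982aa} --- whereas you unpack those citations into sketches. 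Your Cauchy-surface deformation/Stokes argument is precisely the content of the Ashtekar--Magnon theorem, so nothing is conceptually different there, and your explicit algebraic checks ($q_{ab}n^a=0$, $q_{ab}q^{ab}=2$, the vanishing limit of $\tau_{ab}n^a$ from the asymptotic regularity theorem) are correct and in fact more detailed than what the paper writes.

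Two concrete points where your write-up falls short of a complete argument. First, your justification of the finiteness requirements in \eqref{symplbound} is wrong as stated: the support properties of $E$ and of the Green operators $\widetilde G^\pm$ only yield that $\Upsilon([\epsilon])$ vanishes on $(-\infty,u_0)\times\bS^2$, i.e.\ at the \emph{past} end of $\Im^+$. They say nothing about the future end: the causal future of a compact set reaches $i^+$, so the support of $\gamma$ on $\Im^+$ is never compact, and the decay as $u\to+\infty$ needed for $(\lambda,\lambda)_\Im<\infty$ and $(\partial_u\lambda,\partial_u\lambda)_\Im<\infty$ is a genuine analytic estimate near $i^+$ --- exactly what the repetition of the proof of \cite[Theorem 4.4]{Dappiaggi:2011cj} supplies; ``integrable support'' is not something the Green operators can force. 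Second, the well-definedness of $\Upsilon$ --- independence of the choice of GX representative $h^\prime$ within $[E\epsilon^\flat]$ and of the representative $\epsilon\in[\epsilon]$ --- is correctly identified by you as the delicate point, but it is left open in your proposal; in fairness, the paper's proof glosses over it too, deferring it implicitly to the Geroch--Xanthopoulos and Ashtekar--Magnon constructions it cites, so this is a gap relative to a self-contained argument rather than a divergence from the paper's route.
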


\begin{proof}
The proof is a recollection of already known results. To start with, on account of the construction of Geroch and Xanthopoulos we know that for all $[\epsilon]\in\radobs$, $\Upsilon([\epsilon])\in\s(S^2T^*\Im^+)$ and $\Upsilon([\epsilon])_{ab}n^a=0$ and $\Upsilon([\epsilon])_{ab}q^{ab}=0$. Furthermore, since every representative of $[\epsilon]$ is compactly supported, the support properties of the causal propagator $E$ as well as the extensibility of the solution to $\widetilde M$  entail that there exists $u_0\in\mathbb{R}$ such that, in the Bondi frame, $\Upsilon([\epsilon])=0$ on $\left(-\infty,u_0\right)\times\bS^2$.  Such property is useful to prove that both $(\tau,\tau)_\Im$ and $(\partial_u\tau,\partial_u\tau)_\Im$ are finite. As a matter of fact we can now following slavishly the same proof used in \cite[Theorem 4.4]{Dappiaggi:2011cj} for the vector potential from which the sought statement descends. Hence $\Upsilon$ maps $\mathcal{E}$ in $\mathcal{S}(\Im^+)$. Furthermore, on account of \cite[Theorem 2]{Ashtekar:1982aa}, $\tau([\epsilon],[\epsilon^\prime])=\sigma_\Im(\Upsilon[\epsilon],\Upsilon[\epsilon^\prime])$.  
\end{proof}

Notice that $\Upsilon([\epsilon])$ has only two independent components compatibly with our expectations on the degrees of freedom for linearized gravity.

\subsection{\label{sec3.3}Hadamard states for linearized gravity}
Goal of this section is to extend the classical bulk-to-boundary correspondence defined in Theorem \ref{b2b-cl} to the quantum level and to exploit the outcome to construct explicitly Hadamard states for linearized gravity. The first part of this programme is rather straightforward with all the building blocks we have. As a starting point we construct the algebra of observables both for the bulk theory and for the one living intrinsically on null infinity.

\begin{defi}
Let $\mathcal{T}(\clobs)$ be the tensor algebra built out of \eqref{obs} as $\mathcal{T}(\clobs)=\bigoplus_n\clobs^{\otimes n}_\mathbb{C},$
where the zeroth-tensor power is nothing but $\bC$ and the subscript $\mathbb{C}$ denotes complexification. We call {\em algebra of observables for linearized gravity} the quotient $\mathcal{F}(\clobs)$ between $\mathcal{T}(\clobs)$ and $\mathcal{I}$, the ideal generated by elements of the form $-i\tau([\epsilon],[\epsilon^\prime])\oplus [\epsilon]\otimes([\epsilon^\prime]-[\epsilon^\prime]\otimes[\epsilon])$, where $\tau$ is the presymplectic form \eqref{tau}. This is a $*$-algebra if endowed with the $*$-operation induced by complex conjugation. At the same time we call {\em algebra of radiative observables for linearized gravity} the algebra $\mathcal{F}(\radobs)$ built replacing $\clobs$ with $\radobs$.
\end{defi}

\noindent Notice that, since $\tau$ is possibly degenerate, $\mathcal{F}(\clobs)$ is not guaranteed to be a simple algebra. In other words it may possess a non trivial center. This feature gives rise to potential problems in interpreting the theory in the framework of the principle of general local covariance, as it has been already thoroughly discussed for Abelian gauge theories \cite{Benini:2013ita, Benini:2012vi, SDH12}. Yet, for the sake of constructing states, central elements do not play a distinguished role. We can define a counterpart of $\mathcal{F}(\radobs)$ on the boundary out of the symplectic space $\mathcal{S}(\Im^+)$ defined in \eqref{symplbound}.

\begin{defi}
Let $\mathcal{T}(\Im^+)$ be the tensor algebra built out of \eqref{symplbound} as $\mathcal{T}(\Im^+)=\bigoplus_n\mathcal{S}(\Im^+)^{\otimes n}_\mathbb{C},$
where the zeroth-tensor power is nothing but $\bC$ and the subscript $\mathbb{C}$ denotes complexification. We call {\em algebra of observables on null infinity} the quotient $\mathcal{F}(\Im^+)$ between $\mathcal{T}(\Im^+)$ and $\mathcal{I}_\Im$, the ideal generated by elements of the form $-i\sigma_\Im(\lambda,\lambda^\prime)\oplus (\lambda\otimes\lambda^\prime-\lambda^\prime\otimes\lambda)$, where $\sigma_\Im$ is the symplectic form \eqref{symplbound2}. This is a $*$-algebra if endowed with the $*$-operation induced by complex conjugation.
\end{defi}

\noindent We can relate the two algebra of observables we built as follows:

\begin{propo}\label{b2b-q}
There exists a $*$-homomorphism $\mathcal{F}(\radobs)\to\mathcal{F}(\Im^+)$ specified on the generators by $[\epsilon]\mapsto\Upsilon[\epsilon]$, where $\Upsilon$ is defined in \eqref{proj}. 
With a slight abuse of notation, we use $\Upsilon$ to denote also the $*$-homomorphism defined here.
\end{propo}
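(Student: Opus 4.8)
The statement is an instance of the standard procedure by which a linear map intertwining the (pre)symplectic structures lifts to a $*$-homomorphism of the associated CCR algebras, so the plan is to invoke the universal property of the tensor algebra and then to verify compatibility with the two defining ideals. As a preliminary I would record the three properties of $\Upsilon:\radobs\to\mathcal{S}(\Im^+)$ already supplied by Theorem \ref{b2b-cl}: it is $\mathbb{R}$-linear (each operation entering \eqref{proj} -- the causal propagator, the selection of a GX-gauge representative, multiplication by $\Xi$, the resolution of the extension system \eqref{monsterGX} together with \eqref{GXb}, the pull-back $\iota^\ast$ and the trace-free projection -- is linear), it is well-defined on the quotient $\radobs$, and it intertwines the presymplectic forms, $\sigma_\Im(\Upsilon[\epsilon],\Upsilon[\epsilon^\prime])=\tau([\epsilon],[\epsilon^\prime])$ for all $[\epsilon],[\epsilon^\prime]\in\radobs$. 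These three facts constitute the entire substantive input; what remains is purely algebraic.

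First I would complexify $\Upsilon$ and appeal to the universal property of the tensor algebra to obtain a unique unital algebra homomorphism $\hat\Upsilon:\mathcal{T}(\radobs)\to\mathcal{T}(\Im^+)$ restricting to $\Upsilon$ on the generators, acting on homogeneous elements by $[\epsilon_1]\otimes\cdots\otimes[\epsilon_n]\mapsto\Upsilon[\epsilon_1]\otimes\cdots\otimes\Upsilon[\epsilon_n]$ and as the identity on the scalar summand $\mathbb{C}$. Since $\Upsilon$ is real-linear while the $*$-operation on both algebras is the antilinear, order-reversing extension of complex conjugation, one checks directly that $\hat\Upsilon$ commutes with $*$; hence $\hat\Upsilon$ is a $*$-homomorphism of the two tensor $*$-algebras.

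The key step is to show that $\hat\Upsilon$ descends to the quotients, that is $\hat\Upsilon(\mathcal{I})\subseteq\mathcal{I}_\Im$. Being an algebra homomorphism, $\hat\Upsilon$ carries $\mathcal{I}$ into $\mathcal{I}_\Im$ as soon as it maps the generators of $\mathcal{I}$ into $\mathcal{I}_\Im$, so I would simply evaluate
\begin{equation*}
\hat\Upsilon\big(-i\tau([\epsilon],[\epsilon^\prime])\oplus([\epsilon]\otimes[\epsilon^\prime]-[\epsilon^\prime]\otimes[\epsilon])\big)=-i\tau([\epsilon],[\epsilon^\prime])\oplus(\Upsilon[\epsilon]\otimes\Upsilon[\epsilon^\prime]-\Upsilon[\epsilon^\prime]\otimes\Upsilon[\epsilon]),
\end{equation*}
and then use the intertwining relation $\tau([\epsilon],[\epsilon^\prime])=\sigma_\Im(\Upsilon[\epsilon],\Upsilon[\epsilon^\prime])$ to rewrite the right-hand side as $-i\sigma_\Im(\Upsilon[\epsilon],\Upsilon[\epsilon^\prime])\oplus(\Upsilon[\epsilon]\otimes\Upsilon[\epsilon^\prime]-\Upsilon[\epsilon^\prime]\otimes\Upsilon[\epsilon])$, which is precisely a generator of $\mathcal{I}_\Im$ with $\lambda=\Upsilon[\epsilon]$ and $\lambda^\prime=\Upsilon[\epsilon^\prime]$. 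Consequently $\hat\Upsilon$ induces a well-defined unital $*$-homomorphism on the quotients, denoted again $\Upsilon:\mathcal{F}(\radobs)\to\mathcal{F}(\Im^+)$, acting on generators by $[\epsilon]\mapsto\Upsilon[\epsilon]$.

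I do not expect a genuine obstacle here, as the substantive content -- the well-definedness and the symplectic intertwining of $\Upsilon$ -- is already delivered by Theorem \ref{b2b-cl}, and the algebraic lifting is routine. The only point deserving a word of caution is that the possible degeneracy of $\tau$, and the attendant nontrivial center of $\mathcal{F}(\radobs)$, does not interfere: the argument requires only the ideal inclusion $\hat\Upsilon(\mathcal{I})\subseteq\mathcal{I}_\Im$ and makes no claim of injectivity for the resulting $*$-homomorphism.
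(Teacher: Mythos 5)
Your proposal is correct and follows essentially the same route as the paper: both arguments take the linearity, well-definedness and symplectic intertwining of $\Upsilon$ from Theorem \ref{b2b-cl} as the only substantive input, and then lift $\Upsilon$ to the field algebras by checking compatibility with the defining ideals and with the $*$-operation. The paper merely states this lifting in two lines, whereas you spell out the routine details (universal property of the tensor algebra, mapping of ideal generators, the degeneracy caveat), which is a faithful expansion rather than a different argument.
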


\begin{proof}
In order to prove that $\iota$ is an homomorphism it suffices to show that it preserves the presymplectic form when evaluated on all generators. Both these conditions have been already proven in Theorem \ref{b2b-cl}. To conclude we notice that all operations involved do not affect the complex conjugations and thus we have constructed a $*$-homomorphism.
\end{proof}

\noindent Having set up the bulk-to-boundary correspondence for the algebra of observables, we are ready to discuss the construction of Hadamard states. Let us recall that, for any unital $*$-algebra $\mathcal{A}$, an algebraic state is a map $\omega:\mathcal{A}\to\mathbb{C}$ such that $\omega(a^*a)\geq 0$ for all $a\in\mathcal{A}$ and $\omega(e)=1$, where $e\in\mathcal{A}$ is the identity element. The role of a state is to allow us to recover the standard probabilistic interpretation of a quantum system via the GNS theorem which associates to each pair $(\mathcal{A},\omega)$ a triple $(\mathcal{D}_\omega,\pi_\omega,\Omega_\omega)$ consisting of a dense subspace $\mathcal{D}_\omega$ of a Hilbert space $\mathcal{H}_\omega$, a representation $\pi_\omega$ of $\mathcal{A}$ in terms of linear operators on $\mathcal{D}_\omega$ and a cyclic unit-norm vector $\Omega_\omega$ such that $\mathcal{D}_\omega=\overline{\pi_\omega(\mathcal{A})\Omega_\omega}$. This triple is unique up to unitary equivalence and, for any $a\in\mathcal{A}$, $\omega(a)=\left(\Omega_\omega,\pi_\omega(a)\Omega_\omega\right)_{\mathcal{H}_\omega}$, where $(\cdot,\cdot)_{\mathcal{H}_\omega}$ is the inner product in $\mathcal{H}_\omega$. If the role of $\mathcal{A}$ is played by the algebra of fields, such as in our case either $\mathcal{F}(\clobs)$ or $\mathcal{F}(\radobs)$, we can focus our attention on a special subclass which is often used in theoretical and mathematical physics. We are referring to the quasi-free/Gaussian states which are completely defined in terms of their n-point correlation functions $\omega_n$. In particular $\omega_n=0$ if $n$ is odd whereas, if $n$ is even, than
\begin{equation}\label{Gaussian}
\omega_n(\lambda_1\otimes...\otimes\lambda_n)=\sum\limits_{\pi_n\in S_n}\prod\limits_{i=1}^{n/2}\omega_2(\lambda_{\pi_n(2i-1)}\otimes\lambda_{\pi_n(2i)}),
\end{equation}
where $\lambda_i\in\mathcal{E}$, $i=1,2,...,n$, whereas $S_n$ denotes the ordered permutations of $n$ elements.

The explicit identification of a state for a quantum field theory is usually a rather daunting quest unless the symmetries of the background are sufficient to help us in singling out a preferred candidate, {\em e.g.} the vacuum in Minkowski spacetime, whose existence and uniqueness is a by-product of Poincar\'e invariance. Yet, since a generic curved background might even have a trivial isometry group, one has to look for a different procedure to construct explicitly a quantum state. On the class of globally hyperbolic and asymptotically flat spacetimes, we will show that Proposition \ref{b2b-q} provides a tool to induce states for the bulk theory starting from the boundary counterpart. The advantage of working with a theory defined on null-infinity is two-fold: On the one hand $\Im^+$ is nothing but $\bR\times\mathbb{S}^2$ and thus along the null $\bR$-direction, one can perform a Fourier transform, thus working in terms of modes. On the other hand, as discussed in Section \ref{sec3.1}, a theory on $\Im^+$ is invariant under a suitable action of the BMS group. The latter plays the same role of the Poincar\'e group in Minkowski spacetime in helping us to single out a distinguished state at null infinity. More precisely, along the same lines of \cite{Dappiaggi:2005ci,Dappiaggi:2010gt,Dappiaggi:2011cj}, the following proposition holds true:

\begin{propo}\label{2pt}
The map $\omega_2^\Im:\mathcal{S}(\Im^+)_{\mathbb{C}}\otimes\mathcal{S}(\Im^+)_{\mathbb{C}}\to\mathbb{R}$ such that 
\begin{equation}\label{eq2pt}
\omega_2^\Im(\lambda\otimes \lambda^\prime)=-\frac{1}{\pi}\lim_{\epsilon\to 0}\int\limits_{\mathbb{R}^2\times\mathbb{S}^2}\frac{\lambda_{ab}(u,\theta,\varphi)\lambda^{\prime}_{cd}(u^\prime,\theta,\varphi)q^{ac}q^{bd}}{(u-u^\prime-i\epsilon)^2}\,\d u\,\d u^\prime\,\d\bS^2(\theta,\varphi),
\end{equation}
where $d\bS^2(\theta,\varphi)$ is the standard line element on the unit $2$-sphere, unambiguously defines a quasi-free state $\omega^\Im:\mathcal{F}(\Im^+)\to\mathbb{C}$. Furthermore:
\begin{enumerate}
\item $\omega^\Im$ induces via pull-back a quasi-free bulk state $\omega^M:\mathcal{F}(\radobs)\to\mathbb{C}$ such that $\omega^M\doteq\omega^\Im\circ\Upsilon$,
\item $\omega^\Im$ is invariant under the action\footnote{Here we use the symbol $\Pi$ with a slight abuse of notation since we have already introduced it to indicate in Proposition \ref{BMSinv} the representation of the BMS group on $\mathcal{S}(\Im^+)$. Since $\mathcal{F}(\Im^+)$ is built out of $\mathcal{S}(\Im^+)$ we feel that no confusion can arise.} $\Pi$ of the BMS group induced on $\mathcal{F}(\Im^+)$ by \eqref{BMS}.
\end{enumerate}
\end{propo}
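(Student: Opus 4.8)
The plan is to adapt the mode-decomposition strategy already used for the conformally coupled scalar, the Dirac and the Maxwell fields in \cite{Dappiaggi:2005ci, Dappiaggi:2010gt, Dappiaggi:2011cj}, now for the transverse, traceless symmetric $(0,2)$-tensors constituting $\mathcal{S}(\Im^+)$. The first task is to give the $\epsilon\to 0$ limit in \eqref{eq2pt} a rigorous meaning and to rewrite it in terms of positive-frequency modes along the null generators of $\Im^+$. To this end I would Fourier-transform each $\lambda\in\mathcal{S}(\Im^+)$ along $u$ at fixed $(\theta,\varphi)$, writing $\widehat\lambda_{ab}(k,\theta,\varphi)$, and invoke the distributional identity according to which the Fourier transform of the kernel $-\tfrac{1}{\pi}(u-u'-i\epsilon)^{-2}$ is supported on $k>0$ and, with the normalisation chosen in \eqref{eq2pt}, equals $2k\,\theta(k)$. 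This recasts the two-point function, on the complexified space, as
$$\omega_2^\Im(\overline\lambda\otimes\lambda')=\int\limits_{\Im^+}\int\limits_0^\infty 2k\,\overline{\widehat\lambda_{ab}(k,\theta,\varphi)}\,\widehat{\lambda'}_{cd}(k,\theta,\varphi)\,q^{ac}q^{bd}\,\d k\,\d\bS^2(\theta,\varphi).$$
The square-integrability conditions $(\lambda,\lambda)_\Im<\infty$ and $(\partial_u\lambda,\partial_u\lambda)_\Im<\infty$ built into \eqref{symplbound} are exactly what is needed to control the $k$-integral (the weight $k$ being dominated by $\|\partial_u\lambda\|$) and hence to guarantee convergence.

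From this representation the defining properties of a state follow readily. Positivity, $\omega_2^\Im(\overline\lambda\otimes\lambda)\geq 0$, is immediate once one notices that the integrand carries the factor $\overline{\widehat\lambda_{ab}}\widehat\lambda_{cd}\,q^{ac}q^{bd}$, which is pointwise non-negative because $q_{ab}$ restricts to the (Riemannian) round metric on the sphere sections and every $\lambda$ is transverse, $\lambda_{ab}n^a=0$, so that $\langle\lambda,\lambda\rangle\geq 0$. The correct commutation relations are encoded in the antisymmetric part of the kernel: the contribution of $(u-u'-i\epsilon)^{-2}$ that is odd under $u\leftrightarrow u'$ reproduces precisely $\omega_2^\Im(\lambda\otimes\lambda')-\omega_2^\Im(\lambda'\otimes\lambda)=i\,\sigma_\Im(\lambda,\lambda')$, with $\sigma_\Im$ as in \eqref{symplbound2}, so that $\omega_2^\Im$ respects the ideal $\mathcal{I}_\Im$ and descends to a functional on $\mathcal{F}(\Im^+)$. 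Defining the higher $n$-point functions through the quasi-free prescription \eqref{Gaussian} and appealing to the standard fact that positivity of the whole quasi-free state is a consequence of positivity of $\omega_2$ then yields the state $\omega^\Im:\mathcal{F}(\Im^+)\to\mathbb{C}$.

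Claim 1 requires almost no extra work. By Proposition \ref{b2b-q} the map $\Upsilon:\mathcal{F}(\radobs)\to\mathcal{F}(\Im^+)$ is a unital $*$-homomorphism, so $\omega^M\doteq\omega^\Im\circ\Upsilon$ is automatically a state; since $\Upsilon$ is induced at the level of the underlying spaces by the symplectic map $\Upsilon:\radobs\to\mathcal{S}(\Im^+)$ of Theorem \ref{b2b-cl}, the pulled-back state is again quasi-free, its two-point function being $\omega_2^\Im(\Upsilon[\epsilon]\otimes\Upsilon[\epsilon'])$, which is well defined precisely because $\Upsilon[\epsilon]\in\mathcal{S}(\Im^+)$ by that same theorem. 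For Claim 2 I would prove the invariance of $\omega_2^\Im$ under the $\rho=1$ representation $\Pi^1$ of \eqref{BMSrep}, the argument mirroring the proof of Proposition \ref{BMSinv}. Under a BMS element $\gamma=(\Lambda,\alpha)$ the two integration variables $u,u'$ lie at the same angular point, so the inhomogeneous shift $\alpha(z,\bar z)$ cancels in $u-u'$ and the only net effects are the rescaling by $K_\Lambda$ and the M\"obius diffeomorphism of the sphere. A power count in $K_\Lambda$ then balances: the two field factors give $K_\Lambda^{2}$, the two inverse metrics $q^{ac}q^{bd}$ give $K_\Lambda^{-4}$, the kernel $(u-u'-i\epsilon)^{-2}$ gives $K_\Lambda^{-2}$, and the measure $\d u\,\d u'\,\d\bS^2$ gives $K_\Lambda^{4}$, for a total weight $K_\Lambda^{0}=1$, while the M\"obius part is absorbed by the change of variables on the sphere. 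Hence $\omega_2^\Im$ is $\Pi^1$-invariant and, the state being quasi-free, $\omega^\Im$ is $\Pi$-invariant on all of $\mathcal{F}(\Im^+)$.

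The hard part will be the analytic core of the first two paragraphs: giving the $\epsilon\to 0$ limit a precise meaning as a Hermitian, positive sesquilinear form and cleanly isolating the antisymmetric piece that must coincide with $\sigma_\Im$. This is where the integrability constraints in \eqref{symplbound} are indispensable and where the distributional treatment of $(u-u'-i\epsilon)^{-2}$ must be carried out carefully. Fortunately, after contraction with $q^{ac}q^{bd}$ the computation is structurally identical, component by component over the sphere, to the one performed for the vector potential in \cite[Theorem~4.4 and Proposition~4.1]{Dappiaggi:2011cj}, so the genuinely new verifications reduce to the non-negativity of $\langle\lambda,\lambda\rangle$ and to the $\rho=1$ power counting above, both of which are straightforward.
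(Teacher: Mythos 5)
Your proposal is correct and follows essentially the same route as the paper's own proof: a Fourier--Plancherel transform along $u$ (justified by the integrability conditions in \eqref{symplbound}) turning the kernel into a positive-frequency multiplier $\propto k\,\Theta(k)$, which yields positivity and the commutator identity $\omega_2^\Im(\lambda\otimes\lambda^\prime)-\omega_2^\Im(\lambda^\prime\otimes\lambda)=i\sigma_\Im(\lambda,\lambda^\prime)$, followed by the pull-back along the $*$-homomorphism $\Upsilon$ of Proposition \ref{b2b-q} for claim 1 and the $K_\Lambda$ power counting (fields $K_\Lambda^2$, inverse metrics $K_\Lambda^{-4}$, kernel $K_\Lambda^{-2}$ after absorbing $K_\Lambda$ into the $\epsilon$-prescription, measure $K_\Lambda^4$) for claim 2. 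The only deviations are cosmetic: a different overall normalisation convention in the Fourier-transformed kernel and a more explicit justification of pointwise positivity via transversality and the Riemannian character of $q$ on the sphere sections, which the paper leaves implicit.
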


\begin{proof}
As a starting point, we notice that from $\omega_2^\Im$ we can define unambiguously a Gaussian state $\omega^\Im$ on $\mathcal{F}(\Im^+)$ via \eqref{Gaussian}. Yet one needs to ensure positivity of $\omega^\Im$ which is equivalent to showing that this property holds true for $\omega_2^\Im$. To this avail, we notice that every element $\lambda\in\mathcal{S}(\Im^+)$ tends to $0$ as $u$ tends to $\pm\infty$ as one can prove readapting the argument of \cite[Footnote 7, p. 52]{Moretti:2005ty} and exploiting that both $(\lambda,\lambda)_\Im<\infty$ 
and $(\partial_u \lambda,\partial_u\lambda)_\Im<\infty$. Hence we can perform a Fourier-Plancherel transform along the u-direction -- see \cite[Appendix C]{Moretti:2006ks} -- and eventually use the convolution theorem to obtain
$$\omega_2^\Im(\lambda\otimes \lambda^\prime)=\frac{1}{\pi}\lim_{\epsilon\to 0}\int\limits_{\mathbb{R}\times\mathbb{S}^2}k\Theta(k)\widehat{\lambda}_{ab}(k,\theta,\varphi)\widehat{\lambda}^{\prime}_{cd}(-k,\theta,\varphi)q^{ac}q^{bd}\,\d k\,\d\mathbb{S}^2(\theta,\varphi),$$
where $\Theta(k)$ is the Heaviside step function. From this last formula it follows that $\omega_2^\Im(\lambda\otimes\bar{\lambda})\geq 0$ and, moreover, that $\omega_2^\Im(\lambda\otimes\lambda^\prime)-\omega_2^\Im(\lambda^\prime\otimes\lambda)=i\sigma_\Im(\lambda,\lambda^\prime)$ where $\sigma_\Im$ is the symplectic form defined in \eqref{symplbound2}. Hence $\omega^\Im$ is indeed a state on $\mathcal{F}(\Im^+)$. Proposition \ref{b2b-q} ensures that $\omega^M$ is in turn a well-defined and, per construction, quasi-free state on $\mathcal{F}(\radobs)$. Only BMS invariance remains to be proven. Since $\omega^\Im$ is quasi-free, it suffices to show the statement for the two-point function. Let us first collect all ingredients we need. On account of Proposition \ref{BMSinv}, we know already that \eqref{BMS} induces the action $\Pi^1$ on each $\lambda\in\mathcal{S}(\Im^+)$ -- see \eqref{BMSrep}. Furthermore, for each $\gamma=(\Lambda,\alpha)\in\textrm{BMS}$, suppressing the dependence on the coordinates on the $2$-sphere for the BMS group elements and actions, $q^{ab}\mapsto K^{-2}_\Lambda q^{ab}$, whereas $\d u\mapsto K_\Lambda \d u$ and $\d\mathbb{S}^2(\theta,\varphi)\mapsto K^2_\Lambda \d\mathbb{S}^2(\theta,\varphi)$. Putting everything together, it holds that, for every $\lambda,\lambda^\prime\in\mathcal{S}(\Im^+)$
$$(\omega^2_\Im\circ\Pi)(\lambda\otimes\lambda^\prime)=-\frac{1}{\pi}\lim_{\epsilon\to 0}\int\limits_{\mathbb{R}^2\times\mathbb{S}^2}\frac{K_\Lambda\lambda_{ab}(u,\theta,\varphi)K_\Lambda\lambda^{\prime}_{cd}(u^\prime,\theta,\varphi)K^{-2}_\Lambda q^{ac} K^{-2}_\Lambda q^{bd}}{(K_\Lambda u- K_\Lambda u^\prime-i\epsilon)^2}K^4_\Lambda\,\d u\,\d u^\prime\,\d\bS^2(\theta,\varphi),$$
which coincides with $\omega^2_\Im$ since the factor $K_\Lambda$ is bounded and thus we are free to redefine the $\epsilon$-term accordingly.
\end{proof}

In the previous proposition we have displayed a state for the bulk algebra of fields which is constructed out of the boundary counterpart. A genuine question at this stage would be why should one consider such candidate. To answer, notice that, in between the plethora of all possible states for the algebra of fields of a quantum theory, not all of them can be considered as physically sensible. While in Minkowski spacetime, it is always possible to resort for free fields to the Poincar\'e vacuum, which is, moreover, unique, this luxury is not at our disposal on curved backgrounds. The main reason is due to the absence in general of a sufficiently large isometry group. In this case it is paramount to find a criterion to select in between all possible states those which are acceptable. After long debates, there is now a wide consensus that such statement translates in the request that the ultraviolet behaviour mimics that of the Poincar\'e vacuum on Minkowski spacetime and that the quantum fluctuations of all observables, such as, for example, the smeared components of the stress energy tensor, are bounded. From a mathematical point of view, this translates in choosing Hadamard states $\omega$, namely states which satisfy a condition on the singular structure of the bi-distribution $\Omega_2$ associated to their two-point function $\omega_2$. Before spelling it out explicitly, we remark that, as discussed in \cite{Hollands:2001nf} for scalar field theories, Hadamard states  can be used to define a locally covariant notion of Wick products of fields and, thus, interactions can be discussed at least at a perturbative level. Since the main feature which is exploited in the analysis of Hollands and Wald is essentially that in every normal neighborhood of the underlying manifold the singular structure of $\Omega_2$ depends only on the geometry of the spacetime, we expect that their results can be extended also to the case we are considering. Yet we shall not dwell into this topic since it would lead us far from our goal. 

On the contrary we review now the tools necessary to define rigorously Hadamard states, adapting them to the case at hand. All basic notions and definitions related to microlocal analysis are taken as in \cite{hormander:1990bl}, to which we refer. Let us thus consider a bi-distribution $\Omega_2:\sc(S^2TM)\times \sc(S^2TM)\to\bR$ which is furthermore a bi-solution of $\widetilde P=\Box-2\Riem$. Although here $\Omega_2$ is generic and not necessarily stemming from a two-point function of a quasi-free state, we employ the same symbol for the sake of notational simplicity.  The latter could be in principle any normally hyperbolic operator and, not necessarily $\widetilde P=\Box-2\Riem$ as used previously in the text. Since we will be interested only in this case, we feel safe using such notation. Following \cite{Sahlmann:2000zr} and also the discussion after \cite[Theorem 8.2.4]{hormander:1990bl} we know that, for any vector bundle $E$, the wavefront set $WF(u)$ of a distribution $u\in\mathcal{D}^\prime(E)$ is defined locally as the union of $WF(u_i)$, the wavefront set of each component of $u$ in a local trivialization of $E$. Hence we can extend also to this scenario the definition given in \cite{Radzikowski:1996pa, Radzikowski:1996ei, Sahlmann:2000zr}:  
\begin{defi}\label{Had}
A two-point distribution $\Omega_2:\sc(S^2TM)\times \sc(S^2TM)\to\bR$, bi-solution of $\widetilde P=\Box-2\Riem$ is said to be of {\bf Hadamard form} if it satisfies the following conditions:
\begin{itemize}
\item[1.] $WF(\Omega_2)=\{(x,k,x^\prime,k^\prime)\in T^\ast (M\times M)\setminus\{0\}\;|\;(x,k)\sim (x^\prime,-k^\prime),\; k\triangleright 0\},$
where $0$ is the zero section of $T^\ast (M\times M)$, whereas $(x,k)\sim(x^\prime,-k^\prime)$ means that the point $x$ is connected to $x^\prime$ by a lightlike geodesic $\gamma$ so that $k$ is cotangent to $\gamma$ in $x$ and $-k^\prime$ is the parallel transport of $k$ from $x$ to $x^\prime$ via $\gamma$. Furthermore $k\triangleright 0$ means that the covector $k$ is future-directed;
\item[2.] $\Omega_2(\epsilon,\zeta)-\Omega_2(\zeta,\epsilon)=2i(\widetilde{E}\epsilon^\flat,\zeta)$ 
for each $\epsilon,\zeta\in\obsk$, where the equality holds true up to smooth terms which vanish 
when smeared on $\epsilon,\zeta\in\obsi$.
\end{itemize}
\end{defi}

Contrary to what happens for electromagnetism in \cite{Dappiaggi:2011cj, Siemssen:2011gma}, for linearized gravity we need to introduce an additional concept developed in \cite[Chapter 6]{Hunt2012} to cope with the operation of trace-reversal which is present in \eqref{eqLinGravDD}. 

\begin{defi}\label{staterev}
Let $\Omega_2:\sc(S^2TM)\times \sc(S^2TM)\to\bR$ be any bi-distribution. We call {\em trace} of $\Omega_2$ the scalar bidistribution $(\tr\,\Omega_2):\Cc(M)\times \Cc(M)\to\bR$ such that, for all $f,f^\prime\in \Cc(M)$,
$$(\tr\,\Omega_2)(f,f^\prime)\doteq \Omega_2(fg^{-1},f^\prime g^{-1}),$$
where $g^{-1}$ is the inverse of the metric tensor. We call {\em trace reversal} of $\Omega_2$ the bidistribution $I\,\Omega_2:\sc(S^2TM)\times \sc(S^2TM)\to\bR$, such that, for all $\epsilon,\zeta\in\sc(S^2TM)$ 
$$(I\,\Omega_2)(\epsilon,\zeta)\doteq\Omega_2(\epsilon,\zeta)-\frac{1}{8}(\tr\,\Omega_2)(\tr\epsilon,\tr\zeta),$$
where $\tr\epsilon=g_{\mu\nu}\epsilon^{\mu\nu}$.
\end{defi}

Notice that the apparently strange coefficient $1/8$ ensures both that $\tr(I\,\Omega_2)=-\tr\,\Omega_2$ and that $I(I\,\Omega_2)=\Omega_2$. The trace reversal of a bidistribution plays a key role in understanding what is a Hadamard state for linearized gravity. As a matter of fact, the presymplectic form with which $\clobs$ ($\radobs$) is endowed and, accordingly, the canonical commutation relations with which $\mathcal{F}(\clobs)$ ($\mathcal{F}(\radobs)$) is constructed are built out of $E$ the causal propagator of $P=\widetilde{P}I$, where $\widetilde P=\Box-2\Riem$ and $I$ is the trace reversal on $\s(S^2TM)$. As noted in \cite{Hunt2012}, every bisolution $\Omega_2$ of $\widetilde{P}$ is also one for $P$ and the same holds true for $I\,\Omega_2$. Furthermore
acting with $I$ on $\Omega_2$ changes the second requirement of Definition \ref{Had} into a similar one 
where $\widetilde{E}$, the causal propagator for $\widetilde{P}$, is replaced by $E$, 
the causal propagator for $P$. In fact
\begin{align*}
(I\,\Omega_2)(\epsilon,\zeta) & -(I\,\Omega_2)(\zeta,\epsilon)
=\Omega_2(\epsilon,\zeta)-\Omega_2(\zeta,\epsilon)
-\frac{1}{8}(\tr\,\Omega_2)(\epsilon,\zeta)+\frac{1}{8}(\tr\,\Omega_2)(\zeta,\epsilon)\\
& =2i(\widetilde{E}\epsilon^\flat,\zeta)-\frac{1}{4}i(\widetilde{E}(g^{-1}\tr\epsilon)^\flat,g^{-1}\tr\zeta)
=2i(\widetilde{E}\epsilon^\flat,\zeta)-i(g\tr\widetilde{E}\epsilon^\flat,\zeta)
=2i(E\epsilon^\flat,\zeta).
\end{align*}

We can now define the notion of a Hadamard state for linearized gravity, although we have still to take care of the additional constraint that $\clobs$ is generated by compactly supported sections with vanishing divergence. Taking into account also Definition \ref{Had}, we recall the definition of Hadamard states for linearized gravity given in \cite{Hunt2012}.

\begin{defi}
A quasi-free state $\omega:\mathcal{F}(\clobs)\to\mathbb{C}$ is said to be a {\bf Hadamard state} if there exists a bi-distribution $\Omega_2:\sc(S^2TM)\times\sc(S^2TM)\to\mathbb{C}$, which is a bi-solution of $\widetilde P$ of Hadamard form (see Definition \ref{Had}) such that, for every $\epsilon,\zeta\in\obsi$, 
$$\omega([\epsilon]\otimes[\zeta])=(I\,\Omega_2)(\epsilon,\zeta),$$
where $I$ is as in Definition \ref{staterev}. A similar definition holds with $\clobs$ replaced by $\radobs$ simply restricting to those gauge invariant functional $\epsilon\in\obsi$ which fulfil the requirement of Definition \ref{rad_obs}, namely such that $\tr\epsilon=\delta\beta$ for a suitable $\beta\in\fc^1(M)$. 
\end{defi}

As a last step, we have to show that the state for $\mathcal{F}(\clobs)$ constructed via the bulk-to-boundary procedure fits indeed in the class characterized in the previous definition. In view of the previous definitions we cannot work directly with $\omega^M$ as in Proposition \ref{2pt} and prove that it is of Hadamard form. To this end we would need to show that there exists a bi-distribution on $M$, hence defined on all pairs of elements in $\sc(S^2TM)$, which coincides with the two-point function associated to $\omega^M$ on $\radobs\times\radobs$. The very same problem has been encountered already in \cite{Dappiaggi:2011cj, Siemssen:2011gma} for the vector potential and, as in that case, we shall tackle it as follows: We construct an auxiliary two-point function on $\widetilde M$, showing that this enjoys the correct wavefront set condition. Although strictly speaking one should not be entitled to call $\omega^M$ Hadamard, we are still convinced that it deserves this name since the ensuing wavefront set on $\widetilde M$ is built only out of null geodesics which are invariant under conformal transformation. Hence we expect that the singular behaviour of $\omega^M$ is genuinely the same as that of a full-fledged Hadamard state.

\begin{theo}
The state $\omega^M=\omega^\Im\circ\Upsilon:\mathcal{F}(\radobs)\to\mathbb{C}$, defined by the pull-back along $\Upsilon:\mathcal{F}(\radobs)\to\mathcal{F}(\Im^+)$ (see Proposition \ref{b2b-q}) of the state $\omega^\Im$ introduced in Proposition \ref{2pt}, enjoys the following properties:
\begin{enumerate}
\item Its two-point function is the restriction to $\radobs\times\radobs$ of a bi-distribution on $\widetilde M$ whose wavefront set on $\psi(M)$ is of Hadamard from, where $\psi(M)$ is the image of $M$ in $\widetilde M$;
\item It is invariant under the action of all isometries of the bulk metric $g$, that is $\omega^M\circ\alpha_\phi=\omega^M$. Here $\phi:M\to M$ is any isometry and $\alpha_\phi$ represents the action of $\phi$ induced on $\mathcal{F}(\radobs)$ by setting $\alpha_\phi([\epsilon])=[\phi_*\epsilon]$ on the algebra generators $[\epsilon]\in\radobs$;
\item It coincides with the Poincar\'e vacuum on Minkowski spacetime.
\end{enumerate}
\end{theo}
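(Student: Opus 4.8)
The plan is to dispatch the three properties separately, investing the bulk of the effort in item 1, the Hadamard property, since items 2 and 3 will then follow from symmetry and uniqueness. For item 1 I would mimic the strategy of \cite{Moretti:2005ty,Moretti:2006ks,Dappiaggi:2011cj,Siemssen:2011gma}: rather than analysing $\omega^M$ directly on $\psi(M)$, I would build an auxiliary bi-solution $\Omega_2$ of $\widetilde P=\Box-2\Riem$ on the unphysical spacetime $\widetilde M$ and control its wavefront set there. Concretely, $\Upsilon$ is realised by extending $\Xi h'$, where $h'\in[E\epsilon^\flat]$ is a GX-gauge representative, to $\widetilde M$ through the Green operators $\widetilde G^\pm$ of the hyperbolic system \eqref{monsterGX}--\eqref{GXb} (see Remark \ref{hopedieslastbutonlyIresurrect}), followed by the pull-back $\iota^*$ to $\Im^+$ and the algebraic projection \eqref{proj}. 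Composing the positive-frequency kernel $(u-u'-i\epsilon)^{-2}$ of \eqref{eq2pt} with this extension map and its formal adjoint yields the desired $\Omega_2$ on $\widetilde M$, which by construction restricts on $\radobs\times\radobs$ to the two-point function of $\omega^M$.

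The core step is then the computation of $WF(\Omega_2)$. The Fourier--Plancherel rewriting performed in the proof of Proposition \ref{2pt} shows that, as a distribution on $\Im^+\times\Im^+$, the boundary kernel has wavefront set supported on future-pointing covectors conormal to the $u$-direction. Transporting these singularities into the bulk via $\widetilde G^\pm$ and invoking the propagation of singularities theorem for the normally hyperbolic operator $\widetilde P$ confines $WF(\Omega_2)$ to pairs of points joined by a null $\widetilde g$-geodesic with future-directed covector correlation, which is precisely condition 1 of Definition \ref{Had}; equality is then forced by the antisymmetric part, fixed by the commutator $2i(\widetilde E\epsilon^\flat,\zeta)$, so that condition 2 holds after trace reversal, exactly as verified after Definition \ref{staterev} for $I\,\Omega_2$. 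Here I would stress that null geodesics are conformally invariant, so the Hadamard structure computed for $\widetilde g$ on $\psi(M)$ coincides with that for $g=\Xi^{-2}\widetilde g$: this is the structural reason the scheme delivers a Hadamard state, and also why the statement is restricted to $\psi(M)$, away from the locus $\Xi=0$. The main obstacle is the microlocal bookkeeping in the presence of the tensor indices and of the trace-reversal $I$: one must check that the bundle morphisms in \eqref{proj} (removal of the $n^a$- and $q^{ab}$-components) are smooth and hence do not enlarge $WF$, and that the identity relating $\widetilde E$ and $E$ is compatible with the estimate. I expect this to be the longest and most delicate part, though conceptually parallel to the vector case.

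Item 2 rests on the fact, recalled in Section \ref{sec3.1} following \cite{Geroch}, that every isometry $\phi$ of $g$ extends to an asymptotic symmetry and hence acts on $\Im^+$ as a BMS element $\gamma_\phi$. The decisive point is that $\Upsilon$ is equivariant, $\Upsilon\circ\alpha_\phi=\Pi(\gamma_\phi)\circ\Upsilon$, which follows from the naturality of the GX-extension and of $\iota^*$ under diffeomorphisms preserving the conformal class. Combining this with the BMS invariance of $\omega^\Im$ established in Proposition \ref{2pt} gives $\omega^M\circ\alpha_\phi=\omega^\Im\circ\Pi(\gamma_\phi)\circ\Upsilon=\omega^\Im\circ\Upsilon=\omega^M$.

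Finally, item 3 follows by uniqueness rather than by a direct computation. On Minkowski spacetime we showed in Section \ref{sec3.2a} that $\radobs=\clobs$, so $\omega^M$ is a state on the full algebra; since the Poincar\'e group sits inside the BMS group, item 2 makes $\omega^M$ Poincar\'e invariant, while item 1 makes it Hadamard. As the Poincar\'e vacuum is the unique quasi-free, Poincar\'e-invariant Hadamard state for linearized gravity on Minkowski spacetime, $\omega^M$ must coincide with it, completing the plan.
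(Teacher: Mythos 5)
Your plan reproduces the paper's proof in its overall architecture: the same auxiliary bi-distribution $\omega_2^{\widetilde M}(f\otimes f^\prime)\doteq\omega_2^\Im\bigl(\iota^*\widetilde G^-(f)\otimes\iota^*\widetilde G^-(f^\prime)\bigr)$ on $\widetilde M$, built by composing the boundary kernel \eqref{eq2pt} with the Green operators of the GX system of Remark \ref{hopedieslastbutonlyIresurrect}, followed by a propagation-of-singularities argument for item 1; equivariance of $\Upsilon$ under bulk isometries (via their Geroch extensions to $\Im^+$) combined with the BMS invariance of $\omega^\Im$ from Proposition \ref{2pt} for item 2; and Poincar\'e invariance plus uniqueness of the vacuum for item 3, using $\radobs=\clobs$ on Minkowski spacetime. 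Items 2 and 3 are essentially complete, up to the reduction to one-parameter groups of isometries which the paper performs via \cite[Theorem 3.1]{Moretti:2006ks}.

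There is, however, a genuine gap in item 1. A minor point first: the propagation of singularities must be applied to the full GX system $Q$, not to $\widetilde P=\Box-2\Riem$ alone, since the extension $\tau$ of $\Xi h^\prime$ solves \eqref{monsterGX} coupled with \eqref{GXb} (the equation for $\tau_{ab}$ involves $\tau_a$, $\sigma$ and $\tau$) and is not a solution of $\widetilde P\tau=0$; this is repairable because every equation in the system has null characteristics. The serious point is what the propagation theorem actually yields: singularities of $\omega_2^{\widetilde M}$ located at pairs $(x,k_x;x^\prime,-k_{x^\prime})$ with $x,x^\prime\in J^-_{\widetilde M}(i^+)\setminus\{i^+\}$ joined by bicharacteristics to a common point $(\iota(p),q)$ with $p\in\Im^+$ and $q_u>0$. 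To conclude the Hadamard form on $\psi(M)$ — and indeed to legitimize the composition of the boundary kernel with $\iota^*\widetilde G^-$ in the first place — one must exclude any contribution associated with $i^+$: singularities of $\widetilde G^-f$ travel along null geodesics emanating from $\supp(f)$, and if any of these reached $i^+$, the kernel \eqref{eq2pt}, which is defined and controlled only on $\Im^+\simeq\bR\times\bS^2$, would place no constraint on the corresponding covectors. The paper closes this hole with a geometric input you never invoke, namely \cite[Lemma 4.2]{Moretti:2006ks}: no null geodesic connects a point of $\psi(M)$ to $i^+$. Relatedly, your diagnosis of why the statement is confined to $\psi(M)$ is off target: conformal invariance of null geodesics is what lets you transfer the $\widetilde g$-wavefront set to $g$, but the actual reason the Hadamard form is asserted only on $\psi(M)$ is this geodesic avoidance of $i^+$, without which the wavefront estimate does not close.
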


\begin{proof}
To prove $1.$, we follow the same strategy as in \cite{Dappiaggi:2011cj}. Denoting with $\iota$ the embedding of $\Im^+$ into $\widetilde{M}$, we start introducing an auxiliary two-point function:
$$\omega_2^{\widetilde M}(f\otimes f^\prime)\doteq\omega_2^\Im(\iota^*\widetilde G^-(f)\otimes\iota^*\widetilde G^-(f^\prime)),$$
where $\widetilde G^\pm$ are the retarded/advanced fundamental solutions of \eqref{monsterGX} together with \eqref{GXb} and where $f$ and $f^\prime$ are here any pair of smooth and compactly supported test sections generating solutions for \eqref{monsterGX} and \eqref{GXb}, seen as a Green hyperbolic PDE as per Remark \ref{hopedieslastbutonlyIresurrect}. Notice that, per construction $\omega_2^{\widetilde M}$ coincides with the two-point function of $\omega^M$ when we consider initial data for \eqref{monsterGX} together with \eqref{GXb} descending from $\radobs$. By applying on both sides the operator $Q$ and using $\widetilde G^-Q = \mathrm{id}$, for the arbitrariness of $f$ and $f^\prime$ we obtain $\omega_2^{\widetilde M}(Qf\otimes Qf^\prime)=\omega_2^\Im(\iota^*f\otimes\iota^*f^\prime)$. Hence the pull-back of $\omega_2^\Im$ along $\iota:\Im^+\to\widetilde{M}$ has the same wavefrontset as that of $(Q^*\otimes Q^*)\,\omega_2^{\widetilde M}$. If we prove that the latter is the same as that of Hadamard states, we have reached the sought conclusion. This statements descends from the invariance of null geodesics under conformal transformation and from the fact that no null geodesic joining any $x\in M$ to $i^+$ exists \cite[Lemma 4.3]{Moretti:2006ks}. Let us start from the wavefront set of $\omega_2^\Im$ which have been already computed in \cite{Dappiaggi:2008dk}:
$$WF(\omega_2^\Im)=\{(x,x,k,-k)\in T^*(\Im^+\times\Im^+)\setminus\{0\}\;|\;k_u>0\},$$
where $k_u$ is the component of the covector $k$ along the null direction. If we apply the theorem of propagation of singularities to $(Q^*\otimes Q^*)\,\omega_2^{\widetilde M}$ we obtain:
\begin{gather*}
WF(\omega_2^{\widetilde M})=\{(x,y,k_x,-k_y)\in T^*(\widetilde M\times \widetilde M)\setminus\{0\}\;|\;\exists p\in\Im^+,\;q\in T^*_{\iota(p)}\widetilde M\;|\;q_u>0\;\textrm{ such that}\\
\; x,x^\prime\in J^-_{\widetilde{M}}(i^+)\setminus\{i^+\},\;(x,k)\sim (x',k')\sim (\iota(p),q)\},
\end{gather*}
where $\sim$ means that the points are connected by a lightlike geodesic, while the covectors are parallely transported along it. If we add to this result the fact that there does not exist any null geodesic joining a point $x$ in $\psi(M)\subset \widetilde M$ to $i^+$ , {\it c.f.} \cite[Lemma 4.2]{Moretti:2006ks}, then it holds that $\omega_2^{\widetilde M}$ has a wavefront set of Hadamard form is $\psi(M)$.

We prove now $2.$ As a consequence of \cite[Theorem 3.1]{Moretti:2006ks} it suffices to prove the statement for any one-parameter group of isometries $\phi^X_t$ with $t\in\mathbb{R}$ and $X$ a Killing field. Per definition $\omega^M\circ\alpha_{\phi^X_t}=\omega^\Im\circ(\Upsilon\circ\alpha_{\phi^X_t})$. Let $[\epsilon]\in\radobs$ be any generator of $\mathcal{F}(\radobs)$, then $\Upsilon\circ\alpha_{\phi^X_t}([\epsilon])=\Upsilon([\phi^X_{t*}\epsilon])$. Mimicking the same analysis as that of \cite[Proposition 3.4]{Moretti:2006ks}, one gets that $\Upsilon([\phi^X_{t*}\epsilon])=\Pi_{\widetilde\Phi^{\widetilde X}_t}\Upsilon([\epsilon])$, where $\Pi$ is the representation of the BMS group \eqref{BMSrep}, while $\widetilde\Phi^{\widetilde X}_t$ is the action on $\mathcal{F}(\Im^+)$ of a one-parameter group of BMS elements constructed via exponential map from $\widetilde X$, the unique extension of $X$ to $\Im^+$ \cite{Geroch}. Yet Proposition \ref{2pt} entails invariance of $\omega_\Im$ under the action of $\Pi$, from which it descends that $\omega^M\circ\alpha_{\phi^X_t}=\omega^\Im\circ\Pi_{\widetilde\Phi^{\widetilde X}_t}\Upsilon=\omega^\Im\circ\Upsilon=\omega^M$. Notice that point $3.$ is a direct consequence of point $2.$ since, on Minkowski spacetime, $\omega^M$ is a quasi-free and Poincar\'e invariant Hadamard state. Uniqueness of the vacuum yields the sought result.
\end{proof}

\section{\label{sec5}Conclusions}

In this paper we discussed the quantization of linearized gravity on asymptotically flat, globally hyperbolic, vacuum spacetimes within the framework of algebraic quantum field theory. The goal was to construct a distinguished Hadamard state which is invariant under the action of all spacetime isometries. To this end we exploited the existence of a conformal boundary which includes $\Im^+$, future null infinity, a codimension $1$ submanifold on which we defined an auxiliary non-dynamical field theory and an associated $*$-algebra $\mathcal{F}(\Im^+)$. The procedure we followed consists first of all of finding a map which associates to each element of the algebra of fields for linearized gravity a counterpart in $\mathcal{F}(\Im^+)$. This step can be translated into proving that, up to a gauge transformation, each solution of the linearized Einstein's equations admits after a conformal rescaling a smooth extension to $\Im^+$. This operation was thought to be always possible thanks to a suitable gauge fixing, first written by Geroch and Xanthopoulos. We have proven that there exists an obstruction which depends both on the geometry and on the topology of the manifold. Hence, while on certain backgrounds such as for example Minkowski spacetime, all observables admit a counterpart at null infinity, in other scenarios, such as for example axisymmetric backgrounds, this is not the case. We have therefore introduced the notion of radiative observables to indicate those which admit an associated element in $\mathcal{F}(\Im^+)$. These form a not necessarily proper sub-algebra of the algebra of all observables and we have constructed for it a Hadamard state which is invariant under the action of all spacetime isometries. 

We reckon that different follow-up to our analysis are conceivable. From the side of general relativity, the realization of the existence of an obstruction to implement the Geroch-Xanthopoulos gauge, suggests that a critical review of the claimed results about the stability of asymptotic flatness is due. A possible way out would be to find an alternative to the GX-gauge which still allows to extend all spacelike compact solutions of linearized gravity to null infinity. We tried hard to find such alternative but to no avail. If, on the contrary, the obstruction is always present, then it would be interesting to understand whether radiative observables play a distinguished role from a physical point of view. 

From the side of algebraic quantum field theory, our investigation, combined with that of Fewster and Hunt \cite{Fewster:2012bj, Hunt2012} suggests strongly that linearized gravity might behave similarly to the vector potential in electrodynamics with respect to its interplay with both general local covariance and dynamical locality. It might also be interesting to explore other avenues to construct states of interest in physics, particularly following the approach advocated in recent works \cite{Gerard:2012wb, Gerard:2014wb}. More generally, it might be worth trying to extend the notion itself of what is a Hadamard state in the following sense: For gauge theories, states and their two-point functions in particular are defined on suitable gauge equivalence classes of observables, $\clobs$ or $\radobs$ in the case at hand. Yet, in order to claim that a given two-point function is of Hadamard form, one has to show that it can be seen as the restriction of a bi-distribution defined on a whole space of smooth and compactly supported sections so to be able to apply the tools proper of microlocal analysis to check the relevant wavefront sets. 


\section*{Acknowledgements}
We would like to thank Klaus Fredenhagen, Thomas-Paul Hack, Igor Khavkine, Katarzyna Rejzner 
Alexander Schenkel and Daniel Siemssen for useful discussions and comments. S.M. is grateful to Francesco Bonsante and Ludovico Pernazza for useful comments during the early stages of this work. We are greatly indebted to Chris Fewster for useful comments and for pointing us out reference \cite{Hunt2012}  and to Stefan Hollands for pointing us out references \cite{Geroch:1978ur} and \cite{Hollands:2003ie} as well as for enlightening discussions on these papers. The work of C.D.\ has been supported partly by the University of Pavia 
and partly by the Indam-GNFM project {``Influenza della materia quantistica sulle fluttuazioni gravitazionali''}. 
The work of M.B. is supported by a Ph.D. fellowship of the University of Pavia. 


\end{document}